\newtheorem{theorem}{Theorem}[section]
\newtheorem{corollary}[theorem]{Corollary}
\newtheorem{lemma}[theorem]{Lemma}
\newtheorem{proposition}[theorem]{Proposition}
\theoremstyle{definition}
\newtheorem{definition}[theorem]{Definition}
\newtheorem{remark}[theorem]{Remark}
\numberwithin{equation}{section}
\numberwithin{equation}{section}
\numberwithin{equation}{section}
\newcommand{\V}{\mathcal{V}}
\newcommand{\Hh}{\mathcal{H}}
\newcommand{\C}{\mathcal{C}}
\newcommand{\PP}{\mathbb{P}}
\newcommand{\E}{\mathrm{e}}
\newcommand{\D}{\mathrm{d}}
\newcommand{\R}{\mathbb{R}}
\newcommand{\Ff}{\mathcal{F}}
\newcommand{\Ll}{\mathcal{L}}
\newcommand{\eps}{\epsilon}
\newcommand{\cE}{\mathcal{E}}
\newcommand{\DcE}{D(\cE)}
\begin{document}
\title[Dirichlet Forms and Finite Element Methods for the SABR Model]{Dirichlet Forms and Finite Element Methods for the \\ SABR Model}

\author{Blanka Horvath}
\address{Department of Mathematics, Imperial College London}
\email{b.horvath@imperial.ac.uk}

\author{Oleg Reichmann}
\address{European Investment Bank}
\email{o.reichmann@eib.org}
\date{\today}
\thanks{BH is grateful for financial support from the Swiss National Science Foundation (SNSF Grant 165248). The authors are also indebted to the anonymous referees for their helpful comments and to Robbin Tops for several invaluable discussions. The views expressed in this article are those of the authors and do not necessarily represent those of the European Investment Bank.}

\keywords{SABR model, Finite Element Methods, Dirichlet Forms}
\subjclass[2010]{35K15, 65M12, 65M60, 91G30}

\maketitle
\begin{abstract}
We propose a deterministic numerical method for pricing vanilla options under the SABR stochastic volatility model, based on a finite element discretization of the Kolmogorov pricing equations via non-symmetric Dirichlet forms. Our pricing method is valid under mild assumptions on parameter configurations of the process both in moderate interest rate environments and in near-zero interest rate regimes such as the currently prevalent ones.
The parabolic Kolmogorov pricing equations for the SABR model are degenerate at the origin, yielding non-standard partial differential equations, for which conventional pricing methods ---designed for non-degenerate parabolic equations--- potentially break down. 
We derive here the appropriate analytic setup to handle the degeneracy of the model at the origin. That is, we construct 
 an evolution triple of suitably chosen Sobolev spaces with singular weights, consisting of the domain of the SABR-Dirichlet form, its dual space, and the pivotal Hilbert space. In particular, we show well-posedness of the variational formulation of the SABR-pricing equations for vanilla and barrier options on this triple. Furthermore, we present a finite element discretization scheme 
based on a (weighted) multiresolution wavelet approximation in space and a $\theta$-scheme in time and provide an error analysis for this discretization.\\
\end{abstract}


\maketitle

\section{Introduction}
The stochastic alpha beta rho (SABR) model introduced by Hagan et. al 
in \cite{ManagingSmileRisk,HLW} is today industry standard in interest rate markets.
The model with parameters $\nu>0$, $\beta \in [0,1]$, and $\rho\in [-1,1]$, is defined by the pair of coupled stochastic differential equations
\begin{equation}\label{eq:SABRSDE}
\begin{array}{rll}
\D X_t & = Y_t X_t^{\beta}\D W_t, \qquad & X_0 = x_0> 0,\\
\D Y_t & = \nu Y_t \D Z_t, \qquad & Y_0 = y_0>0,\\
\D \langle Z,W\rangle_t & = \rho \D t,\qquad & 0\leq t\leq T<\infty,
\end{array}
\end{equation}
where $W$ and $Z$ are  $\rho$-correlated Brownian motions on a filtered probability space
$(\Omega, \Ff, (\Ff_t)_{t\geq 0}, \PP)$.
The SABR process $(X,Y)$ takes values in the state space $D= [0,\infty)\times(0,\infty)$, and describes the dynamics of a forward rate $X$ with stochastic volatility $Y$ and with the initial values $x_0>0$ and $y_0>0$.  
The constant elasticity of variance (CEV) parameter $\beta$ determines the general shape of the volatility 
smile and the parameter $\nu$ (often denoted by $\alpha$) governs the volatility the stochastic volatility.
The first pricing formula for the SABR model (the so-called Hagan formula) proposed 
in \cite{ManagingSmileRisk,HLW}, is based on an expansion of the Black-Scholes implied volatility for an asset driven by \eqref{eq:SABRSDE}. 
This tractable and easy-to-implement asymptotic expansion of the implied volatility made calibration to market data easier. This, and the model's ability to capture the shape and dynamics (when the current value $X_0=x$ of the asset changes) of the volatility smile observed in the market are   
virtues of the SABR model, which soon became a benchmark in interest rates derivatives markets~\cite{AndreasenHuge, AntonovFree, BallandTran, Rebonato}. 
The Hagan expansion is only accurate when the expansion parameter is small relative to the strike, that is when time to maturity or the volatility of volatility $\nu$ is sufficiently small. 
For low strike options such as in low interest rate and high volatility environments, much like the ones we are facing today, this formula can yield a negative density function for the process~$X$ in~\eqref{eq:SABRSDE}, which leads to arbitrage opportunities.
Therefore, as the problem of negative densities and arbitrage became more prevalent, it has been addressed
for example in \cite{Antonov,AntonovWings,BallandTran,Doust,HaganArbFreeSABR} by different approaches, some suggesting modifications of the SABR model or its implied volatility expansion. 
The attempt of suggesting suitable modifications to the original model is an intricate challenge since the Hagan expansion is deeply embedded in the market and fits market prices closely in moderate interest rate environments. 
Any model that deviates from its prices in those regimes may be deemed uncompetitive. This makes such pricing techniques desirable, which are applicable to the original model in all market environments.
There exist several refinements to this asymptotic formula:
in~\cite{Obloj} a correction the leading order term is proposed, 
and \cite{Paulot} provides a second-order term. 
In the uncorrelated case $\rho=0$ the exact density has been derived 
for the absolutely continuous part of the distribution of~$X$ on $(0,\infty)$ in~\cite{Antonov, FordeZhang, Islah}
and the correlated case was approximated by a mimicking model. 
However, it seems that these refinements have not fully resolved the arbitrage issue near the origin.
Recent results 
\cite{Doust, SABRMassZero, SABRMassZero2} focus on the singular part of the distribution and suggest an explanation for the irregularities appearing at interest rates near zero; and \cite{SABRMassZero, SABRMassZero2} provides a means to regularize Hagan's asymptotic formula at low strikes for specific parameter configurations, based on tail asymptotics derived in \cite{DMHJ,GulisashviliMass}.\\

We propose here a numerical pricing method for the (original) SABR model \eqref{eq:SABRSDE}
with rather mild assumptions on the parameters. It is consistently applicable in all market environments and allows for the derivation of convergence rates for the numerical approximations of option prices.
The most popular numerical approximation methods which were considered so far for the SABR model (or closely related models) fall into the following classes: probabilistic methods--- comprising of path simulation of the process combined with suitable (quasi-) Monte Carlo approximation---were considered in \cite{ChenOsterleeWeide} for the SABR model and \cite{Alfonsi2005, Alfonsi2007, ChassagneuxJacquierMihaylov, CoxHutzenthalerJentzen, HutzenthalerJentzenNoll} for related models. Furthermore in \cite{ChenOsterleeWeide} some difficulties of Euler methods in the context of SABR are discussed.
Splitting methods---where the infinitesimal generator of the process is decomposed into suitable operators for which the pricing equations can be computed more efficiently---provide a powerful tool in terms of computation efficiency for sufficiently regular processes. Such methods are considered in \cite{BayerGatheralKarlsmark} for a model closely related to SABR (see also \cite{BayerFrizLoeffen}), and in \cite{SemigroupPOVSplittingSchemes} for a large class of models. However, the applicability of corresponding convergence results to the SABR model itself is not fully resolved. Among fully deterministic PDE methods are most notably finite difference methods, which were considered in \cite{AndreasenHuge, LeFlochKennedy} for the modification \cite{HaganArbFreeSABR} of the SABR model \eqref{eq:SABRSDE} and
finite element methods, which were described in the context of mathematical finance in \cite{WilmottHowisonDewynne}. In the recent textbook \cite{ReichmannComputationalMethods} finite element methods have been applied to a large class of financial models---including the closely related process \eqref{eq:CEVprocess}---and provide a robust and flexible framework to handle the stochastic finesses of these models. In spite of this, finite element approximation methods did not appear in the context of the SABR model so far in the corresponding literature.  
For a broad review of simulation schemes used for the SABR model \eqref{eq:SABRSDE} in specific parameter regimes, see also \cite{Lord} and the references therein.
\\

Standard theory provides convergence of the above methods if the considered model satisfies certain (method-specific) regularity conditions. However in the case of the SABR model, obtaining convergence rates is non-standard for these methods: The degeneracy of the SABR Kolmogorov equation at the origin violates the assumptions needed in conventional finite difference methods and---for a range of parameters---also those of ad-hoc (i.e. unweighted) finite element methods.
Path simulation of the SABR process also requires non-standard techniques due to the degeneracy of the diffusion  \eqref{eq:SABRSDE} at zero. 
Nonstandard techniques often become necessary for the numerical simulation of a stochastic differential equation, 
when the drift and diffusion ($b$ and $\sigma$)  
do not satisfy the global Lipschitz condition
\begin{equation}\label{eq:GlobalLipschitzCond}
|b(x)-b(y)|+|\sigma(x)-\sigma(y)|\leq C |x-y|,
\end{equation}
for $x,y \in \R^n$ and a constant $C>0$ independent from $x$ and $y$, cf. \cite{ReichmannSchwab}. 
The degeneracy of the SABR model \eqref{eq:SABRSDE} at $X=0$, originates from failure of condition \eqref{eq:GlobalLipschitzCond} for the CEV process $\widetilde X$, described for parameters $\alpha>0$, and $\beta\in [0,1]$ by the equation
\begin{equation}\label{eq:CEVprocess}
\begin{array}{rll}
\D \widetilde{X}_t=\alpha \widetilde{X}_t^{\beta}\D W_t \qquad & \widetilde X_0 = \widetilde x_0> 0,\quad  0\leq t\leq T<\infty.
\end{array}
\end{equation}
Although the exact distribution of the CEV process is available \cite{LindsayBrecher}, simulation of the full SABR model based on it can in many cases become involved and expensive. In fact, exact formulas decomposing the SABR-distribution into a CEV part and a volatility part are only available in restricted parameter regimes, see \cite{Antonov, FordePogudin, Islah} for the absolutely continuous part and \cite{SABRMassZero, SABRMassZero2} for the singular part of the distribution.

A simple space transformation (see \eqref{eq:Spacetrafo} below)
makes some numerical approximation results for the CIR model (the perhaps most well-understood degenerate diffusion) applicable to certain parameter regimes of the SABR process. The CIR process
\begin{equation}\label{eq:CIRprocess}
\begin{array}{rll}
&\D S_t  = \left(\delta-\gamma S_t \right) \D t + a \sqrt{S_t} \D W_t, \quad  &S_0 = s_0> 0,\quad  0\leq t\leq T<\infty,
\end{array}
\end{equation}
with $a>0, \delta \geq 0$ indeed reduces for the parameters $\gamma=0$, $a=2$ to a squared Bessel process with dimension $\delta$ on the positive real line, the connection to CEV is then made via 
\begin{equation}\label{eq:Spacetrafo}
\begin{array}{rll}
\varphi:\R_{\geq0}&\longrightarrow\R_{\geq0}\\
s&\longmapsto \frac{1}{1-\delta/2} \  s^{1-\frac{\delta}{2}}, \quad \textrm{where} \quad 
\beta=\frac{1-\delta}{2-\delta}, \quad \textrm{for}\quad \delta\neq 2,
\end{array}
\end{equation}
that is, assuming absorbing boundary conditions at zero,  the law of $S$ in \eqref{eq:CIRprocess} (for the parameters $\gamma=0$, $a=2$) under the space transformation $\varphi$ in \eqref{eq:Spacetrafo} coincides with the law of $\widetilde {X}$ in \eqref{eq:CEVprocess}.
\\
Recent results in this direction, exploring probabilistic approximation methods for diffusions where the global Lipschitz continuity \eqref{eq:GlobalLipschitzCond} is violated, can be found for example in \cite{ChassagneuxJacquierMihaylov,CoxHutzenthalerJentzen} and \cite{HutzenthalerJentzenNoll}, see also \cite{AndersenCIR, DereichNeuenkirchSzpruch} and the references therein.
Establishing strong convergence rates in the case $2 \delta < a^{2}$ where the boundary is accessible as in \cite{HutzenthalerJentzenNoll}, 
is of particular difficulty, as this case renders coefficients of the SDE \eqref{eq:CIRprocess} neither globally, nor locally Lipschitz continuous on the state space. 
Yet, these convergence results do not cover the parameter range of the SABR model.
Further approximation schemes are presented in \cite{Alfonsi2005, Alfonsi2007} which apply to CIR processes with accessible boundary and both strong and weak convergence for the approximation are studied. The weak error analysis 
of Talay and Tubaro \cite{TalayTubaro}
yields second order convergence of the schemes proposed in \cite{Alfonsi2005, Alfonsi2007}, which covers the parameters $0<\beta<\frac{1}{2}$ but the results do not directly carry over to the case  $\frac{1}{2}\leq \beta <1$.\\ 

Here, we turn to a fully deterministic numerical method based on discretizations of the Kolmogorov partial differential equations, using finite elements. 
We derive the appropriate analytic setup to handle the degeneracy of the model at the origin. That is, we construct a suitable evolution triple of Sobolev spaces
with singular weights on which well-posedness of the variational formulation of the SABR-pricing equations holds. The proposed method for space-discretization is based on the Dirichlet form corresponding to the SABR stochastic differential equation. Specifically, using the Dirichlet form we recast the Kolmogorov pricing equations in weak (variational) form and show the so-called \emph{well posedness} of the latter.
We use the weighted multiresolution (wavelet) Galerkin discretization of \cite{BeuchlerSchneiderSchwab} in the state space 
to approximate variational solutions of the SABR-Kolmogorov pricing equations for financial contracts (vanilla and barrier options).
For the time discretization of the semigroup generated by the process \eqref{eq:SABRSDE} we propose a $\theta$-scheme. We derive approximation estimates tailored to our weighted setup, measuring the error between the true solution of the pricing equations and their projection to the discretization spaces. Based on these, we conclude error estimates akin to \cite{PetersdorffSchwab} for our fully discrete scheme. Under appropriate regularity assumptions on the payoff we obtain the full convergence rate for our finite element approximation.
The advantage of the presented method is that it allows for a consistent pricing with very mild parameter assumptions on the SABR process and it is robustly applicable for moderate interest rate environments as well as in the current low interest rate regimes. Furthermore, the proposed discretization can be applied to compute prices of compound options or multi-period contracts without substantial modifications of the numerical methodology, cf. \cite{ReichmannSchwab}.\\
\smallskip\\
The article is organized as follows.
Section \ref{Sec:ProblemFormulation} is devoted to the formulation of the SABR pricing problem in the appropriate analytic setting and an outline of the general idea of the variational analysis underlying the proposed finite element method for the SABR model.
In Section \ref{Sec:ProblemFormulation} we introduce some notations. In Section \ref{Sec:SettingandVariationalFormulation} we introduce the SABR Dirichlet form and cast the variational formulation of the SABR pricing equation in a suitable setting. We then propose a Gelfand triplet of spaces for our finite element discretization, consisting of a space $\mathcal{V}$ of admissible functions (the domain of the SABR-Dirichlet form), its dual space $\mathcal{V}^*$ and a pivotal Hilbert space ${\Hh}$, containing $\mathcal{V}$. In Section \ref{Sec:WellPosednessSABR} we briefly recall some relevant existing results to prove well-posedness of the SABR-pricing problem on the triplet ${\V} \subset \Hh \subset {\V^*}$, and conclude the existence of a unique weak solution to the variational formulation of the Kolmogorov partial differential equations on these spaces. We furthermore derive in this section a non-symmetric Dirichlet form for the SABR model, thereby extending the results of \cite{DoeringHorvathTeichmann} on Dirichlet forms on SABR-type models to the non-symmetric case.
In Section \ref{Sec:Discretization} we present the finite element discretization of the weak solution of the equation examined in the previous sections.
Section \ref{Sec:SpaceDiscretization} is devoted to the space discretization, which is carried out through a spline wavelet discretization of spaces $\mathcal{V}$ and ${\Hh}$. 
We review the multiresolution spline wavelet analysis of \cite{ReichmannComputationalMethods, PetersdorffSchwab} (in the unweighted case) to discretize the volatility dimension.
The forward dimension (the CEV part) is more delicate, due to its degeneracy at zero. In this case we apply the weighted multiresolution norm equivalences, proven in \cite{BeuchlerSchneiderSchwab} which are suitable to this degeneracy.
We pass from the univariate case to the bivariate case by constructing tensor products of the discretized spaces in each dimension as outlined in \cite{ReichmannComputationalMethods}.
Finally, we specify the mass- and stiffness matrices involved in the space discretization. 
In Section \ref{Sec:TimeDiscretization} we present the fully discrete scheme by applying a $\theta$-scheme in the time-stepping. We follow \cite{PetersdorffSchwab}, to conclude that the stability of the $\theta$-scheme continues to hold in the present setting of weighted Sobolev spaces.
In Section \ref{Sec:ErrorEsimates} we derive error estimates for our finite element discretization. 
In Section \ref{Sec:ApproximationEstimates} approximation estimates of the projection to our discretization spaces are established based on multiresolution (weighted) norm equivalences. We cast our estimates under specification of different regularity assumptions on the solution of our pricing equations. We use these estimates in Section \ref{Sec:DiscretizationError} to derive convergence rates for our finite element discretization and conclude that under some regularity assumptions on the payoff, examined in the previous section, we obtain the full convergence rate. 
We remark here, that the approximation- and error estimates presented in Section \ref{Sec:ErrorEsimates} for the SABR model 
readily yield the corresponding approximation- and error estimates for the CEV model as a direct corollary.
The well-posedness of the variational formulation of the CEV pricing equations has been studied in \cite{ReichmannComputationalMethods, ReichmannSchwab}, however to the best of our knowledge, a presentation of the full error analysis thereof was not available in the corresponding literature so far.


\section{Preliminaries and problem formulation}\label{Sec:ProblemFormulation}
\textbf{Preliminaries and Notations:} 
there exists a unique weak solution of the system~\eqref{eq:SABRSDE} which will be established via the associated martingale problem \cite[Theorem 21.7]{kallenberg2002foundations}, and by (pathwise) uniqueness of \eqref{eq:SABRSDE}, via \cite[Theorem 23.3]{kallenberg2002foundations}. 
Furthermore, the process $X$ in~\eqref{eq:SABRSDE} is a martingale whenever $\beta<1$~\cite[Theorem 5.1]{Hobson}, and for $\beta=1$ it is a martingale if and only if $\rho\leq 0$ \cite[Remark 2]{Jourdain}, see also \cite[Section 1, and Theorem 3.1]{MusielaLions}. 
For two norms on a space ${\V}$ the notation  $||\cdot||_{\V_1}\approx||\cdot||_{\V_2}$ indicates that the norms are equivalent on ${\V}$. Function spaces of bivariate functions will be denoted in italic ($\V, \Hh,\ldots$) and spaces of univariate functions by ($V, H,\ldots$) accordingly. For a domain $G\subset \R^2$ (resp. interval $I\subset \R$) we denote by $\Ll^1_{loc}(G)$ (resp. $L^1_{loc}(I)$) the locally integrable functions on $G$ (resp. $I$), and by $\C^{\infty}_0(G)$ (resp. $C^{\infty}_0(I)$) the smooth functions with compact support. Derivatives with respect to time will be denoted by $\dot{u}$, $\ddot{u}$, $\ldots$ accordingly to ease notation.

\subsection{Analytic setting for the SABR model}\label{Sec:SettingandVariationalFormulation}
In this section we establish the triplet $\V\subset {\Hh} \subset \V^*$ of spaces, tailored to the SABR process on which we cast the Kolmogorov pricing equations in variational form. We then proceed to show the well-posedness of these pricing equations on this triplet, i.e. that inequalities \eqref{eq:Defcontinuity} and \eqref{eq:DefGardinginequality} are fulfilled. We conclude the section by proving that the bilinear form resulting from the weak formulation of the pricing equations is indeed a non-symmetric Dirichlet form corresponding to the (unique) law of the SABR process.
Fix a time horizon $[0,T]$ 
and set $\widetilde Y=\log(Y)$, 
so that the SDE \eqref{eq:SABRSDE} becomes
\begin{equation}\label{eq:SABRSDELogVol}\begin{array}{rll}
&dX_t
=X_t^{\beta} \exp(\widetilde Y_t) dW_t
\qquad & X_0 = x_0> 0,\\ 
&d\tilde Y_t=\nu dZ_t- \tfrac{\nu^2}{2}dt \qquad & \tilde Y_0 = \log y_0,\ y_0>0\\
&d\langle W,Z\rangle_t=\rho dt.
\end{array}
\end{equation}
where we impose absorbing boundary conditions at $X=0$ to ensure martingality of the process. The solution to \eqref{eq:SABRPDELogVol} (and \eqref{eq:SABRSDE}) is then uniquely defined. Indeed, for the parameters $\beta \in [0.5,1]$ this is the only choice. 
We remark here that in some recent research outputs (for example in \cite{AntonovFree}) it is suggested to choose reflecting boundary conditions at the origin (for the regime $\beta \in [0,0.5)$) in order to accomodate to market conditions where interest rates can become negative. Our analysis can be easily adapted to that setting, but for brevity we restrict our presentation to the absorbing case.
The value at time $t\in J=(0,T)$ of a European-type contract on \eqref{eq:SABRSDELogVol} with payoff $u_0$ is\footnote{For notational simplicity we consider zero risk-free interest rates here. Furthermore, we make some standard technical assumptions on the payoff function $u_0$: In accordance with \cite[Equation (5.10) page 47]{ReichmannComputationalMethods} we assume $u_0$ to satisfy $u_0(0)=0$ and a polynomial growth condition (satisfied by vanilla contracts), and that $u_0\in\mathcal{H}$ (see Definition \ref{Def:HilbertSpaceSABR}) in accordance with \cite[Equations (2.10) and (2.12)]{PetersdorffSchwab}.}
\begin{equation}\label{eq:EuropeanOptionPrice}
u(t,z)=\mathbb{E}\left[u_0(Z_{\tau}^z)\right], \quad t\in J
\end{equation}
where $\tau:=(T-t)$ and $Z_{\tau}^z:=(X_{\tau},\widetilde Y_{\tau})$ is the process started at $z:=(x,\tilde y) \in \R_{\geq0}\times \R$, with $(x,\tilde y)=(X_t,\tilde Y_t)$ $\mathbb{P}$-a.s.  
Then for $u\in C^{1,2}(J;\R_{\geq 0}\times \R)\cap C^0(\bar{J};\R_{\geq 0}\times \R)$ the Kolmogorov pricing equation to \eqref{eq:SABRSDELogVol} is
\begin{equation}\label{eq:SABRPDELogVol}
\begin{array}{rlr}
&\dot{u}(t,z)-Au(t,z)=g(t,z) &\textrm{in} \ J\times \R_{\geq0}\times \R,\\
&u(0,z)=u_0(z) &\textrm{in} \ \R_{\geq 0}\times \R
\end{array}
\end{equation}
where 
$g\in L^2(J,\V)\cap H^1(J, \Hh)$
denotes a general forcing term (see \cite[Section 4]{ReichmannComputationalMethods}) with $\V^*$ as in \eqref{eq:GelfandTriplet} and where the infinitesimal generator $A$ of \eqref{eq:SABRSDELogVol} reads
\begin{equation}\label{eq:GeneratorLogscaleSABR}
Af=\frac{x^{2\beta}e^{2 \tilde y}}{2} \partial_{xx}f+\rho \nu x^{\beta}e^{\tilde y} \partial_{x \tilde y}f+\tfrac{1}{2}\nu^2 \partial_{\tilde y \tilde y}f
-\tfrac{1}{2} \nu^2 \partial_{\tilde y}f\qquad \textrm{for} \ f\in C_0^{2}(D) \subset \mathcal{D}(A).
\end{equation}
From now on we drop the tilde in the logarithmic volatility for notational convenience.
The operator $A$ in \eqref{eq:GeneratorLogscaleSABR} is a linear second order operator, degenerate (i.e. non-elliptic) at the boundary $\{(x,y):x=0,y>0\}$.
The domain $\mathcal{D}(A)$ 
of the operator $A$ is equipped with a norm $||\cdot||_{\mathcal{V}}$ (specified in Definition \ref{Def:WeightedSobolevSpaceVSABR} below) and the completion of $\mathcal{D}(A)$ under this norm will be denoted by $\mathcal{V}$.
Furthermore, we denote by ${\Hh}$ (specified in Definition \ref{Def:HilbertSpaceSABR}  below) a separable Hilbert space---henceforth \emph{the pivot space}---containing $\mathcal{V}$, such that $\mathcal{V}\hookrightarrow {\Hh}$ is a dense embedding. 
The inner product  $(\cdot,\cdot)_{\Hh}$ of ${\Hh}$ is extended to a duality pairing $(\cdot,\cdot)_{\mathcal{V}^*\times\mathcal{V}}$ on $\mathcal{V}^*\times\mathcal{V}$, where $\mathcal{V}^*$ denotes the dual space of $\mathcal{V}$, equipped with the dual norm $||\cdot||_{\mathcal{V}^*}$.
Identifying the Hilbert space ${\Hh}$ with its dual ${\Hh}^*$
we obtain the corresponding Gelfand-triplet
\begin{equation}\label{eq:GelfandTriplet1}
\mathcal{V}\hookrightarrow {\Hh}\cong {\Hh}^* \hookrightarrow \mathcal{V}^*, 
\end{equation}
where $\hookrightarrow$ denotes a dense embedding.\\ 
With view to the discretization, it is customary (cf. \cite{ReichmannComputationalMethods, PetersdorffSchwab, ReichmannSchwab}) to localize the spatial domain of the PDE ($\R_{\geq 0}\times \R$ in \eqref{eq:SABRPDELogVol}) at this point to a bounded domain $G\subset \R_{\geq 0}\times \R$ with Lipschitz boundary.
In what follows, all localization domains are rectanguar $G=[0,R_x)\times(-R_y,R_y)$ denoting the range of admissible values which can be taken by the price (and volatility) process.
\begin{remark}
For call and put options the error made by truncating the domain to $G\subset \R\times \R_{\geq 0}$ corresponds to approximating the option prices by a knock-out barrier options, up to the first hitting time of the boundary $\partial G$, see \cite[Sections 5.3 and 6]{ReichmannComputationalMethods}. The estimate in \cite[Theorem 5.3.1]{ReichmannComputationalMethods} can be directly applied to the volatility dimension $Y$, and yields that the truncated problem converges to the original problem exponentially fast. Furthermore, for the CEV process in the asset price the estimate serves as an upper bound by comparison between the CEV $(\beta\in[0,1))$  process and geometric Brownian motion $\beta=1$.
The probabilistic argument to estimate the localization error by a knock-out barrier option was suggested by Cont and Voltchkova in \cite{ContVoltchkova,ContVoltchkova2} even in a more general setting of L\'{e}vy models. 
Indeed, for the SABR model, the probability that the first hitting time of $R_x$ resp. $R_y$ occurs before $T$ converges to zero as $R_x,R_y\rightarrow \infty$. The lower boundary however cannot be truncated to any domain $(\epsilon, R_x)$ for a positive $\epsilon$ without possibly introducing a significant localization error. This is due to the fact that the SABR process accumulates a positive mass at zero for every $T>0$ whenever $\beta<1$. For details see \cite{SABRMassZero, SABRMassZero2}, where the mass at zero is calculated for several relevant parameter configurations.
\end{remark}
\begin{definition}\label{Def:HilbertSpaceSABR}
Let $G:=[0,R_x)\times(-R_y,R_y) \subset \R_+\times \R$, $R_x,R_y>0$ be an open subset. 
On $G$ we define  the weighted space
\begin{equation}\label{eq:HilbertSpaceNormSABR}
\begin{array}{ll}
 \Hh:=\Ll^2(G,x^{\mu/2})&=\{u:G\rightarrow \R \ \textrm{measurable } \ | 
\ ||u||_{\Ll^2(G,x^{\mu/2})}<\infty \}\\
\end{array}
\end{equation}
with 
\begin{align}\label{eq:muchoice}
\mu \in 
\begin{cases}
[-2\beta,0] \textrm{ for } \beta \in [0,\tfrac{1}{2})\\
[-1,1-2\beta]\textrm{ for } \beta \in [\tfrac{1}{2},1),
\end{cases}
\end{align}
where $||u||_{\Ll^2(G,x^{\mu/2})}^2:=(u,u)_{\Hh}$ for the bilinear form 
\begin{equation}\label{eq:HilbertScalarProduct}
(u,v)_{\Hh}:=\int_G u(x,y)v(x,y) \ x^{\mu} dxdy, \quad u,v \in V.
\end{equation}
\end{definition}
\begin{remark}
Note that
$(\Hh,(\cdot,\cdot)_{\Hh})$ is a Hilbert space, see Appendix \ref{Sec:WeightedSobolevSpaces}, Lemma \ref{Lem:ClosednessContinuousEmbedding} and \ref{Rem:L1loc}.
\end{remark}
\noindent A possible choice for the weight is $\mu=-\beta$ for any $\beta\in [0,1)$. 
Alternatively, one can distinguish the cases $\beta\in[\frac{1}{2},1)$ and $\beta\in[0,\frac{1}{2})$ and choose $\mu=-\beta$ for $\beta\in[\frac{1}{2},1)$, but $\mu=0$ for $\beta\in[0,\frac{1}{2})\cup\{1\}$. 
 Distinguishing the above parameter regimes has the advantage that we preserve the classical setting of an unweighted $\Ll^2(G)$-space for the parameters $\beta\in[0,\frac{1}{2})\cup\{1\}$.
The latter choice furthermore highlights that our analytic setting consistently extends the univariate  CEV case to the bivariate SABR case: see \cite[Section 4.5]{ReichmannComputationalMethods} and Remark \ref{Rem:CEVTriplet} for the analytic setting for CEV.

\begin{definition}\label{Def:WeightedSobolevSpaceVSABR}
Set $G:=[0,R_x)\times(-R_y,R_y) \subset \R_+\times \R$, $R_x,R_y>0$, the domain of interest. 
For the first coordinate we consider $L^2([0,R_x), x^{\mu/2})=\{u\in L^2([0,R_x))$ with $||x^{\mu/2} u ||_{L^2}<\infty\}$. 
On $L^2([0,R_x), x^{\mu/2})$ we define the space 
\begin{equation*}
V_x:=\overline{C_0^{\infty}([0,R_x))}^{||\cdot||_{V_x}} \ \textrm{where}\ \  
||u||_{{V_x}}^2:=||x^{\beta+\mu/2}\partial_x u)||_{L^2(0,R_x)}^2+||x^{\mu/2} u ||_{L^2(0,R_x)}^2, \ u\in C_0^{\infty}([0,R_x)).
\end{equation*}
For the second coordinate we consider on $L^2(-R_y,R_y)$ the space
\begin{equation*}
V_y:=H^1(-R_y,R_y), \ \textrm{where}\ \  ||u||_{{V_y}}^2:=||\partial_y u||_{L^2(-R_y,R_y)}^2+||u ||_{L^2(-R_y,R_y)}^2, \ \ u\in H^1(-R_y,R_y).
\end{equation*}
We define for the bivariate case
\begin{align}\label{eq:WeightedSobolevSpaceV}
&\V:=\left(V_x\otimes L^2(-R_y,R_y)\right)\bigcap\left(L^2([0,R_x), x^{\mu/2})\otimes V_y\right).
\end{align}
The dual space will be denoted by $\V^*$ and equipped with the usual dual norm
\begin{equation}\label{eq:WeightedSobolevSpaceVDualNorm}
||v||_{V^*}=\sup_{u\in \V}\frac{(v,u)_{\V^* \times \V}}{||u||_{\V}}, \quad v \in {\V}^*. 
\end{equation}
\end{definition}
\begin{remark}
Note that the norm on the space \eqref{eq:WeightedSobolevSpaceV} is by construction\footnote{See Appendix \ref{Sec:TensorHilbertSpacesExplicit} for details.} equivalent to
\begin{equation}\label{eq:WeightedSobolevSpaceVNorm}
 ||u||_{\V}^2\approx||x^{\beta+\mu/2} \partial_xu ||_{\Ll^2(G)}^2+|| x^{\mu/2} \partial_yu||_{\Ll^2(G)}^2+|| x^{\mu/2} u ||_{\Ll^2(G)}^2,
\quad \textrm{for}\quad u\in \V.
\end{equation}
\end{remark}
\begin{lemma}\label{Lem:DenseEmbeddingSABR}
For any $\mu$ in \eqref{eq:muchoice} the spaces ${\Hh}$, ${\V}$ and ${\V}^*$ in Definitions \ref{Def:HilbertSpaceSABR} and \ref{Def:WeightedSobolevSpaceVSABR} form a Gelfand
triplet. In particular, the inclusion maps are dense embeddings\footnote{The analogous statement in the one-dimensional ($V_x$) part was presented in the CEV-analysis in \cite[Equation (5.33), page 56]{ReichmannComputationalMethods}.}
\begin{equation}\label{eq:GelfandTriplet}
\begin{array}{lcll}
{\V}\hookrightarrow& {\Hh} &\hookrightarrow {\V}^*,
\end{array}
\end{equation}
where ${\V}$ and ${\Hh}$ are specified in \eqref{eq:HilbertSpaceNormSABR} and \eqref{eq:WeightedSobolevSpaceV}.
\end{lemma}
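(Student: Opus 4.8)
The plan is to verify the two defining properties of a Gelfand (evolution) triplet: that $\V$ embeds continuously and densely into the separable pivot Hilbert space $\Hh$, and then to deduce the second embedding $\Hh\cong\Hh^*\hookrightarrow\V^*$ together with its density by a standard duality argument. That $\Hh$ and $\V$ are Hilbert spaces (in particular complete and separable) is recorded in the appendix, so only the embeddings require work. Continuity of $\V\hookrightarrow\Hh$ is immediate from the norm equivalence \eqref{eq:WeightedSobolevSpaceVNorm}: since $\|u\|_{\Hh}^2=\|x^{\mu/2}u\|_{\Ll^2(G)}^2$ is one of the three nonnegative summands bounding $\|u\|_{\V}^2$, we get $\|u\|_{\Hh}\le C\|u\|_{\V}$ for all $u\in\V$; in particular the inclusion map $i\colon\V\to\Hh$ is injective.

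For density I would exploit the tensor-product structure of the spaces. Because the weight $x^{\mu}$ depends only on the first coordinate, the pivot space factorizes as a completed (Hilbert) tensor product $\Hh = H_x\otimes H_y$ with $H_x:=L^2([0,R_x),x^{\mu/2})$ and $H_y:=L^2(-R_y,R_y)$. The univariate building blocks of $\V$ are dense in these factors: $V_x$ is dense in $H_x$ --- the one-dimensional (weighted) statement handling the degeneracy at $x=0$ is exactly the CEV result cited in the footnote, \cite[Eq. (5.33)]{ReichmannComputationalMethods} --- while $V_y=H^1(-R_y,R_y)$ is classically dense in $H_y=L^2(-R_y,R_y)$. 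Consequently $C_0^{\infty}([0,R_x))$ is dense in $H_x$ and $C_0^{\infty}(-R_y,R_y)$ is dense in $H_y$. Any product $\phi\otimes\psi$ with $\phi\in C_0^{\infty}([0,R_x))$ and $\psi\in C_0^{\infty}(-R_y,R_y)$ satisfies both membership conditions in \eqref{eq:WeightedSobolevSpaceV}, since $C_0^{\infty}([0,R_x))\subset V_x\subset H_x$ and $C_0^{\infty}(-R_y,R_y)\subset V_y\subset H_y$; hence the algebraic tensor product (finite linear combinations of such products $\phi(x)\psi(y)$) lies in $\V$. As finite sums of products of elements drawn from dense subsets are dense in the completed tensor product, this algebraic tensor product is dense in $\Hh$, and therefore $\V$ is dense in $\Hh$.

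It then remains to obtain the dual embedding. Identifying $\Hh$ with $\Hh^*$ via Riesz, the adjoint $i^*\colon\Hh^*\to\V^*$ of the inclusion is the desired map, and its action coincides with the extension of $(\cdot,\cdot)_{\Hh}$ to the duality pairing $(\cdot,\cdot)_{\V^*\times\V}$. Since $i$ has dense range, $\ker i^*=(\operatorname{ran}i)^{\perp}=\{0\}$, so $i^*$ is injective and $\Hh\hookrightarrow\V^*$ is an embedding; since $i$ is injective and $\V$ is reflexive (being Hilbert), one has $\overline{\operatorname{ran}(i^*)}=(\ker i)^{\perp}=\V^*$, so the embedding $\Hh\hookrightarrow\V^*$ is dense. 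Continuity of $i^*$ is automatic from continuity of $i$. Combining the two steps yields the Gelfand triplet \eqref{eq:GelfandTriplet}.

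The only genuinely delicate point I expect is the density of the smooth functions in the weighted factor $H_x$ near the degenerate endpoint $x=0$, i.e. that the singular weight $x^{\mu}$ (for $\mu<0$) does not obstruct approximation; this is precisely where the cited one-dimensional CEV result carries the load. Lifting it to the bivariate setting through the tensor-product density and checking that the elementary tensors satisfy both intersection constraints defining $\V$ is then routine, and the final duality step is purely functional-analytic.
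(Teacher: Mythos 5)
Your proof is correct and takes essentially the same route as the paper's (which is only two lines long): the first embedding follows from the definition of the $\V$-norm, and density is obtained by approximation with smooth compactly supported functions, which the paper invokes directly on the bivariate weighted space $\Ll^2(G,x^{\mu/2})$ while you derive it by tensorizing the univariate (CEV) density result and checking that elementary tensors lie in $\V$. Your explicit adjoint/reflexivity argument for $\Hh\hookrightarrow\V^*$ is the standard duality step the paper leaves implicit, so there is no gap.
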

\noindent The scalar product $( \cdot,\cdot )_{\Hh}$ can hence (by Lemma \ref{Lem:DenseEmbeddingSABR}) be extended to a dual pairing $( \cdot,\cdot )_{{\V} \times {\V}^*}$.
\begin{proof}
The first inclusion follows by construction (cf Definition \ref{Def:WeightedSobolevSpaceVSABR}) and the second by direct approximation in the weighted space $\Hh=\Ll^2(G,x^{\mu/2})$ by smooth functions with compact support.
\end{proof}
\noindent We extend the inner product $(\cdot,\cdot)_{\Hh}$ of the Hilbert space to the dual pairing $(\cdot,\cdot)_{{\V}^*\times {\V}}$ as described above\footnote{Note that for this the operator $A$ need not be self-adjoint nor needs the associated bilinear form be symmetric.}. Recalling that $A \in \mathcal{L}(\mathcal{V};\mathcal{V}^*)$ we apply the duality pairing $(\cdot,\cdot)_{\mathcal{V}^*\times\mathcal{V}}$: A bilinear form $a(\cdot,\cdot): \mathcal{V}\times  \mathcal{V}\rightarrow \mathbb{R}$ is thus
associated with the operator $A$ in \eqref{eq:GeneratorLogscaleSABR} by setting
\begin{equation}\label{eq:DefBilinearFormGeneral}
a(u,v):=-(Au,v)_{\mathcal{V}^*\times\mathcal{V}}, \quad u,v\in \mathcal{V},
\end{equation}
where the operator $A$ acts on $\V$ in the weak sense, see in \eqref{eq:LogvolBilinearformSABRWeighted} below. 
Hence, we define the SABR-bilinear form by the relation \eqref{eq:DefBilinearFormGeneral}, 
which therefore reads
\begin{align}\label{eq:LogvolBilinearformSABRWeighted}
\begin{split}
a(u,v)
=&\tfrac{1}{2}\int\int_{G} x^{2\beta+\mu}e^{2y}\  \partial_x u \ \partial_x v \ dxdy
+\tfrac{2 \beta +\mu}{2}\int\int_{G} x^{2 \beta +\mu -1}e^{2y}\  \partial_x u \ v \ dxdy\\
&+\rho \nu \int\int_{G} x^{\beta+\mu}e^{y}\  \partial_x u \ \partial_y v \ dxdy
+ \rho \nu \int\int_{G} x^{\beta+\mu}e^{y}\  \partial_x u \  v \ dxdy\\
&+\tfrac{\nu^2}{2}\int\int_{G} x^{\mu}\  \partial_y u \ \partial_y v \ dxdy
+\tfrac{\nu^2}{2}\int\int_{G} x^{\mu}\ \partial_y u \ v \ dxdy,\qquad u,v \in \V,
\end{split}
\end{align}
which is obtained from \eqref{eq:DefBilinearFormGeneral}---by the divergence theorem together with $\V\subset \Ll^1_{loc}(G)$---when $Au$, $u \in \V$ are interpreted as weak derivatives\footnote{Multiplying $-Au$ with $v\in C_0^{\infty}$ and integrating gives \eqref{eq:DefBilinearFormGeneral}, partial integration (cf. \eqref{eq:WeakDerivativeTime}) then yields \eqref{eq:LogvolBilinearformSABRWeighted}.
}.
With the spaces $\V\subset\Hh\subset\V^*$ and the bilinear form \eqref{eq:DefBilinearFormGeneral} at hand, we can formulate the variational (or weak) framework corresponding to \eqref{eq:SABRPDELogVol}.
The motivation for passing to the weak formulation is that for degenerate equations (such as \eqref{eq:SABRPDELogVol}) it is often not possible to find a classical solution $u \in C^{1,2}(J,\R^2)\cap C^{0}(\bar{J},\R^2)$ to the original problem \eqref{eq:SABRPDELogVol}. The variational reformulation \eqref{eq:SABRVariationalFormulation} problem may then still permit a (weak) solution $u$ with less regularity. Whenever a (weak) solution of the variational problem is  sufficiently smooth, it coincides with the solution of the corresponding original problem.
\begin{definition}[Variational formulation of the SABR pricing equation]\label{Def:SABRVariationalFormulation} Let ${\V},{\V}^*$ and ${\Hh}$ (resp. $\Ll^2(G,x^{\mu/2})$) be as in Definitions \ref{Def:WeightedSobolevSpaceVSABR} and \ref{Def:HilbertSpaceSABR}, and let the bilinear form $a(\cdot,\cdot)$ on ${\V}$ be as in \eqref{eq:LogvolBilinearformSABRWeighted}. Furthermore, let $u_0\in \Hh$ (resp. $u_0\in \Ll^2(G,x^{\mu/2})$) and consider for a $T>0$ the finite interval $J=(0,T)$. Then the variational formulation of the SABR pricing problem reads as follows: Find $u \in L^2(J;{\V})\cap H^1(J;{\V}^*)$, such that 
$u(0)=u_0$,  and for every $v\in {\V}$, $\varphi \in C_0^{\infty}(J)$
\begin{equation}\label{eq:SABRVariationalFormulation}
-\int_{J}(u(t),v)_H\ \dot{\varphi}(t) dt+\int_{J}a(u(t),v)\ \varphi(t) dt=\int_{J}(g(t),v)_{{\V}^*\times {\V}} \ \varphi(t)dt.
\end{equation}
\end{definition}
\noindent The time derivative of a function $u$ in the appropriate Bochner space is understood in the weak sense: For $u\in L^2(J;{\V})$, its weak derivative in $\dot{u}\in L^2(J,{\V}^*)\cap H^1(J;{\V}^*)$ is defined by the relation
\begin{equation}\label{eq:WeakDerivativeTime}
\int_J(\dot{u}(t),v)_{{\V}^*\times {\V}}\ \varphi (t)dt=-\int_J(u(t),v)_{{\V}^*\times {\V}}\ \dot{\varphi}(t)dt, 
\end{equation}
see \cite[Sections 2.1 and 3.1]{ReichmannComputationalMethods} for definitions and properties of Bochner spaces.
\begin{remark}[The CEV case: $\nu=0$]\label{Rem:CEVTriplet}
In \cite{ReichmannComputationalMethods,ReichmannSchwab} a corresponding analytic setup is studied for the univariate case: For the CEV model the Gelfand triplet $V\subset H\cong H^* \subset V$  in \cite{ReichmannComputationalMethods,ReichmannSchwab} consists of the weighed spaces 
\begin{equation*}
H:=L^2((0,R),x^{\mu/2})
\end{equation*}
and 
\begin{equation*}
V:=\overline{C_0^{\infty}([0,R))}^{||\cdot||_V} \quad \textrm{with} \quad 
||u||_{V}^2:=||x^{\beta+\mu/2}\partial_xu)||_{L^2(0,R)}^2+||x^{\mu/2} u ||_{L^2(0,R)}^2
\end{equation*}
such as its dual $V^*$, where the parameter $\mu \in [\max\{-1,-2\beta\},1-2\beta]$ is chosen as in Definitions \ref{Def:HilbertSpaceSABR}, and \ref{Def:WeightedSobolevSpaceVSABR}. 
Indeed, setting $\nu=0$, the SABR process \eqref{eq:SABRSDE} (resp. \eqref{eq:SABRSDELogVol}) with trivial volatility process reduces to the CEV model, and the state space $G$ reduces to $[0,R_x)\subset \R$. Accordingly, for $\nu=0$ the spaces $\Hh$, $\V$, $\V^*$ in Definitions \ref{Def:HilbertSpaceSABR} and \ref{Def:WeightedSobolevSpaceVSABR} coincide with the spaces $V$, $H$ and $V^*$ above. Also the SABR bilinear form \eqref{eq:LogvolBilinearformSABRWeighted} reduces to the corresponding CEV bilinear form. Hence, our analytic setup extends the univariate setup of the CEV model consistently to the bivariate SABR-case.
See \cite[Equation (21)]{ReichmannSchwab},\cite[Equations (4.30) and (4.33)]{ReichmannComputationalMethods} such as \cite[page 62]{ReichmannComputationalMethods} for the definitions\footnote{Note that $V,H,V^*$ are presented here in a form which is adjusted to our current notation.} of $V, H, V^*$ and \cite[Equation (4.28)]{ReichmannComputationalMethods} for the CEV bilinear form.
\end{remark}

\subsection{Well-posedness of the variational pricing equations and SABR Dirichlet forms}\label{Sec:WellPosednessSABR}
In this section we show that the triplet of spaces ${\V}\subset {\Hh}\cong {\Hh}^*\subset {\V}^*$  (Definitions \ref{Def:HilbertSpaceSABR} and \ref{Def:WeightedSobolevSpaceVSABR})
is tailored to the degeneracy of the infinitesimal generator \eqref{eq:GeneratorLogscaleSABR} at zero: in the sense
that in this setting (as a consequence of Theorem  \ref{Th:WellPosednessGeneral} and well-posedness cf. Theorem  \ref{Th:UniqueSolSABR}), the variational formulation of the SABR pricing equation has a unique solution in $\V$ and hence the family of option prices $P_t=E[u_0(Z_t)]$, $t\geq 0$ is a strongly continuous contraction semigroup on the Hilbert space $\Hh$, cf. Theorems \ref{Th:RegularityAprioriEstimatesGeneral} and \ref{Th:UniqueSolSABR}.\\ 
Key result in this section is the well-posedness of the SABR variational equations,  established in Theorem \ref{Th:UniqueSolSABR}. 
Furthermore, we show that the SABR-bilinear form \eqref{eq:LogvolBilinearformSABRWeighted} is a (non-symmetric) Dirichlet form with domain $\V$ on the Hilbert space $\Hh$, cf. Theorem \ref{Th:SABRDirichletForm}. The latter extends the results of \cite{DoeringHorvathTeichmann} on SABR-Dirichlet forms. We start by briefly recalling some concepts and results used in this section.
\begin{definition}[Continuity]\label{Def:continuity}
The form $a(\cdot, \cdot)$ is called \emph{continuous on ${\V}$} if there exists a $0<C_1<\infty$ such that 
\begin{equation}\label{eq:Defcontinuity}
\forall u,v \in {\V}: \quad |a(u,v)|\leq C_1 ||u||_{\V} ||v||_{\V}. 
\end{equation}
\end{definition}
\begin{definition}[Coercivity, G\aa{}rding inequality]
The form \eqref{eq:DefBilinearFormGeneral} is said to \emph{satisfy the G\aa{}rding inequality on  ${\V}$}, if there exists a constant $C_3\geq0$ such that 
\begin{equation}\label{eq:DefGardinginequality}
\forall u \in {\V} \ : \ a(u,u) \geq C_2 ||u||^2_{\V}- C_3||u||^2_{\Hh}. 
\end{equation}
If \eqref{eq:DefGardinginequality} holds with $C_3=0$, the form $a(\cdot,\cdot)$ is \emph{coercive}\footnote{It is standard (see for example\cite[Remark 2.4]{PetersdorffSchwab}) that the (weaker) G\aa{}rding inequality can be reduced to the coercivity property by the substitution
$v:=\E^{-C_3 t}u$. In case \eqref{eq:DefGardinginequality} is fulfilled for the operator $A$ at $u$, then \eqref{eq:DefGardinginequality} with $C_3=0$ is fulfilled at $v$. Then the operator $A+C_3 I$ is coercive and solves the related problem
\begin{equation*}
\dot{v}(t,z)+ (A+ C_3 I)v(t,z)= \E^{-C_3 t}g(t,z)\quad \textrm{in} \quad t\in J, z\in \R^2.
\end{equation*}
} and the equivalence $a(\cdot,\cdot)\approx ||\cdot||^2_{\V}$ holds.
\end{definition}
\begin{theorem}\label{Th:WellPosednessGeneral}
Let ${\V}$ and ${\Hh}$ be separable Hilbert spaces with a continuous dense embedding ${\V}\hookrightarrow {\Hh}$. Furthermore, let $a:{\V}\times {\V}\rightarrow \R$ be a bilinear form satisfying the inequalities \eqref{eq:Defcontinuity} and \eqref{eq:DefGardinginequality}.
Then the corresponding variational parabolic problem (recall Definition \ref{Def:SABRVariationalFormulation})
has a unique solution in $L^2(J;{\V})\cap H^1(J;{\V}^*)$.
\end{theorem}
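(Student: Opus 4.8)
The plan is to prove existence via a Galerkin approximation in space and uniqueness via an energy estimate, following the classical strategy for abstract parabolic equations (cf.\ \cite{ReichmannComputationalMethods, PetersdorffSchwab}). First I would reduce to the coercive case: by the substitution $v := \E^{-C_3 t} u$ recorded in the footnote to \eqref{eq:DefGardinginequality}, the G\aa{}rding inequality is converted into genuine coercivity $a(u,u) \geq C_2 \|u\|_{\V}^2$, so it suffices to treat $C_3 = 0$.

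Since $\V$ is separable, I would fix a sequence $(w_k)_{k\in\N} \subset \V$ whose finite linear combinations are dense in $\V$, set $\V_n := \mathrm{span}(w_1,\dots,w_n)$, and seek the Galerkin approximant $u_n(t) = \sum_{k=1}^n d_k^n(t)\, w_k$ solving the projected equations $(\dot u_n(t), w_k)_{\Hh} + a(u_n(t), w_k) = (g(t), w_k)_{\V^*\times\V}$ for $k = 1,\dots,n$, with $u_n(0)$ the $\Hh$-orthogonal projection of $u_0$ onto $\V_n$. Because the mass matrix $\big((w_j,w_k)_{\Hh}\big)_{jk}$ is invertible, this is a linear first-order ODE system with integrable coefficients, and Carath\'eodory's theorem yields a unique absolutely continuous solution on $\bar J$.

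The core of the argument is the a priori estimate. Testing the projected equation against $u_n$ itself and using $(\dot u_n, u_n)_{\V^*\times\V} = \tfrac12 \tfrac{d}{dt}\|u_n\|_{\Hh}^2$ together with coercivity gives $\tfrac12 \tfrac{d}{dt}\|u_n\|_{\Hh}^2 + C_2 \|u_n\|_{\V}^2 \leq \|g\|_{\V^*}\|u_n\|_{\V}$; a Young inequality absorbs the right-hand side into the $\|u_n\|_{\V}^2$ term and, after integrating in $t$, produces bounds on $u_n$ in $L^2(J;\V)\cap L^\infty(J;\Hh)$ that are uniform in $n$. A bound on $\dot u_n$ in $L^2(J;\V^*)$ then follows from the equation and the continuity \eqref{eq:Defcontinuity} of $a$. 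By weak compactness I would extract a subsequence with $u_n \rightharpoonup u$ in $L^2(J;\V)$ and $\dot u_n \rightharpoonup \dot u$ in $L^2(J;\V^*)$; passing to the limit in the projected equations---using density of $\bigcup_n \V_n$ in $\V$ and testing against $\varphi \in C_0^\infty(J)$---recovers the variational identity \eqref{eq:SABRVariationalFormulation}. Uniqueness follows by testing the homogeneous equation satisfied by the difference of two solutions against that difference and invoking coercivity and Gr\"onwall's inequality.

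I expect the main obstacle to be the two places where the integration-by-parts identity $(\dot u, u)_{\V^*\times\V} = \tfrac12 \tfrac{d}{dt}\|u\|_{\Hh}^2$ is invoked: it powers both the a priori estimate and the uniqueness argument, yet it is valid only for $u$ in the energy class $L^2(J;\V)\cap H^1(J;\V^*)$ and rests on the Gelfand-triple structure \eqref{eq:GelfandTriplet}---in particular on the Lions--Magenes embedding $L^2(J;\V)\cap H^1(J;\V^*) \hookrightarrow C(\bar J;\Hh)$, which simultaneously furnishes the sense in which the initial condition $u(0)=u_0$ is attained. Verifying this embedding and the associated trace at $t=0$ in the present weighted, non-symmetric setting---rather than the energy estimates themselves, which are routine---is the step requiring the most care.
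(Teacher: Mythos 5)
Your argument is correct, but it is not the paper's argument: the paper's entire proof of Theorem \ref{Th:WellPosednessGeneral} is the citation ``See \cite[Theorem 4.1]{LionsMagens}'', i.e.\ it invokes Lions' classical well-posedness theorem for abstract parabolic problems as a black box. What you have written out is, in substance, the standard Faedo--Galerkin proof of that cited result (the version in \cite{DautrayEvolutionProblems}): reduction to the coercive case by $v=\E^{-C_3t}u$ (which the paper itself records in the footnote to \eqref{eq:DefGardinginequality}), Galerkin approximation on finite-dimensional subspaces, the energy estimate, weak compactness, and uniqueness via the Lions--Magenes integration-by-parts identity plus Gr\"onwall. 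So the two routes differ only in that the paper outsources the proof to the literature while you supply it; what your version buys is a self-contained account that makes visible exactly where separability of $\V$, continuity \eqref{eq:Defcontinuity}, the G\aa{}rding inequality \eqref{eq:DefGardinginequality}, and the Gelfand-triple structure of Lemma \ref{Lem:DenseEmbeddingSABR} each enter.

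Two technical remarks. First, the one genuinely glossed step is your claim that a bound on $\dot u_n$ in $L^2(J;\V^*)$ ``follows from the equation'': the projected equation only controls $(\dot u_n,v)$ for $v\in\V_n$, so a uniform $\V^*$-bound requires a projection $\V\to\V_n$ that is bounded uniformly in $n$ as an operator on $\V$ (one classical fix is to choose the $w_k$ orthogonal in both $\Hh$ and $\V$). The cleaner route---which your limit-passing paragraph already implicitly uses---is to dispense with compactness of $(\dot u_n)$ altogether: pass to the limit in the projected equations tested against $\varphi(t)w_k$, obtain the variational identity \eqref{eq:SABRVariationalFormulation}, and then read off from it that $\dot u = g - Au$ lies in $L^2(J;\V^*)$, since $A\in\mathcal{L}(\V,\V^*)$ by \eqref{eq:Defcontinuity}. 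Second, your closing worry is overstated: the embedding $L^2(J;\V)\cap H^1(J;\V^*)\hookrightarrow C(\bar J;\Hh)$ and the identity $(\dot u,u)_{\V^*\times\V}=\tfrac12\tfrac{d}{dt}\|u\|_{\Hh}^2$ are abstract facts valid for \emph{any} Gelfand triple of separable Hilbert spaces; nothing about the singular weights or the non-symmetry of $a$ enters at this level, precisely because the hypotheses of the theorem have already abstracted those features away. The weighted structure is dealt with elsewhere in the paper (Lemmas \ref{Lem:DenseEmbeddingSABR}, \ref{Th1:WellPosednessSABR} and \ref{Th2:WellPosednessSABR}), not inside this theorem.
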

\begin{proof}
See \cite[Theorem 4.1]{LionsMagens} for a proof.
\end{proof}
\begin{theorem}\label{Th:RegularityAprioriEstimatesGeneral} Consider a bilinear form $a(\cdot,\cdot):{\V}\times {\V}\rightarrow \R$ associated with an $A\in \mathcal{L}({\V},{\V}^*)$ via \eqref{eq:DefBilinearFormGeneral}. If $a(\cdot,\cdot)$ satisfies the properties \eqref{eq:Defcontinuity} and  \eqref{eq:DefGardinginequality}, then  
$-A$ is the infinitesimal generator of a bounded analytic $C^0$-semigroup $(P_t)_{t\geq 0}$ in ${\V}^*$.
In this case, for given $u_0 \in {\Hh}$ and $g \in L^2(J;{\V}^*)$, the unique\footnote{By Theorem \ref{Th:WellPosednessGeneral} above.} variational solution of the corresponding equation can be represented as
\begin{equation}\label{eq:SolutionRepresentation1}
u(t)=P_tu_0+\int_{0}^{t}P_{t-s}g(s)ds.
\end{equation}
\end{theorem}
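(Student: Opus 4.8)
The plan is to deduce both assertions from the standard theory of analytic semigroups generated by coercive sesquilinear forms, after first disposing of the $C_3$-term. First I would invoke the substitution from the footnote to \eqref{eq:DefGardinginequality}: replacing $A$ by $A-C_3 I$ (equivalently, testing with $\E^{-C_3 t}$-scaled functions) turns the G\aa{}rding inequality \eqref{eq:DefGardinginequality} into genuine coercivity, $a(u,u)+C_3\|u\|_{\Hh}^2\geq C_2\|u\|_{\V}^2$, at the cost of multiplying the resulting semigroup by the scalar factor $\E^{\pm C_3 t}$. Since multiplication by a scalar exponential preserves both analyticity and the variation-of-constants formula, it suffices to treat the coercive case $C_3=0$. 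Throughout I would complexify the triple $\V\subset\Hh\subset\V^*$ and extend $a$ to a bounded sesquilinear form; continuity \eqref{eq:Defcontinuity} and coercivity persist under complexification.

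The heart of the argument is to show that the operator associated with $a$ via \eqref{eq:DefBilinearFormGeneral} is $m$-sectorial. Considering its part in the pivot space $\Hh$, I would test the resolvent equation with $u$ itself: for $u$ in the domain, coercivity gives $\mathrm{Re}\,a(u,u)\geq C_2\|u\|_{\V}^2$, while continuity gives $|\mathrm{Im}\,a(u,u)|\leq|a(u,u)|\leq C_1\|u\|_{\V}^2$, so that $|\mathrm{Im}\,a(u,u)|\leq (C_1/C_2)\,\mathrm{Re}\,a(u,u)$. Hence the numerical range lies in the sector $\{z\in\mathbb{C}:|\arg z|\leq\arctan(C_1/C_2)\}$ of half-angle $\theta<\pi/2$. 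The dense embedding $\V\hookrightarrow\Hh$ together with Lax--Milgram applied to $a(u,v)+\lambda(u,v)_{\Hh}$ (which is coercive for one $\lambda$ outside the sector) supplies the missing range condition, so the form operator is $m$-sectorial. Pairing the resolvent identity $\lambda\|u\|_{\Hh}^2+a(u,u)=(f,u)_{\Hh}$ with the sector geometry then yields the uniform bound $\|u\|_{\Hh}\leq (M/|\lambda|)\,\|f\|_{\Hh}$ on a sector strictly containing a half-plane, which is exactly the resolvent estimate characteristic of analytic semigroups; running the identical computation in the $\V^*$-pairing transfers the estimate to $\V^*$.

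With the sectorial resolvent bound in hand, the first assertion follows from the standard characterization of generators of bounded analytic semigroups (Pazy, Kato, or Tanabe): $-A$ generates a bounded analytic $C^0$-semigroup $(P_t)_{t\geq0}$ on $\Hh$, which extends to $\V^*$ by density. For the representation \eqref{eq:SolutionRepresentation1} I would verify that $u(t):=P_t u_0+\int_0^t P_{t-s}g(s)\,\D s$ is well defined for $u_0\in\Hh$ and $g\in L^2(J;\V^*)$, that by the maximal-regularity theory for analytic semigroups it lies in $L^2(J;\V)\cap H^1(J;\V^*)$, and that it satisfies the weak identity \eqref{eq:SABRVariationalFormulation}; uniqueness from Theorem \ref{Th:WellPosednessGeneral} then identifies it with the variational solution, completing the proof.

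The hard part will be the sectorial resolvent estimate, i.e.\ upgrading mere continuity and coercivity to a uniform $M/|\lambda|$ resolvent bound on a sector wider than a half-plane --- this is precisely what distinguishes an \emph{analytic} semigroup from a general $C^0$-semigroup. Three points demand care: complexifying the real form \eqref{eq:LogvolBilinearformSABRWeighted} without degrading the constants $C_1,C_2$; carrying the estimate simultaneously in $\Hh$ and in the dual $\V^*$, since the semigroup in the statement is required to act on $\V^*$; and keeping track of the $\E^{\pm C_3 t}$ factor, which must be reinstated after the coercivity reduction and which means that the raw (unshifted) semigroup is still analytic but need not be a contraction.
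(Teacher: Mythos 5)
Your proposal is correct in substance, but it is worth knowing that the paper does not actually prove this theorem at all: its ``proof'' is a pointer to the literature (Theorem 2.3 of Matache--von Petersdorff--Schwab, Section 2 of von Petersdorff--Schwab, and Lions--Magenes). What you have written is essentially a reconstruction of the argument that lives inside those references, namely Kato's form method: reduce the G\aa{}rding inequality to coercivity by the $\E^{\pm C_3 t}$ rescaling, complexify, read off the numerical-range sector $|\arg a(u,u)|\leq\arctan(C_1/C_2)$ from continuity and coercivity, obtain the range condition from Lax--Milgram, conclude $m$-sectoriality and the $M/|\lambda|$ resolvent bound, and then invoke the generation theorem for bounded analytic semigroups plus maximal regularity and the uniqueness of Theorem \ref{Th:WellPosednessGeneral} to identify the Duhamel formula \eqref{eq:SolutionRepresentation1} with the variational solution. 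So your route is not different from the paper's source, only self-contained where the paper outsources; what the citation buys is brevity, and what your argument buys is an explicit account of exactly which hypotheses enter where (in particular that only \eqref{eq:Defcontinuity} and \eqref{eq:DefGardinginequality} are used, not any structure of the SABR form). Two points you flag deserve to be kept visible in any written-up version: first, after reinstating the $\E^{C_3 t}$ factor the unshifted semigroup is analytic but only quasi-bounded ($\|P_t\|\leq M\E^{C_3 t}$), so ``bounded analytic'' is literally correct only for the shifted generator --- an imprecision the theorem statement itself shares; second, the transfer of the sectorial resolvent estimate from $\Hh$ to $\V^*$, which you compress into ``running the identical computation in the $\V^*$-pairing,'' is genuinely the step where the cited references do the work (via duality/extrapolation of the form operator $A\in\mathcal{L}(\V,\V^*)$), and it should be carried out rather than asserted if the proof is to stand on its own.
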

\begin{proof} See \cite[Theorem 2.3.]{MatacheSchwab}, \cite[Section 2. Equation (2.13) and Remark 2.1]{PetersdorffSchwab} and also \cite{LionsMagens}.
\end{proof}
\noindent We now formulate the main theorem in this section:
\begin{theorem}[Well-posedness of the SABR pricing equation]\label{Th:UniqueSolSABR}
For every configuration $(\beta,|\rho|,\nu) \in [0,1]\times[0,1] \times \R_+$ of the SABR parameters, which  satisfy the condition $|\rho|\nu^2<2$ and for any $x_0,y_0>0$, the variational formulation \eqref{eq:SABRVariationalFormulation} of the pricing equation \eqref{eq:SABRPDELogVol} corresponding to the SABR model \eqref{eq:SABRPDELogVol} admits
a unique solution $u \in L^2(J,\V)\cap H^1(J,\V^*)$ for any forcing term $g\in  L^2(J,\mathcal{V}^\ast)$ and any $u_0$ in $\mathcal{H}$.
The unique variational solution of the pricing equation can be represented as
\begin{equation}\label{eq:SolutionRepresentation}
u(t,z)=P_tu_0(z)+ \int_0^tP_{t-s}g(s)ds, \qquad t\geq 0, z\in G
\end{equation}
for a strongly continuous semigroup $(P_t)_{t\geq0}$ on $\Hh$ with the infinitesimal generator $A$ in \eqref{eq:GeneratorLogscaleSABR}.
\end{theorem}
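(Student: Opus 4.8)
The plan is to reduce the statement to the abstract hypotheses of Theorem~\ref{Th:WellPosednessGeneral} and Theorem~\ref{Th:RegularityAprioriEstimatesGeneral}. By Lemma~\ref{Lem:DenseEmbeddingSABR} the triplet $\V\subset\Hh\subset\V^*$ consists of separable Hilbert spaces with dense continuous embeddings, so it suffices to show that the SABR bilinear form \eqref{eq:LogvolBilinearformSABRWeighted} is continuous \eqref{eq:Defcontinuity} and satisfies the G\aa{}rding inequality \eqref{eq:DefGardinginequality} on $\V$ under the hypothesis $|\rho|\nu^2<2$. Once these hold, Theorem~\ref{Th:WellPosednessGeneral} yields the unique solution in $L^2(J;\V)\cap H^1(J;\V^*)$, and Theorem~\ref{Th:RegularityAprioriEstimatesGeneral} (with $A\in\mathcal{L}(\V,\V^*)$, which itself follows from continuity) upgrades this to the semigroup representation \eqref{eq:SolutionRepresentation} and strong continuity of $(P_t)_{t\ge0}$ on $\Hh$. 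Throughout I would work with the explicit equivalent norm \eqref{eq:WeightedSobolevSpaceVNorm} and use repeatedly that on the bounded domain $G$ the factors $e^{y},e^{2y}$ are bounded above and below by positive constants, so they affect only the constants $C_1,C_2,C_3$ and never the structure of the estimates.

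For continuity I would bound \eqref{eq:LogvolBilinearformSABRWeighted} term by term with Cauchy--Schwarz, matching each weight to the seminorms in \eqref{eq:WeightedSobolevSpaceVNorm}: the second-order terms split as $x^{2\beta+\mu}=(x^{\beta+\mu/2})^2$ and $x^{\mu}=(x^{\mu/2})^2$, while the mixed term $x^{\beta+\mu}=x^{\beta+\mu/2}x^{\mu/2}$ pairs the $x$- and $y$-derivative seminorms. The only genuinely nonstandard contribution is the weight-induced first-order term with weight $x^{2\beta+\mu-1}$: writing $x^{2\beta+\mu-1}=x^{\beta+\mu/2}\,x^{\beta+\mu/2-1}$, one is left to control $\|x^{\beta+\mu/2-1}u\|_{\Ll^2(G)}$ by $\|u\|_{\V}$, which is precisely a weighted Hardy inequality on $V_x$. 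Here the restriction on $\mu$ in \eqref{eq:muchoice} enters: it is exactly the range for which this Hardy inequality, and hence the integrability and continuity of all six terms, holds. For the $x$-direction I would invoke the CEV estimates of \cite{ReichmannComputationalMethods,ReichmannSchwab} referenced in Remark~\ref{Rem:CEVTriplet}, so that no new one-dimensional inequality has to be proved.

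The G\aa{}rding inequality is the crux. Computing $a(u,u)$, the principal (second-order) part is, pointwise, the quadratic form
\[
\tfrac12 P^2 + \rho\nu\, PQ + \tfrac{\nu^2}{2}Q^2,\qquad P:=x^{\beta+\mu/2}e^{y}\partial_x u,\quad Q:=x^{\mu/2}\partial_y u,
\]
whose matrix has determinant $\tfrac{\nu^2}{4}(1-\rho^2)$; completing the square with a parameter $\theta\in(|\rho|,1)$ extracts a coercive lower bound $c(\|P\|^2+\|Q\|^2)$ in the two derivative seminorms, so that the $x$-part combines with the established CEV coercivity and the $y$-part is of standard $H^1$ (Ornstein--Uhlenbeck) type. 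It then remains to absorb the three first-order terms --- the weight-drift term with weight $x^{2\beta+\mu-1}$ and the two first-order terms carrying the factors $\rho\nu$ and $\nu^2$ --- by Young's inequality, routing their derivative parts into the coercive seminorms and the remainder into $-C_3\|u\|^2_{\Hh}$. The surviving cross-coupling between the $x$- and $y$-derivatives is the only obstruction to keeping the net coefficients of $\|P\|^2$ and $\|Q\|^2$ strictly positive, and the smallness required for this to succeed is what the hypothesis $|\rho|\nu^2<2$ guarantees.

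I expect this simultaneous absorption step to be the main obstacle, for two reasons. First, the $\nu^2$-weighted first-order $y$-term is in fact a trace term on the horizontal boundaries $\{y=\pm R_y\}$ (it integrates a $y$-derivative against $u$ with an $x$-only weight), so its control requires a one-dimensional trace/interpolation inequality with a carefully chosen Young parameter, and this competes for the budget left by the cross term. Second, in the borderline regimes $|\rho|\to1$ and $2\beta+\mu\to1$ the quadratic form degenerates and the Hardy constant in the weight-drift estimate deteriorates, so the bookkeeping of constants must exploit the divergence-form structure of the $x$-part, as in the CEV theory, rather than a crude Cauchy--Schwarz, in order to recover the full range of $\mu$ in \eqref{eq:muchoice}. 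Once continuity and the G\aa{}rding inequality are secured with these constants, the two abstract theorems close the argument and deliver \eqref{eq:SolutionRepresentation}.
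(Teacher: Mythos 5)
Your overall architecture --- verify continuity \eqref{eq:Defcontinuity} and the G\aa{}rding inequality \eqref{eq:DefGardinginequality} for \eqref{eq:LogvolBilinearformSABRWeighted} on the triplet of Lemma \ref{Lem:DenseEmbeddingSABR}, then invoke Theorems \ref{Th:WellPosednessGeneral} and \ref{Th:RegularityAprioriEstimatesGeneral} --- is exactly the paper's (Lemmas \ref{Th1:WellPosednessSABR}, \ref{Th2:WellPosednessSABR} and \ref{Th:ExistenceofConstants}), and your continuity argument (term-by-term Cauchy--Schwarz plus Hardy for the $x^{2\beta+\mu-1}$ term) matches Lemma \ref{Th1:WellPosednessSABR}. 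The genuine gap is in your treatment of the weight-drift term in the G\aa{}rding bound. You propose to absorb $\tfrac{2\beta+\mu}{2}\int_G x^{2\beta+\mu-1}e^{2y}(\partial_x u)\,u\,dx\,dy$ by Young's inequality, with the derivative part going into the coercive seminorm and the remainder into $-C_3\|u\|_{\Hh}^2$. That is impossible: the remainder factor is $\|x^{\beta+\mu/2-1}e^{y}u\|_{\Ll^2(G)}$, whose weight is more singular than $x^{\mu/2}$ by the unbounded factor $x^{\beta-1}$, so it is not controlled by $\|u\|_{\Hh}$. Your fallback, Hardy's inequality, controls it by $\tfrac{2}{|2\beta+\mu-1|}\|x^{\beta+\mu/2}e^{y}\partial_x u\|_{\Ll^2(G)}$, but then the term costs $\tfrac{2\beta+\mu}{|2\beta+\mu-1|}$ times the coercive $x$-seminorm, of which the principal part supplies only the coefficient $\tfrac12$; this bookkeeping fails whenever $2\beta+\mu>\tfrac13$ --- for instance $\mu=-\beta$ and any $\beta>\tfrac13$, even in the uncorrelated case $\rho=0$ --- i.e.\ on a large part of the admissible range \eqref{eq:muchoice}, not merely in the borderline regimes your sketch concedes. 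The paper's estimate (2) in Lemma \ref{Th2:WellPosednessSABR} avoids absorption altogether: writing $(\partial_x u)u=\tfrac12\partial_x(u^2)$ and integrating by parts turns the term into a constant multiple of $-(2\beta+\mu)(2\beta+\mu-1)\int_G x^{2\beta+\mu-2}e^{2y}u^2\,dx\,dy$, which is non-negative precisely when $\mu\in[-2\beta,1-2\beta]$ and is simply discarded, consuming no coercive budget. The point is that this term has a favorable sign that any absolute-value (Cauchy--Schwarz/Hardy) estimate necessarily throws away; the divergence-form identity you mention only as a borderline repair must be the primary mechanism.

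Two further mismatches. First, the hypothesis $|\rho|\nu^2<2$ plays a different role than you assign to it: in the paper it is exactly the condition making the window $\delta\in\left(\tfrac{|\rho|\nu^3}{2},\tfrac{2}{|\rho|\nu}\right)$ of Lemma \ref{Th:ExistenceofConstants} nonempty, and the stated parameter range includes $|\rho|=1$ (with $\nu<\sqrt{2}$), which your completing-the-square with $\theta\in(|\rho|,1)$ --- resting on positivity of the determinant $\tfrac{\nu^2}{4}(1-\rho^2)$ --- cannot reach, since that interval is then empty. Second, the term $\tfrac{\nu^2}{2}\int_G x^{\mu}(\partial_y u)\,u\,dx\,dy$ needs no trace inequality: in estimate (6) of Lemma \ref{Th2:WellPosednessSABR} it vanishes identically by integration by parts in $y$ together with the density of $\C_0^{\infty}(G)$ in $\V$; your trace-interpolation route is workable but addresses a difficulty the paper's boundary conditions have already removed.
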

\begin{proof}
This is a direct consequence of Theorem \ref{Th:WellPosednessGeneral} applied to Lemmas \ref{Th1:WellPosednessSABR} and \ref{Th2:WellPosednessSABR} below, which establish continuity \eqref{eq:Defcontinuity} and the G\aa{}rding inequality \eqref{eq:DefGardinginequality} for the SABR Dirichlet form \eqref{eq:LogvolBilinearformSABRWeighted} on this triplet.
\end{proof}
\begin{lemma}\label{Th1:WellPosednessSABR}
The bilinear form \eqref{eq:LogvolBilinearformSABRWeighted} is continuous: There exists a constant $C_1>0$ such that
\begin{align}
|a(u,v)|\leq C_1 \ ||u||_{\V} \ ||v||_{\V}, \quad \textrm{for all} \quad u,v \in \V \label{eq:WellPosednessSABRContinuity}.
\end{align}
\end{lemma}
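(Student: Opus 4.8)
The plan is to prove \eqref{eq:WellPosednessSABRContinuity} by estimating each of the six integrals in \eqref{eq:LogvolBilinearformSABRWeighted} separately and summing. Throughout I would exploit that on the bounded localization domain $G=[0,R_x)\times(-R_y,R_y)$ the factors $e^{y},e^{2y}$ are bounded by $e^{R_y}$ resp. $e^{2R_y}$, and that every \emph{nonnegative} power of $x$ is bounded by the corresponding power of $R_x$. The reference norm is the equivalent expression \eqref{eq:WeightedSobolevSpaceVNorm}, so that $\|\cdot\|_{\V}$ controls precisely the three weighted quantities $\|x^{\beta+\mu/2}\partial_x u\|_{L^2(G)}$, $\|x^{\mu/2}\partial_y u\|_{L^2(G)}$ and $\|x^{\mu/2}u\|_{L^2(G)}$, and every term has to be matched against these.

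For the five terms that do \emph{not} carry the weight $x^{2\beta+\mu-1}$ the estimate is routine. In the leading second-order term I would split $x^{2\beta+\mu}=x^{\beta+\mu/2}\cdot x^{\beta+\mu/2}$, pairing $x^{\beta+\mu/2}\partial_x u$ with $x^{\beta+\mu/2}\partial_x v$; in the mixed term $x^{\beta+\mu}=x^{\beta+\mu/2}\cdot x^{\mu/2}$ pairs $x^{\beta+\mu/2}\partial_x u$ with $x^{\mu/2}\partial_y v$; the lower-order cross term pairs it with $x^{\mu/2}v$; and the two volatility terms split $x^{\mu}=x^{\mu/2}\cdot x^{\mu/2}$ to pair $x^{\mu/2}\partial_y u$ with $x^{\mu/2}\partial_y v$ resp. $x^{\mu/2}v$. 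In each of these the total weight-exponent available equals the sum of the two minimal exponents demanded by \eqref{eq:WeightedSobolevSpaceVNorm}, so Cauchy--Schwarz (after bounding the exponential factor) directly yields a bound $C\,\|u\|_{\V}\|v\|_{\V}$, and the constants can simply be collected.

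The single genuinely delicate term is the first-order term in $x$, with coefficient $\tfrac{2\beta+\mu}{2}$ and weight $x^{2\beta+\mu-1}$. Here a symmetric split is impossible: matching $\partial_x u$ with $x^{\beta+\mu/2}$ and $v$ with $x^{\mu/2}$ would require total exponent $\beta+\mu$, whereas only $2\beta+\mu-1$ is available, a deficit of $1-\beta>0$ which, being a \emph{negative} surplus power of $x$, cannot be absorbed on the bounded domain. I would therefore assign $x^{\beta+\mu/2}$ to $\partial_x u$ (controlled by $\|u\|_{\V}$) and the remaining $x^{\beta+\mu/2-1}$ to $v$, reducing the whole estimate, via Cauchy--Schwarz, to the weighted one-dimensional Hardy inequality $\int_0^{R_x}x^{2\beta+\mu-2}|v|^2\,dx\le C\int_0^{R_x}x^{2\beta+\mu}|\partial_x v|^2\,dx$ (then integrated in $y$), i.e. to $\|x^{\beta+\mu/2-1}v\|_{L^2(G)}\le C\|v\|_{\V}$.

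Establishing this Hardy inequality is the main obstacle, and it is exactly where the absorbing boundary condition and the admissible range \eqref{eq:muchoice} enter. Two points must be checked. First, the inequality is false for functions with nonvanishing trace at the degenerate point (a profile $v$ that is constant near $x=0$ already makes the left-hand integral diverge when $2\beta+\mu<1$), so I would use that the absorbing condition forces $v$ to vanish at $x=0$, which makes the boundary term in the integration-by-parts proof of Hardy's inequality vanish and renders the estimate valid. Second, the Hardy constant behaves like $1/|2\beta+\mu-1|$, so one needs $2\beta+\mu\neq1$; the ranges in \eqref{eq:muchoice} give $2\beta+\mu\le 1$ with equality only at the endpoint $\mu=1-2\beta$ for $\beta\in[\tfrac12,1)$, and this borderline case would have to be covered by a separate density/limiting argument (or by retaining the lower-order term $\|x^{\mu/2}v\|_{L^2(G)}$ from the $\V$-norm on the right-hand side). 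I would invoke the weighted-space results recalled in the Appendix and in \cite{BeuchlerSchneiderSchwab} to justify this Hardy inequality in the required form; once it is in hand, the sixth estimate closes and, summing all six contributions, \eqref{eq:WellPosednessSABRContinuity} follows with $C_1$ depending only on $\beta,\mu,|\rho|,\nu,R_x,R_y$.
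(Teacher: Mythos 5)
Your proposal follows essentially the same route as the paper's proof: the same six-term decomposition, the same routine Cauchy--Schwarz/Young estimates with exactly the weight splittings you describe for five of the terms, and Cauchy--Schwarz plus the weighted Hardy inequality $\int_0^{R_x} x^{2\beta+\mu-2}v^2\,dx\le C\int_0^{R_x} x^{2\beta+\mu}|\partial_x v|^2\,dx$ (the paper cites it from \cite{ReichmannComputationalMethods} with constant $2/|2\beta+\mu-1|$) for the critical term carrying the weight $x^{2\beta+\mu-1}$. The only difference is that you are more explicit than the paper about the hypotheses behind that Hardy inequality --- the vanishing behaviour at $x=0$ coming from the absorbing boundary condition, and the degeneration of the constant at the endpoint $2\beta+\mu=1$ (i.e.\ $\mu=1-2\beta$), a borderline case which the range \eqref{eq:muchoice} formally permits but which the paper's proof does not separately address.
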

\begin{proof}
The continuity statement \eqref{eq:WellPosednessSABRContinuity} is a direct consequence of the following six estimates, each of which corresponds to a component of $a(\cdot,\cdot)$ in \eqref{eq:LogvolBilinearformSABRWeighted}:\\
\begin{enumerate}
\item[(1)]
$\frac{1}{2}\int\int_{G} x^{2\beta+\mu}e^{2y} (\partial_x u) (\partial_x v) dxdy
\leq \frac{1}{2}\left(||x^{\beta+\mu/2}e^{y} \partial_x u||_{\Ll^2(G)}^2+||x^{\beta+\mu/2}e^{y} \partial_x v||_{\Ll^2(G)}^2\right),$\\
\smallskip
\item[(2)] By the Cauchy-Schwarz inequality,
\begin{equation*}\begin{array}{ll}
&\frac{2 \beta +\mu}{2}\int\int_{G} x^{2 \beta +\mu -1}e^{2y} (\partial_x u)  v dxdy 
\\&\leq \frac{2 \beta +\mu}{2}\left(\int\int_{G} x^{2 \beta +\mu -2}e^{2y}   v^2 dxdy \right)^{1/2}
\left(\int\int_{G} x^{2 \beta +\mu}e^{2y}  (\partial_x u)^2 dxdy\right)^{1/2},
\end{array}\end{equation*}
and an upper bound for the latter is derived from Hardy's inequality \cite[(5.56) p. 54]{ReichmannComputationalMethods}\footnote{We use Hardy's inequality as in \cite[(5.56) p. 54]{ReichmannComputationalMethods}, setting $\varepsilon=2 \beta+\mu-1$ and $C=\frac{2\beta+\mu}{2}$:
$C \int \int x^{\varepsilon-2}v^2(x,y) dx \ e^y dy \leq C \int \frac{1}{|\varepsilon-1|} \int x^{\varepsilon}|\partial_x v(x,y)|^2 dx e^{y}dy$.}:
\begin{equation*}\begin{array}{ll}
&\leq \frac{2 \beta +\mu}{2}
\frac{2}{|2 \beta +\mu -1|}
||x^{ \beta +\mu/2}e^{y} \partial_x v||_{\Ll^2(G)}|| x^{ \beta +\mu/2}e^{y} \partial_x u||_{\Ll^2(G)}
\\&\leq \frac{2 \beta +\mu}{2}
\frac{2}{|2 \beta +\mu -1|}
\left(||x^{ \beta +\mu/2}e^{y} \partial_x v||_{\Ll^2(G)}^2 +||x^{ \beta +\mu/2}e^{y} \partial_x u||_{\Ll^2(G)}^2\right)
\end{array}\end{equation*}
\item[(3)]
$\rho \nu \int\int_{G} x^{\beta+\mu}e^{y} (\partial_x u) (\partial_y v) dxdy 
\leq |\rho \nu| \ \left(||x^{\beta+\mu/2}e^{y} \partial_x u||_{\Ll^2(G)}^2+ \frac{\nu^2}{2}||x^{\mu/2} \partial_y v||_{\Ll^2(G)}^2\right)$,\\
\smallskip
\item[(4)]
$\rho \nu \int\int_{G} x^{\beta+\mu}e^{y} (\partial_x u) v dxdy 
\leq |\rho \nu| \left( ||x^{\beta+\mu/2}e^{y} \partial_x u||_{\Ll^2(G)}^2+ \frac{\nu^2}{2}||x^{\mu/2}v||_{\Ll^2(G)}^2\right),\\
\smallskip
$
\item[(5)]
$\frac{\nu^2}{2}\int\int_{G} x^{\mu} (\partial_y u)(\partial_y v) dxdy
\leq \frac{\nu^2}{2}\left(||x^{\mu/2} \partial_y u||_{\Ll^2(G)}^2+ ||x^{\mu/2} \partial_y v||^2_{\Ll^2(G)}\right)$,\\
\smallskip
\item[(6)] 
$\frac{\nu^2}{2}\int\int_{G} x^{\mu} (\partial_y u) v dxdy \leq 0
\leq \frac{\nu^2}{2}\left(||x^{\mu/2} \partial_y u||_{\Ll^2(G)}^2+ ||x^{\mu/2} v||^2_{\Ll^2(G)}\right).$
\end{enumerate}
\end{proof}
\begin{remark}\label{Rem:ContinuityNonTruncated}
The proof of Lemma \ref{Th1:WellPosednessSABR} reveals that analogous estimates are valid if the domain $G$ is the whole (not-truncated) state space $\R\times \R_{\geq 0}$. In the next lemma, an analogous statement holds true if in estimate (2) and (6) integration by parts is valid with vanishing boundary terms.
\end{remark}
\begin{lemma}\label{Th2:WellPosednessSABR}
The bilinear form (\ref{eq:LogvolBilinearformSABRWeighted}) satisfies the G\aa{}rding inequality, i.e. there exist constants $C_2>0$ and $C_3\geq 0$ such that
\begin{align}
a(u,u)\geq C_2 \ ||u||_{\V}^2-C_3||u||_{\Hh}^2,  \quad \textrm{for all} \quad u \in {\V}\label{eq:WellPosednessSABRGarding}.
\end{align}
\end{lemma}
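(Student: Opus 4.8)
The plan is to evaluate the form on the diagonal, $v=u$ in \eqref{eq:LogvolBilinearformSABRWeighted}, and to bound the six resulting integrals $I_1,\dots,I_6$ from below against the three building blocks of the $\V$-norm \eqref{eq:WeightedSobolevSpaceVNorm}, namely $\|x^{\beta+\mu/2}e^{y}\partial_x u\|_{\Ll^2(G)}^2$, $\|x^{\mu/2}\partial_y u\|_{\Ll^2(G)}^2$ and $\|x^{\mu/2}u\|_{\Ll^2(G)}^2=\|u\|_{\Hh}^2$. Since $e^{y}$ is bounded above and below by positive constants on the truncated domain $G$, the weight $x^{\beta+\mu/2}e^{y}$ is interchangeable with $x^{\beta+\mu/2}$ up to constants, so control of these three quantities is equivalent to control of $\|u\|_{\V}^2$. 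Only the two diagonal second-order terms $I_1=\tfrac12\|x^{\beta+\mu/2}e^{y}\partial_x u\|_{\Ll^2(G)}^2$ and $I_5=\tfrac{\nu^2}{2}\|x^{\mu/2}\partial_y u\|_{\Ll^2(G)}^2$ are manifestly coercive; they must furnish the positive term $C_2\|u\|_{\V}^2$, while the remaining four integrals are to be dominated by a small fraction of $I_1,I_5$ plus a multiple of $\|u\|_{\Hh}^2$.

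First I would treat the singular first-order term $I_2$. Using $(\partial_x u)u=\tfrac12\partial_x(u^2)$ and integrating by parts in $x$, it equals, up to a boundary contribution at $x=0$, the quantity $-\tfrac{(2\beta+\mu)(2\beta+\mu-1)}{4}\int\int_G x^{2\beta+\mu-2}e^{2y}u^2\,dxdy$. The point is that every admissible weight in \eqref{eq:muchoice} satisfies $2\beta+\mu\in[0,1]$, so the prefactor $(2\beta+\mu)(2\beta+\mu-1)$ is nonpositive and this contribution carries a favourable sign. The delicate issue --- and the manifestation of the degeneracy at the origin --- is the finiteness of this singular integral and the vanishing of the boundary term at $x=0$: both are governed by the weighted Hardy inequality \cite[(5.56) p.~54]{ReichmannComputationalMethods} together with the fact that the space $V_x$ underlying $\V$ is the closure of $C_0^\infty([0,R_x))$ in the weighted norm, which fixes the admissible behaviour of $u$ at the degenerate boundary $x=0$. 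This is precisely the step at which the unweighted theory fails and the tailored space $\V$ is indispensable.

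The three lower-order integrals $I_3,I_4,I_6$ I would estimate by weighted Cauchy--Schwarz and Young inequalities, splitting the weights so that each factor matches one of the three norm blocks --- for instance $x^{\beta+\mu}e^{y}=(x^{\beta+\mu/2}e^{y})(x^{\mu/2})$ in $I_3$ and $I_4$, and $x^{\mu}=(x^{\mu/2})(x^{\mu/2})$ in $I_6$ --- exactly as in the continuity estimates of Lemma \ref{Th1:WellPosednessSABR}. Each such application yields a term of the form $\eta\cdot(\text{derivative square})$, to be absorbed into $I_1$ or $I_5$, and a term $\eta^{-1}\cdot\|u\|_{\Hh}^2$, to be collected into $C_3$. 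Summing up, $C_2$ is the minimum of the two leftover coefficients of the derivative norms and $C_3$ the accumulated coefficient of $\|u\|_{\Hh}^2$, which yields \eqref{eq:WellPosednessSABRGarding}.

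I expect the main obstacle to be the correlation cross term $I_3=\rho\nu\int\int_G x^{\beta+\mu}e^{y}\,\partial_x u\,\partial_y u\,dxdy$, which couples the two second-order directions. Since the principal symbol degenerates as $x\to0$ there is no uniform ellipticity to exploit, and the only positivity available is that of the pointwise quadratic form $\tfrac12 p^2+\rho\nu\,pq+\tfrac{\nu^2}{2}q^2$ in $p=x^{\beta+\mu/2}e^{y}\partial_x u$ and $q=x^{\mu/2}\partial_y u$. Retaining strict positivity of this form after $I_3$ (and the first-order terms $I_4,I_6$) have been absorbed into $I_1,I_5$ is what forces a quantitative relation between $\rho$ and $\nu$; this is where the standing assumption $|\rho|\nu^2<2$ enters, guaranteeing that the Young parameters can be chosen so that the coefficients of both derivative norms stay strictly positive. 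Tuning these parameters, and verifying that the boundary term from the integration by parts in $I_2$ indeed vanishes for every $u\in\V$, are the two points I expect to require the most care.
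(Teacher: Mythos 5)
Your proposal is correct and follows essentially the same route as the paper's proof: diagonal evaluation of the form, the identical integration-by-parts and sign argument for the singular term $I_2$ (valid precisely because \eqref{eq:muchoice} forces $2\beta+\mu\in[0,1]$), Young's inequality with free parameters to absorb the cross and first-order terms into $I_1$ and $I_5$ while collecting the $\|u\|_{\Hh}^2$ penalty into $C_3$, and the parameter restriction $|\rho|\nu^2<2$ (Lemma \ref{Th:ExistenceofConstants}) to keep both derivative coefficients strictly positive. The only immaterial deviation is the term $I_6$, which the paper shows vanishes exactly by integration by parts in $y$ together with the Dirichlet boundary conditions, rather than being absorbed via Young as you propose.
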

\begin{proof}
The G\aa{}rding inequality \eqref{eq:WellPosednessSABRGarding} is obtained from the following estimates:
\begin{enumerate}
\item[(1)]$\frac{1}{2}\int\int_{G} x^{2\beta+\mu}e^{2y} \partial_x u \partial_x u dxdy=\frac{1}{2}||x^{\beta+\mu/2}e^{y} \partial_x u||_{\Ll^2(G)}^2$\\
\smallskip
\item[(2)] Using $(\partial_x u)u=\frac{1}{2}\partial_x(u^2)$ and integration by parts in the second term of \eqref{eq:LogvolBilinearformSABRWeighted} yields 
\begin{equation*}\begin{array}{ll}
\frac{2 \beta +\mu}{2}\int\int_{G} x^{2 \beta +\mu -1}e^{2y} \frac{1}{2}\partial_x (u^2) dxdy 
=-\frac{2 \beta +\mu}{2}(2 \beta +\mu -1)\int\int_{G} x^{2 \beta +\mu -2}e^{2y} u^2 dxdy
\end{array}\end{equation*}
The last term is non-negative if and only if $\mu \in [-2\beta,1-2\beta]$, which is satisfied by \eqref{eq:muchoice}. \\
\smallskip
\item[(3)]
 $\rho \nu \int\int_{G} x^{\beta+\mu}e^{y} (\partial_x u) (\partial_y u) dxdy 
\geq \frac{-|\rho| \nu^3 }{4}\left( \frac{1}{\delta}||x^{\beta+\mu/2}e^{y} \partial_x u||_{\Ll^2(G)}^2+\delta|| x^{\mu/2} \partial_y u||_{\Ll^2(G)}^2\right),\\
 $
for a constant $\delta>0$. \\
\smallskip
\item[(4)]
$\rho \nu \int\int_{G} x^{\beta+\mu}e^{y} (\partial_x u) (\partial_y u) dxdy 
\geq \frac{-|\rho| \nu^3 }{4}\left(\epsilon ||x^{\beta+\mu/2}e^{y} \partial_x u||_{\Ll^2(G)}^2+\tfrac{1}{\epsilon}|| x^{\mu/2} u||_{\Ll^2(G)}^2\right),$ \\           
for a constant $\epsilon>0$.\\
\smallskip
\item[(5)]$\frac{\nu^2}{2}\int\int_{G} x^{\mu} \partial_y u \partial_y u dxdy
=\frac{\nu^2}{2}||x^{\mu/2} \partial_y u||_{\Ll^2(G)}^2$,\\
\smallskip
\item[(6)]$ \frac{\nu^2}{2}\int\int_{G} x^{\mu} (\partial_y u) u dxdy=
\frac{\nu^2}{2}
\int \left(\int  u(\partial_y u)   dy \right) x^{\mu} dx
= \frac{\nu^2}{2} \int \left(-\int  u(\partial_y u)   dy \right) x^{\mu} dx
=0$ 
by integration by parts, the (Dirichlet) boundary conditions for $u\in C_0^{\infty}(G)$ and by the density of $C_0^{\infty}(G)$ in ${\V}$.
\end{enumerate}
Hence,
\begin{equation*}\begin{array}{ll}
a(u,u)
\geq \left(\tfrac{1}{2}-\tfrac{|\rho| \nu^3}{4 \delta}-\tfrac{|\rho| \nu^3 \epsilon}{4}\right) ||x^{\beta+\mu/2}e^{y} \partial_x u||_{\Ll^2}^2 +\left(\tfrac{\nu^2}{2}-\tfrac{|\rho| \nu^3 \delta}{4}\right)||x^{\mu/2} \partial_y u||_{\Ll^2}^2
-\frac{|\rho| \nu^3 }{4 \epsilon}|| x^{\mu/2} u||_{\Ll^2}^2, 
\end{array}
\end{equation*}
which yields the inequality \eqref{eq:WellPosednessSABRGarding}
\begin{equation*}\begin{array}{ll}
a(u,u)
&\geq C_2 \left(||x^{\beta+\mu/2}e^{y} \partial_x u||_{\Ll^2(G)}^2 +||x^{\mu/2} \partial_y u||_{\Ll^2(G)}^2+|| x^{\mu/2} u||_{\Ll^2(G)}^2\right)
-C_3|| x^{\mu/2} u||_{\Ll^2(G)}^2\\
&=C_2||u||_{\V}^2-C_3||u||_{\Hh}^2, 
\end{array}\end{equation*}
with $C_2:=\min\{\frac{\nu^2}{2}-\frac{|\rho| \nu^3 \delta}{4},\frac{1}{2}-\frac{|\rho| \nu^3}{4 \delta}-\frac{|\rho| \nu^3 \epsilon}{4}\}$ and $C_3:=C_2+\frac{|\rho| \nu^3 }{4 \epsilon}$.
\end{proof}
It remains to verify that the constants $\delta$ and $\epsilon$ can be chosen accordingly such that $C_2>0$.
\begin{lemma}\label{Th:ExistenceofConstants}
For every configuration $(\beta,|\rho|,\nu) \in [0,1]\times[0,1] \times \R_+$ of the SABR parameters, which satisfy the condition $|\rho|\nu^2<2$ there exist constants $\epsilon>0$ and $\delta>0$ such that  
$
C_2>0. 
$
\end{lemma}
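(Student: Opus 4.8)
The plan is to unravel the definition
$C_2 = \min\bigl\{\tfrac{\nu^2}{2} - \tfrac{|\rho|\nu^3\delta}{4},\ \tfrac{1}{2} - \tfrac{|\rho|\nu^3}{4\delta} - \tfrac{|\rho|\nu^3\epsilon}{4}\bigr\}$
recorded at the end of the proof of Lemma \ref{Th2:WellPosednessSABR}, and to exhibit explicit admissible free parameters $\delta,\epsilon>0$ that render both entries of the minimum strictly positive. Since the model carries the standing assumption $\nu>0$, the degenerate case $\rho=0$ (equivalently, the case where the cross terms in \eqref{eq:LogvolBilinearformSABRWeighted} vanish) is immediate: there $C_2=\min\{\nu^2/2,\ 1/2\}>0$ for any choice of $\delta,\epsilon$. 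I would therefore assume $|\rho|>0$ and $\nu>0$ from now on.

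For the first entry, positivity is equivalent — after dividing through by $\nu^2/4>0$ — to the upper bound $\delta < 2/(|\rho|\nu)$. For the second entry I would postpone the $\epsilon$-term, observing that it can be made arbitrarily small, so that the binding requirement coming from the $\delta$-dependent part is $\tfrac{|\rho|\nu^3}{4\delta}<\tfrac12$, i.e. the lower bound $\delta > |\rho|\nu^3/2$. In this way the whole problem reduces to finding a single $\delta$ lying in the open interval $\bigl(\,|\rho|\nu^3/2,\ 2/(|\rho|\nu)\,\bigr)$.

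The key step — and the only place where the hypothesis is used — is the observation that this interval is nonempty precisely under the standing condition. Indeed, $|\rho|\nu^3/2 < 2/(|\rho|\nu)$ is equivalent, upon clearing denominators, to $|\rho|^2\nu^4<4$, that is to $(|\rho|\nu^2)^2<4$, which (both sides being nonnegative) is exactly $|\rho|\nu^2<2$. I would therefore fix any $\delta$ in this interval. With such a $\delta$ the quantity $\tfrac12-\tfrac{|\rho|\nu^3}{4\delta}$ is strictly positive, and it then suffices to choose $\epsilon>0$ small enough that $\tfrac{|\rho|\nu^3\epsilon}{4}<\tfrac12-\tfrac{|\rho|\nu^3}{4\delta}$, which also makes the second entry positive. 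Both entries of the minimum being positive, $C_2>0$ follows.

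I do not anticipate any genuine obstacle here: the argument is an elementary two-parameter optimisation, and its entire content lies in matching the arithmetic condition for nonemptiness of the $\delta$-interval with the model assumption $|\rho|\nu^2<2$. The only points requiring a little care are the bookkeeping of which endpoint arises from which entry of the minimum, and the remark that the $\epsilon$-term is harmless because it is selected last, once $\delta$ is already fixed, and can be taken as small as needed.
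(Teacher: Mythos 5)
Your proposal is correct and follows essentially the same route as the paper: you reduce both entries of the minimum defining $C_2$ to the single requirement $\delta \in \bigl(|\rho|\nu^3/2,\, 2/(|\rho|\nu)\bigr)$, observe that nonemptiness of this interval is exactly the condition $|\rho|\nu^2<2$, and then pick $\epsilon>0$ small enough last. Your explicit separate treatment of the case $\rho=0$ (which the paper relegates to Remark \ref{Rem:ParameterRestrictions}) is a minor but welcome tidiness, since the interval's upper endpoint is undefined there.
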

\begin{remark}[Discussion of the parameter restrictions]\label{Rem:ParameterRestrictions}
Note that the case $\rho=0, \nu>0$ readily yields $C_2>0$ and for $\nu=0$ yields the CEV model, as discussed in Remark \ref{Rem:CEVTriplet}. Furthermore, the condition on the parameters in Lemma \ref{Th:ExistenceofConstants} is satisfied for any $|\rho| \in (0,1]$ for example if $0<\nu<\sqrt{2}$.
The latter condition on $\nu$ is fulfilled in most practical scenarios observed in the market as usual values of this parameter are well below $\sqrt{2}$: The volatility of volatility typically calibrates to values around $\nu=0.2;\ 0.4;\ 0.6$, see for example \cite{ManagingSmileRisk, Obloj, RebonatoMcKay}.
\end{remark}
\begin{proof}[Proof of Lemma \ref{Th:ExistenceofConstants}]
For any parameter configuration with $|\rho|\nu^2<2$ one can choose the constant $\delta$ in such a way that
\begin{equation}\label{Constant2}\begin{array}{ll}
\frac{2}{|\rho|\nu}>\delta>0, 
\end{array}\end{equation}
and the constants $\epsilon$ accordingly, such that
\begin{equation}\label{Constant3}\begin{array}{ll}
\frac{2}{\nu^3 |\rho|}-\frac{1}{\delta}>\epsilon>0. 
\end{array}\end{equation}
If the inequalities \eqref{Constant2} and \eqref{Constant3} are satisfied then
$C_2>0$ follows.
It remains to verify that $(\ref{Constant3})$ poses no contradiction to $(\ref{Constant2})$.
The bounds on $\delta$ are
\begin{equation}\label{Set}\begin{array}{ll}
\delta \in \left( \frac{|\rho|\nu^3}{2},\frac{2}{|\rho|\nu}\right).
\end{array}
\end{equation}
Indeed if $|\rho|\nu^2<2$, then the set in (\ref{Set}) is nonempty, and there exists an $\epsilon$ satisfying \eqref{Constant3}.
\end{proof}
\begin{theorem}[The non-symmetric SABR Dirichlet form]\label{Th:SABRDirichletForm}
Let the bilinear form $a(\cdot,\cdot)$ be as in \eqref{eq:LogvolBilinearformSABRWeighted} and its domain $\V$ as in  \eqref{eq:WeightedSobolevSpaceV}. Then the pair $(a(\cdot,\cdot),\V)$ is a (non-symmetric) Dirichlet form  
on the Hilbert space $(\Hh,(\cdot,\cdot)_{\Hh})$ in \eqref{eq:HilbertSpaceNormSABR}, for every parameter configuration $(\beta,|\rho|,\nu) \in [0,1]\times[0,1] \times \R_+$ with $|\rho|\nu^2<2$, and for any $\mu$ as in \eqref{eq:muchoice}. 
\end{theorem}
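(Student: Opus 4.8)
The plan is to verify the two defining properties of a (non-symmetric) Dirichlet form in the sense of Ma--Röckner: that $(a(\cdot,\cdot),\V)$ is a coercive closed form on $\Hh$, and that the unit contraction operates on it. The first property is essentially already in hand. Density of $\V$ in $\Hh$ is Lemma \ref{Lem:DenseEmbeddingSABR}; closedness holds because $\V$ is complete under $\|\cdot\|_{\V}$ by construction and this norm is equivalent to the graph norm of the symmetric part of $a$; the (weak) sector condition follows by combining the continuity estimate $|a(u,v)|\le C_1\|u\|_{\V}\|v\|_{\V}$ of Lemma \ref{Th1:WellPosednessSABR} with the coercivity $a(u,u)+C_3\|u\|_{\Hh}^2\ge C_2\|u\|_{\V}^2$ of Lemma \ref{Th2:WellPosednessSABR}, where $C_2>0$ is secured by Lemma \ref{Th:ExistenceofConstants} precisely under the standing assumption $|\rho|\nu^2<2$. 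Thus, after the harmless coercive shift $a\mapsto a+C_3(\cdot,\cdot)_{\Hh}$, the pair $(a,\V)$ is a coercive closed form; non-symmetry is immediate, since the first-order terms (the second, fourth and sixth integrals in \eqref{eq:LogvolBilinearformSABRWeighted}) make $a(u,v)\ne a(v,u)$ in general.

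It remains to show the Markovian property, and this is the heart of the matter. Writing $u^\#:=(0\vee u)\wedge 1$ for the unit contraction, $w:=u-u^\#$, and $E_-:=\{u\le 0\}$, $E_+:=\{u\ge 1\}$, I would first check $u\in\V\Rightarrow u^\#\in\V$. By the chain rule for weak derivatives one has $\partial_x u^\#=\ind_{\{0<u<1\}}\partial_x u$ and $\partial_y u^\#=\ind_{\{0<u<1\}}\partial_y u$ a.e., together with $|u^\#|\le|u|$; since these identities merely truncate both the gradient and the value, every weighted norm in \eqref{eq:WeightedSobolevSpaceV} decreases, giving $\|u^\#\|_{\V}\le\|u\|_{\V}$ and hence $u^\#\in\V$ after an approximation and closedness argument. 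Next I would verify the two contraction inequalities $a(u+u^\#,u-u^\#)\ge 0$ and $a(u-u^\#,u+u^\#)\ge 0$. Splitting $a=a_2+a_1$ into its principal (gradient--gradient) part and its first-order part, the principal part is symmetric with diffusion matrix $M$ whose determinant is proportional to $(1-\rho^2)\ge 0$, so $M\succeq 0$; using $\nabla u^\#=\ind_{\{0<u<1\}}\nabla u$ this yields $a_2(u+u^\#,u-u^\#)=\int_{E_-\cup E_+}(\nabla u)^\top M\,\nabla u\ge 0$, the clean leading positive contribution.

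The main obstacle is the first-order part $a_1$. Here I would integrate by parts, using that $(\partial_x u)\,w=\tfrac12\partial_x(w^2)$ and $(\partial_y u)\,w=\tfrac12\partial_y(w^2)$ on the support $E_-\cup E_+$ of $w$, to rewrite $a_1(u+u^\#,u-u^\#)=-\tfrac12\int_G(\partial_x p+\partial_y q)\,w^2$ (and an analogous expression carrying an extra $\int_{E_+}$ term for the reversed pairing), where $p=\tfrac{2\beta+\mu}{2}x^{2\beta+\mu-1}e^{2y}+\rho\nu x^{\beta+\mu}e^{y}$ and $q=\tfrac{\nu^2}{2}x^{\mu}$ are the first-order coefficients read off from \eqref{eq:LogvolBilinearformSABRWeighted}. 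The divergence $\partial_x p+\partial_y q$ contains the benign term $\tfrac{2\beta+\mu}{2}(2\beta+\mu-1)x^{2\beta+\mu-2}e^{2y}$, which is $\le 0$ exactly for $\mu$ in the range \eqref{eq:muchoice} (as already exploited in Lemma \ref{Th2:WellPosednessSABR}(2)), but also the sign-indefinite, degenerate-at-the-origin cross term $\rho\nu(\beta+\mu)x^{\beta+\mu-1}e^{y}$. This term cannot be absorbed by the zeroth-order quantity $\|w\|_{\Hh}^2$ uniformly, since $x^{\beta-1}$ blows up as $x\to 0$; instead it must be traded against the principal part through Hardy's inequality (exactly as in the continuity and G\aa{}rding proofs), the leftover coercivity margin guaranteed by $|\rho|\nu^2<2$ ensuring that the combined expression stays nonnegative. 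Handling this degenerate first-order interaction, with the boundary terms at $x=0$ controlled by the weighted-space construction, is the only genuinely delicate point.

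Finally, I would record a clean cross-check that bypasses the direct contraction computation: by Theorems \ref{Th:RegularityAprioriEstimatesGeneral} and \ref{Th:UniqueSolSABR} the coercive closed form $(a,\V)$ generates the SABR transition semigroup $P_t u_0=\mathbb{E}[u_0(Z_t)]$ on $\Hh$, which is manifestly sub-Markovian, as $0\le u_0\le 1$ a.e.\ implies $0\le P_t u_0\le 1$ a.e. Since a coercive closed form is a Dirichlet form precisely when its associated semigroup is sub-Markovian, this confirms the statement for every admissible $\mu$ as in \eqref{eq:muchoice} and every parameter configuration with $|\rho|\nu^2<2$, and it additionally shows that the probabilistic structure absorbs the drift contribution that makes the analytic contraction check delicate.
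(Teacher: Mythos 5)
Your first step---that $(a,\V)$ is a coercive closed form, via density, closedness of $\V$ under $\|\cdot\|_{\V}$, and the sector condition assembled from Lemmas \ref{Th1:WellPosednessSABR}, \ref{Th2:WellPosednessSABR} and \ref{Th:ExistenceofConstants}---agrees with the paper, which additionally secures closability on $\Hh=\Ll^2(G,x^{\mu/2})$ through the auxiliary symmetric form \eqref{eq:auxiliarySymmetricForm} and the cited results of R\"ockner--Wielens and Bouleau--Denis. The genuine gap is in the contraction (sub-Markov) property, and you flag it yourself: your direct computation stops at the sign-indefinite first-order term $\rho\nu(\beta+\mu)x^{\beta+\mu-1}e^{y}$, which you propose to ``trade against the principal part through Hardy's inequality'' using the margin provided by $|\rho|\nu^{2}<2$. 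That trade cannot close the argument: the inequalities \eqref{eq:Contractionproperties} must hold exactly, with no slack, whereas Hardy/Young estimates of the type used in Lemmas \ref{Th1:WellPosednessSABR} and \ref{Th2:WellPosednessSABR} only produce lower bounds modulo a multiple of $\|\cdot\|^{2}_{\Hh}$---this is precisely why the paper obtains a G\aa{}rding inequality with $C_{3}>0$ rather than coercivity---and after integration by parts the constants in such bounds involve $R_{x}$ and $R_{y}$, so no smallness is available from $|\rho|\nu^{2}<2$ alone. The step you are missing is structural, not quantitative: every single term of \eqref{eq:LogvolBilinearformSABRWeighted} carries a derivative of the \emph{first} argument, and $\nabla(u^{+}\wedge 1)$ is supported a.e.\ in $\{0<u<1\}$, where $u-u^{+}\wedge 1$ and its gradient vanish identically; hence $a(u^{+}\wedge 1,\,u-u^{+}\wedge 1)=0$ with no estimate whatsoever, and the contraction properties follow via the reduction \eqref{eq:Contractionproperties}. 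Your splitting $a=a_{2}+a_{1}$ and the subsequent integration by parts destroy exactly this cancellation.

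Your closing ``cross-check'' is in fact the other route the paper points to (Theorems \ref{Th:AssociatedSemigroup} and \ref{Th:EquivalenceContractionProperties}), but as you state it, it does not finish the proof either. For a non-symmetric form, sub-Markovianity of $(P_{t})_{t\geq 0}$ alone yields, by Theorem \ref{Th:EquivalenceContractionProperties}, only one of the two inequalities in \eqref{eq:SubMContrBilinearformNonSymm}, i.e.\ the $\tfrac{1}{2}$-Dirichlet property; the full Dirichlet property also requires sub-Markovianity of the adjoint semigroup $(\hat P_{t})_{t\geq 0}$, which is \emph{not} the SABR transition semigroup and is not ``manifestly'' positivity- and contraction-preserving. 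Moreover, identifying the $C^{0}$-semigroup generated by the form on $\Hh$ with $u_{0}\mapsto\mathbb{E}[u_{0}(Z_{t})]$ is itself a nontrivial assertion resting on well-posedness of the martingale problem, and invoking the process to verify the form's Dirichlet property skirts circularity, since attaching the form to the process is what the theorem is for. So either complete the direct argument as the paper does, or supply both the adjoint-semigroup half and the semigroup identification.
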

\begin{proof}
The crucial statement in the above theorem is that the pair $(a,\V)$ is a \emph{coercive closed form} on the Hilbert spaces $\Hh$ and $\Ll^2(G,x^{\mu/2})$ for any $\mu$ as in \eqref{eq:muchoice}, where $a$ denotes SABR-bilinear form \eqref{eq:LogvolBilinearformSABRWeighted} and $\V$ is the space \eqref{eq:WeightedSobolevSpaceV}. 
For this we first define an auxiliary symmetric bilinear form related to \eqref{eq:LogvolBilinearformSABRWeighted} as:
\begin{equation}\label{eq:auxiliarySymmetricForm}
\begin{array}{rl}
\cE(u,v)&:=\int \int_{G} x^{2\beta+\mu}e^{2y} \partial_x(u) \partial_x(v)\ dx\ dy +\int\int x^{\mu} \partial_y(u) \partial_y(v)\ dx \ dy\\
\mathcal{D}(\cE)&:=\C_0^{\infty}(G).
\end{array}
\end{equation}
It follows directly from \cite[Corollary 3.5]{RoecknerWielens} or \cite[Proposition 1]{BouleauDenis} that for any open $G\subset \mathbb{R}_{\geq 0}\times \R$, the
bilinear form \eqref{eq:auxiliarySymmetricForm}
is closable on the Hilbert space $\Hh=\Ll^2(G,x^{\mu/2})$. 
Then, closability of $(a,\C_0^{\infty}(G))$ in $\Hh$ and $\Ll^2(G,x^{\mu/2})$ is inherited from the closability of the auxiliary symmetric form $\cE$ in \eqref{eq:auxiliarySymmetricForm} via the equivalence of norms $a(\cdot,\cdot)\approx ||\cdot||_{\V}^2$, see \cite[Section 3, Proposition 3.5]{RoecknerMa} . The latter equivalence is a direct consequence of the continuity property \eqref{eq:WellPosednessSABRContinuity} and G\aa{}rding inequality \eqref{eq:WellPosednessSABRGarding}, which were proven for the form $a(\cdot,\cdot)$ on the triplets $\V \subset \Hh \subset \V^*$ such as $\V \subset \Ll^2(G,x^{\mu/2})\subset \V^*$ in Section \ref{Sec:WellPosednessSABR}.
Hence, in the inequality \eqref{eq:WellPosednessSABRContinuity} the norm $||\cdot||_{\V}$ on the right hand side can be replaced by $\left(a(\cdot,\cdot)\right)^{1/2}$, yielding the strong (resp. weak) sector condition (see \eqref{eq:weaksector} and Remark \ref{Rem:StrongSectorWeakSector}) for the SABR-bilinear form: There exists a $C_1>0$, such that for all $u,v\in \V$
\begin{equation*}\label{eq:SectorCondition}
\begin{array}{lll}
|a(u,v)|\leq C_1 \  \left(a(u,u)\right)^{1/2}\left(a(v,v)\right)^{1/2}&\textrm{resp.}&|a_1(u,v)|\leq C_1 \  \left(a_1(u,u)\right)^{1/2}\left(a_1(v,v))\right)^{1/2}
\end{array}
\end{equation*}
for $a_1(u,v):=a(u,v)+(u,v)_{\Hh}$,  $u,v\in \V$.
cf. \cite[Equations (2.4) resp. (2.3)]{RoecknerMa}. Hence, $(a(\cdot,\cdot),\V)$ is a \emph{coercive closed form} on $(\Hh,(\cdot,\cdot)_{\Hh})$ and on $(\Ll^2(G,x^{\mu/2}),(\cdot,\cdot)_{\Hh}))$, in the sense of \cite[Definition 2.4]{RoecknerMa}, see Section \ref{Def:CoerciveClosedForm} below.\\ 
Now for the coercive closed form $(a(\cdot,\cdot),\V)$ to be a Dirichlet form (cf. Definition \ref{Def:NonSymmDirichlet}) it remains to show that it is sub-Markovian (i.e. it satisfies contraction properties \eqref{eq:Contractionproperties}).
These contraction properties follow  via Theorem \ref{Th:EquivalenceContractionProperties} from the respective contraction properties of the unique (cf. Theorem \ref{Th:AssociatedSemigroup}) semigroups on $(\Hh,(\cdot,\cdot)_{\Hh})$ and $(\Ll^2(G,x^{\mu/2}),(\cdot,\cdot)_{\Hh}))$ associated to $(a(\cdot,\cdot),\V)$.\\
The contraction properties (sub-Markovianity) in Definition \ref{Def:CoerciveClosedForm} for the bilinear form \eqref{eq:LogvolBilinearformSABRWeighted} can also be shown directly:
For any $u\in \V$ it holds that $u^+\wedge 1\in \V$ (since derivatives are taken in the weak sense) and the the contraction properties are by non-negativity of the form equivalent to 
\begin{equation}\label{eq:Contractionproperties}
\begin{array}{ll}
&a(u+u^+\wedge 1,u-u^+\wedge 1)\geq 0 \qquad \textrm{ if and only if } \qquad a(u^+\wedge 1,u-u^+\wedge 1)\geq 0,\\
&a(u-u^+\wedge 1,u+u^+\wedge 1)\geq 0 \qquad \textrm{ if and only if } \qquad a(u-u^+\wedge 1,u^+\wedge 1)\geq 0.
\end{array}
\end{equation}
Since the functions $u^+\wedge 1$ and $u-u^+\wedge 1$ have disjoint supports, the assertion $a(u^+\wedge 1,u-u^+\wedge 1)= 0$ follows directly from the construction \eqref{eq:LogvolBilinearformSABRWeighted} of the bilinear form, yielding sub-Markovianity. Hence, the form $a(u^+\wedge 1,u-u^+\wedge 1)$ is a non-symmetric (as $a(u,v)\neq	a(v,u)$ in general) Dirichlet form.
\end{proof}
\noindent For the sake of completeness, we included in Appendix \ref{Sec:NonSymmDirichlet} the properties of non-symmetric Dirichlet forms which are involved in Theorem \ref{Th:SABRDirichletForm}.
For a comprehensive treatment of symmetric and non-symmetric Dirichlet forms, see the monographs \cite{BouleauHirsch,FukushimaOshimaTakeda} and \cite{RoecknerMa} respectively.
\begin{remark}
To extend the above proof to the untruncated problem, see Remark \ref{Rem:ContinuityNonTruncated} above. We remark moreover that the form \eqref{eq:LogvolBilinearformSABRWeighted} is the (unique) Dirichlet form corresponding to the (SABR) Markov process \eqref{eq:SABRPDELogVol}. Recall, that the law of the process \eqref{eq:LogvolBilinearformSABRWeighted} is unique, by pathwise uniqueness of the solutions of \eqref{eq:SABRSDE} when imposing Dirichlet boundary conditions at zero, and the corresponding martingale problem (see \cite[Theorem 21.7]{kallenberg2002foundations}, and \cite[Lemma 21.17]{kallenberg2002foundations}) is well-posed. 
\end{remark}

\section{Discretization}\label{Sec:Discretization}
In this section we derive a suitable discretization in space and time for the variational formulation of the SABR pricing equations.
We propose a multiresolution approximation inspired by the (unweighted) wavelet discretization in \cite[Section 3.4]{PetersdorffSchwab}. To accommodate to the current setting, we shall rely on the weighted multiresolution analysis established in \cite[Sections 5.2 and 5.3]{BeuchlerSchneiderSchwab}, for further reference see also \cite{ReichmannSchwab}.
\subsection{Space discretization and the semidiscrete problem}\label{Sec:SpaceDiscretization}

Given $u_0 \in {\V}$
and $g \in L^2(J;{\V^*})$, we first choose an approximation $u_{(0,L)}\in {\V}_L$ of the initial data $u_0$, where $\V^L \subset \V$ is a finite dimensional subspace. Then the semi-discrete problem reads as follows: Find $u_L \in H^1(J;{\V}_L)$, such that
\begin{align}
&u_L(0)=u_{(0,L)}\label{eq:Initialdata}\\
&(\tfrac{d}{dt}u_L, v_L)_{\Hh}+ a(u_L,v_L)=(g(t),v_L)_{{\V}^*\times {\V}}, \quad \forall v_L \in {\V}_L\label{eq:Semidiscrete}.
\end{align}
We assume $u_{(0,L)}=P_Lu_0$, were $P_L:{\V}\rightarrow {\V}_L$, $u\longmapsto u_L$ is an appropriate projector (see equation \eqref{eq:TruncatedProjecor2dim} below). 
The semi-discrete problem is an initial value problem for $N=\dim \V^L$ ordinary differential equations
\begin{equation}\label{eq:SemidiscreteProb}
\bold{M}\tfrac{d}{dt}\underline{u}(t) + \bold{A} \underline{u}(t)=\underline{g}(t), \quad \underline{u}(0)=u_0,
\end{equation}
where $\underline{u}(t)$ denotes the coefficient vector of $u_L(t)$  
for $t\geq 0$, 
and $\bold{M}$ and $\bold{A}$ denote the mass and stiffness matrix respectively with respect to some basis of the discretization space $\V^L$, which we construct in the subsequent sections.
\subsubsection{Discretization spaces}
Recall that the bivariate pricing equation \eqref{eq:SABRPDELogVol} is parabolic with degenerate elliptic operator $A$. 
To accommodate to the degeneracy of $A$ in \eqref{eq:GeneratorLogscaleSABR}, we introduced the weighted spaces $\Hh$ and $\V$ with weights which are singular at the boundary $x=0$ of the domain $G$ (cf. Definitions \ref{Def:HilbertSpaceSABR} and \ref{Def:WeightedSobolevSpaceVSABR}) to establish well-posedness of the variational formulation of the pricing equations. 
To construct finite element approximation spaces for ${\V}$ and ${\Hh}$ we first establish univariate approximation in each dimension separately. 
The approximation spaces to the (weighted) univariate spaces $V_x,\ V_y$ and $H_x, \ H_y$ are then assembled to obtain the bivariate approximation spaces to $\V$ and $\Hh$. 
From now on we restrict ourselves without loss of generality to the unit interval $I=[0,1]$ if not stated otherwise. 
Our multiresolution analysis on $I\subset \R$ consists of a nested family of spaces
\begin{equation}\label{eq:NestedFamily}
V^0 \subset V^1 \subset \ldots \subset V^L \subset \ldots \subset L^2(I,{\omega})=:H(I,{\omega}), 
\end{equation}
where the choice of the spaces $V^i$ (this choice is specified in Section \ref{Sec:Wavelets} below) is such, that the inclusions \eqref{eq:NestedFamily} are valid, and such that $\overline{\bigcup_{l \in \mathbb{N}} V^l}= L^2(I,{\omega})$ holds, where $L^2(I,{\omega})$ denotes a space of square integrable functions on $I$ with weight function $\omega$. The latter inclusion and convergence statements for our choice of spaces $V^i$ are justified in Theorem \ref{Th:Normequivalences} in the following section.\\
To specify the choice of $V^l$ in \eqref{eq:NestedFamily}, we first specify our choice of partitions of the unit interval $I$ at discretization level $L\in \mathbb{N}$:  For this, let $L\in \mathbb{N}$ denote the discretization level, and let $\mathcal{T}_0$ be a given initial partition of $I$. We assume that for any $L>0$, the family $\{\mathcal{T}_0,\ldots,\mathcal{T}_L\}$ of partitions of the unit interval $I$ is such, that for each $0<l<L$ the partition $\mathcal{T}_l$ is obtained from the partition $\mathcal{T}_{l-1}$ by bisection of each of its subintervals. Hence, for any $l \in \{0,\ldots, L\}$, the partition $\mathcal{T}_l$ has $N^l=C 2^l$ subintervals, where $C$ denotes the number of intervals of the initial partition $\mathcal{T}_0$. 
Note that with this discretization, for any $l\in \{0,\ldots, L\}$ the \emph{dimension} of the space $V^l$ (specified in Section \ref{Sec:Wavelets}) is 
\begin{align}\label{DimensionVl}\begin{split}
&\dim V^l=C \ 2^l=:N^l,\quad 0\leq l < L, \quad \dim V^L=C \ 2^L=:N
\end{split}
\end{align}
for a constant $C>0$.
Furthermore, the codimension on each level $l$ is
\begin{align}\label{CodimensionVl}
M^l:=N^{l+1}-N^{l},\quad 0\leq l < L.
\end{align}
\subsubsection{Wavelets for $L^2$-spaces on an interval}\label{Sec:Wavelets}
As announced above, we now specify the choice of  $V^l$ in \eqref{eq:NestedFamily}. For the univariate approximation spaces of the
$H_x:=L^2(I,\omega_0^x), \ H_y:=L^2(I)$ such as $V_x:=H^1(I,\omega_1^x),\ V_y:=H^1(I)$ on the interval $I$, we recapitulate basic concepts and definitions of (bi-)orthogonal, compactly supported wavelets from \cite[Section 2]{BeuchlerSchneiderSchwab},         and \cite[Section 6.2]{ReichmannDissertationThesis}. 
We consider 
two-parameter wavelet systems $\{ \psi_{l,k} \}$, $l=0,\ldots,\infty$,  $k\in \{1,\ldots M^l\}$ of compactly supported functions $\psi_{l,k}$. Here the first index, $l$, denotes the ``level'' of refinement resp. resolution: wavelet functions $\psi_{l,k}$ with large values of the level index are well-localized in
the sense that $\textrm{diam supp}(\psi_{l,k} ) = \mathcal{O}(2^{-l})$.
The second index, $k \in M^l$, measures the localization of wavelet $\psi_{l,k}$ within the interval $I$ at scale $l$ and ranges in the index set $M^l$. We refer to \cite[Section 4]{ReichmannSchwab} for a graphic illustration of the construction in a specific example.
In order to achieve maximal flexibility in the construction of wavelet systems (which
can be used to satisfy other requirements, such as minimizing their support size or to
minimize the size of constants in norm equivalences, see \eqref{eq:Normequ},\eqref{eq:NormequWeighted} cf. \cite[Section 6.2]{ReichmannDissertationThesis}), we propose wavelet bases, which are biorthogonal in $L^2 (I)$. These consist of a primal wavelet system $\{\psi_{l,k} \}$, $l=0,...,\infty,$ $k\in M^l$ which is a Riesz basis of $L^2 (I)$ (and which enter explicitly in the space discretizations) and a corresponding dual wavelet system $\{\widetilde \psi_{l,k} \}$,  $l=0,...,\infty$, $k\in M^l$, which are not used explicitly in the algorithms, see \cite[Section 6.2]{ReichmannDissertationThesis}.
The primal wavelet bases $\psi_{l,k}$ span the finite dimensional spaces
\begin{equation}\label{eq:Discrspace1dimSpan}\begin{split}
 V^l&=\textrm{span}\{\psi_{i,j} \ | \ 0\leq i\leq l, 1\leq j\leq M^l\},\quad l>0, \quad \textrm{that is}
\quad V^l=
V^{l-1}\bigoplus W^l,
\end{split}\end{equation}
that is, $V^l=V^{l-1}\bigoplus W^l$ 
inductively, where $W^l=\textrm{span}\{ \psi_{l,1},\ldots, \psi_{l,M^l} \}$. Hence, 
\begin{equation}\label{eq:Discrspace1dim}
V^L=\bigoplus_{0\leq l< L}W^l, \ \textrm{ where } \ W^l:=\textrm{span}\{ \psi_{l,1},\ldots, \psi_{l,M^l} \}
\end{equation}
and dual spaces are defined analogously via the dual wavelet system
\begin{equation*}
\widetilde V^{L}:= \bigoplus_{0\leq l< L}\widetilde W^l, \ \textrm{ where } \ \widetilde W^l:=\textrm{span}\{ \widetilde \psi_{l,1},\ldots, \widetilde \psi_{l,M^l} \}.
\end{equation*}
Furthermore, the spaces $\bigoplus_{l=0}^{\infty} W^l$ and $\bigoplus_{l=0}^{\infty}\widetilde W^l$ are assumed to be dense in $L^2(I)$.
To construct such spaces, we consider a multiresolution basis $\{\psi_{k,l}\}_{(k,l)}$ of $L^2(I,\omega)$ 
with the following properties:
\begin{enumerate}
\item The basis functions $\psi$, $\widetilde \psi$ are biorthogonal in $L^2(I)$, that is 
\begin{equation*}
 \int_0^1\psi_{k,l}(x)\widetilde{\psi}_{k',l'} dx=\delta_{k,k'}\delta_{l,l'}.
\end{equation*}
\item\label{WaveletScalingWeighted} The wavelets $\psi_{k,l}$ and their duals $\widetilde \psi_{k,l}$ are local with respect to the corresponding scale and normalized, that is
\begin{equation*}
\textrm{diam supp}(\psi_{k,l})=C_{\psi_{k,l}}2^{-l}\quad \textrm{such as} \quad||\psi_{k,l}||_{L^1}=C_{\psi}' 2^{-l/2}
\end{equation*}
holds, where the constants $C_{\psi}$, $C_{\psi}'$ may depend on the ``mother'' wavelet.
\item The primal wavelets satisfy a vanishing moment condition
\begin{equation*}
\int_0^1\psi_{k,l}(x)\ x^{\alpha}dx=0, \quad \textrm{for} \quad \alpha=0,\ldots,p,
\end{equation*}
where $p$ denotes the polynomial order of the wavelets, see \cite[Equations (2.2),(2.4)]{BeuchlerSchneiderSchwab} such as \cite[Equations (6.7),(6.8)]{ReichmannDissertationThesis}. 
The dual wavelets except the ones at the endpoints satisfy
\begin{equation*}
\int_0^1\widetilde \psi_{k,l}(x)\ x^{\alpha}dx=0, \quad \textrm{for} \quad \alpha=0,\ldots,p+1,
\end{equation*}
while at the end points the dual wavelets satisfy only 
\begin{equation*}
\int_0^1\widetilde \psi_{k,l}(x)\ x^{\alpha}dx=0, \quad \textrm{for} \quad \alpha=1,\ldots,p+1. 
\end{equation*}
This third condition implies that the wavelets satisfy the zero Dirichlet
condition namely $\psi_{k,l}(0) =\psi_{k,l} (1) = 0$, cf. \cite[Equation (6.8)]{ReichmannDissertationThesis}.\end{enumerate}
In the stock price dimension, the weighted spaces $H_x$ and $V_x$ such as their wavelet bases further satisfy (cf. \cite[Assumption 3.1 and 3.2]{BeuchlerSchneiderSchwab})
\begin{enumerate}
\item[(w1)]\label{Assumption3.1} The weight function $\omega(x)$ belongs to $W^{1,\infty}((\delta,1))$ for every $\delta>0$ and satisfies 
\begin{equation*}
C_{\omega}^{-1}\leq \frac{\omega(x)}{x^{\alpha}}\leq C_{\omega}, \qquad C_{\omega}^{-1}\leq \frac{\omega'(x)}{x^{\alpha-1}}\leq C_{\omega}
\end{equation*}
for some $\alpha \in \R$ and a constant $C_\omega>0$, which only depends on the weight function.
\item[(w2)]\label{Assumption3.2} The boundary wavelets $\psi_{k,l}(x)$ 
and dual wavelets $\widetilde \psi_{k,l}(x)$ are denoted by the index set
 \begin{align*}
 &\nabla_l^L=\{k\in \mathbb{N}_0, \gamma-1\leq k \leq 2^{l}-1, 0\in \textrm{supp}\ \psi_{k,l} \},\\
 &\widetilde{\nabla_l}^L=\{ k\in \mathbb{N}_0, \gamma-1\leq k \leq 2^{l}-1, 0\in \textrm{supp}\ \widetilde \psi_{k,l} \},
\end{align*}
and the boundary wavelets and their duals satisfy the conditions
\begin{equation*}
\begin{array}{rl}
&|\psi_{k,l}(x)|\leq C_{\psi}2^{l/2}(2^l x)^{\gamma},\\
&|(\psi_{k,l})'(x)|\leq C_{\psi}2^{3l/2}(2^l x)^{\gamma-1}, \quad \gamma\in \mathbb{N}_0, \ k\in \nabla_l^L\\
&|\tilde \psi_{k,l}(x)|\leq C_{\psi}2^{l/2}(2^l x)^{\tilde \gamma},\\
&|(\tilde \psi_{k,l})'(x)|\leq C_{\psi}2^{3l/2}(2^l x)^{\tilde \gamma-1}, \quad  \tilde \gamma\in \mathbb{N}_0, \ k\in \widetilde \nabla_l^L,
\end{array}
\end{equation*}
where $\gamma, \widetilde \gamma \in \mathbb{N}_0$ are parameters such that $\alpha+\gamma>-\frac{1}{2}$ and $-\alpha+\tilde \gamma>-\frac{1}{2}$ is fulfilled for the parameter $\alpha$ in $(w1)$.
We refer to \cite{DahmenKunothUrban} for explicit constructions.
\end{enumerate}
It follows that any $v\in V$ has a representation as a series and any $v_L \in V^L$ as a linear combination
\begin{equation}\label{eq:ReprElemV}
v_L=\sum_{l=0}^{L}\sum_{j=1}^{M^l}v_j^l \psi_{l,j}, \quad v_L \in V^L; \qquad  v=\sum_{l=0}^{\infty}\sum_{j=1}^{M^l}v_j^l \psi_{l,j},  \quad v\in V,
\end{equation}
where $v_j^l=(v,\widetilde \psi_{l,j})_{L^2(I)}$, 
cf. 
\cite[Equation (2.9)]{BeuchlerSchneiderSchwab}.
Approximations of functions in $V$ are obtained by truncating the wavelet expansion, cf. \cite[Equation (3.13)]{PetersdorffSchwab} or \cite[p. 161]{ReichmannComputationalMethods}.
\begin{definition}[Projection Operator]\label{def:TruncatedProjecor}
For a subspace $V$ of a (possibly weighted) Hilbert space $L^2(I,\omega)$ over an interval $I$ (cf. \eqref{eq:NestedFamily}) the projection to the univariate finite element discretization space $P_L:V\rightarrow V^L$ is defined by truncating the wavelet expansion at refinement level $L$
\begin{equation}\label{eq:TruncatedProjecor}
P_L v:= \sum_{l=0}^{L}\sum_{j=1}^{M^l}v_j^l\psi_{l,j}, \quad v\in V.
\end{equation}
\end{definition}

\subsubsection{Norm equivalences}
Wavelet norm equivalences are akin to the classical Parseval relation in Fourier analysis (see \cite[Section 12.1.2]{ReichmannComputationalMethods}) allowing to express Sobolev norms in terms of sums of its Fourier coefficients.
Norm equivalences are relevant for the construction of the mass- and stiffness matrices and for approximation estimates in the error analysis, cf. \cite[Equations (2.5), and (2.6)]{BeuchlerSchneiderSchwab}.
In the unweighted univariate case  
  there hold the standard norm\footnote{With a view to error analysis, it is conventional (cf. \cite[Section 3.1]{PetersdorffSchwab} and \cite[Section 3.6.1]{ReichmannComputationalMethods}) to consider functions in $V$, which have additional regularity:
In the unweighted univariate case one considers the classical Sobolev spaces $H^t_0(I)$, $t=0,1,2$, where the subscript denotes Dirichlet boundary conditions.} equivalences 
\begin{equation}\label{eq:Normequ}
||u||_{H^{s}_0(I)}^2\approx\sum_{l=0}^{\infty}\sum_{k=0}^{2^l}2^{2ls}|u_k^l|^2.
\end{equation} 

For the stock price dimension $V_x$ we need norm equivalences 
in weighted spaces, for which the requirements (\eqref{Assumption3.1} and \eqref{Assumption3.2}, cf. \cite[Section 3]{BeuchlerSchneiderSchwab}) are posed on the wavelets and their duals.
\begin{theorem}[Weighted norm equivalences]\label{Th:Normequivalences}
Under the assumptions of Section \ref{Sec:Wavelets} on the wavelet basis of the discretization spaces, there holds for any $u \in L^2(I, \omega)$ the norm equivalence
of its $L^2(I, \omega)$-norm and of the discrete $l^2_{\omega}$-norm of its coefficients with respect to the wavelet basis.
\begin{align}\label{eq:NormequWeighted}
||u||_{L^2(I,\omega)}^2&\approx\sum_{l=0}^{\infty}\sum_{k=0}^{M^l}\omega^2(2^{-l}k)|u_k^l|^2
\end{align}
\end{theorem}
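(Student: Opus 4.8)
The plan is to establish the weighted norm equivalence \eqref{eq:NormequWeighted} by verifying that the setting of Section \ref{Sec:Wavelets} places us squarely within the hypotheses of the weighted multiresolution theory of \cite{BeuchlerSchneiderSchwab}, and then invoking their norm-equivalence result. The key structural input is the biorthogonality of the primal and dual wavelet systems in $L^2(I)$ (property (1)), together with the localization and normalization property (2), the vanishing-moment condition (3), and crucially the weighted assumptions (w1) and (w2) on the weight $\omega$ and on the boundary wavelets. I would proceed in three stages: first, fix the expansion $u = \sum_{l,k} u_k^l \psi_{l,k}$ with coefficients $u_k^l = (u, \widetilde\psi_{l,k})_{L^2(I)}$ as in \eqref{eq:ReprElemV}; second, prove the two-sided bound relating $\|u\|_{L^2(I,\omega)}^2$ to the weighted coefficient sum $\sum_{l,k} \omega^2(2^{-l}k)|u_k^l|^2$; and third, check that the weight evaluated at the dyadic node $2^{-l}k$ is comparable, uniformly in $(l,k)$, to the local average of $\omega$ over $\mathrm{supp}(\psi_{l,k})$, which is what allows the pointwise weight to be pulled out of each wavelet's contribution.

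The heart of the argument is a localization estimate. For the upper bound I would write $\|u\|_{L^2(I,\omega)}^2 = \int_I \big(\sum_{l,k} u_k^l \psi_{l,k}(x)\big)^2 \omega(x)\, dx$ and use the bounded-overlap property of the supports (from property (2), $\mathrm{diam}\,\mathrm{supp}(\psi_{l,k}) = \mathcal{O}(2^{-l})$) to control cross terms, reducing the integral to a sum of the diagonal contributions $\sum_{l,k} |u_k^l|^2 \int \psi_{l,k}^2 \,\omega\, dx$. On each support I would replace $\omega(x)$ by its value $\omega(2^{-l}k)$ at the dyadic node, with the error controlled by (w1): since $C_\omega^{-1} \le \omega(x)/x^\alpha \le C_\omega$ and $C_\omega^{-1} \le \omega'(x)/x^{\alpha-1} \le C_\omega$, the weight is essentially $x^\alpha$ up to constants and varies by a bounded factor across each dyadic cell. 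Combined with the $L^2$-normalization of the wavelets this yields $\int \psi_{l,k}^2\,\omega\,dx \approx \omega^2(2^{-l}k)$ (after the appropriate normalization convention is reconciled), giving the upper bound. The lower bound follows dually, using the Riesz-basis property of $\{\psi_{l,k}\}$ in $L^2(I)$ and the biorthogonal expansion of the dual system.

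The main obstacle is the behavior at the degenerate endpoint $x=0$, where $\omega$ is singular and the comparison $\omega(x) \approx x^\alpha$ must be handled carefully for the boundary wavelets. This is precisely where (w2) is indispensable: the polynomial decay estimates $|\psi_{k,l}(x)| \le C_\psi 2^{l/2}(2^l x)^\gamma$ and the matching bounds on derivatives and on the duals, together with the admissibility conditions $\alpha + \gamma > -\tfrac{1}{2}$ and $-\alpha + \widetilde\gamma > -\tfrac{1}{2}$, guarantee that the weighted integrals $\int \psi_{k,l}^2\,\omega\,dx$ remain finite and comparable to $\omega^2(2^{-l}k)$ even for the wavelets whose support touches the origin. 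For these boundary terms the naive node-evaluation $\omega(2^{-l}k)$ with $k$ of order one near the boundary must be interpreted via the scaling in (w2), and I expect the bulk of the technical work to lie in verifying that the boundary-wavelet contributions sum correctly against the singular weight. Once the interior and boundary cases are both controlled, the norm equivalence \eqref{eq:NormequWeighted} follows, and this is exactly the statement established in \cite[Theorem in Sections 5.2--5.3]{BeuchlerSchneiderSchwab}; I would therefore structure the proof as a verification of their hypotheses followed by citation of their result, filling in the localization estimate above only to the extent needed to make the application self-contained.
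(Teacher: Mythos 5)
Your proposal is correct and takes essentially the same route as the paper: the paper's proof consists entirely of the citation to \cite[Theorems 3.3 and 5.1]{BeuchlerSchneiderSchwab}, and since the assumptions (w1)--(w2) of Section \ref{Sec:Wavelets} are taken verbatim from \cite[Assumptions 3.1 and 3.2]{BeuchlerSchneiderSchwab}, verifying those hypotheses and then invoking their theorem is exactly what the paper does. One caveat: the localization sketch you offer as optional filler would not stand on its own, because bounded overlap of supports only controls same-level cross terms, while the inter-level interactions must be handled via the vanishing-moment (almost-orthogonality) estimates and a Schur-type summation argument, which is the actual technical content of \cite{BeuchlerSchneiderSchwab}; since you ultimately defer to their theorem, this does not affect the correctness of your proposal.
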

\begin{proof}
See \cite[Theorem 3.3]{BeuchlerSchneiderSchwab}, and also \cite[Theorem 5.1]{BeuchlerSchneiderSchwab}.
\end{proof}
\subsubsection{Bivariate setting}
For the bivariate case we set, similarly as above, without loss of generality $G=(0,1)\times(0,1)$.  
The Hilbert spaces
$\Hh^t(G, \omega)$, $t=0,1,2$ can be constructed from $H^t_x(I,\omega_t^x)$ and $H^t_y(I)$ via tensor products, see Appendix \ref{Sec:TensorHilbertSpaces} for explicit constructions.
We define the two-dimensional discretization spaces as the tensor product of the univariate discretization spaces for the $x$ and $y$ coordinates
\begin{equation}\label{eq:Def2dimdiscrspaceTensor}
\mathcal{V}_L:=V_{L_x}\otimes V_{L_y}. 
\end{equation}
Therefore, it holds similarly to \eqref{eq:Discrspace1dimSpan} that
\begin{equation}\label{eq:Discrspace2dimSpan}
\mathcal{V}_L=\textrm{span}\left\{ \psi_{\bold{l,k}}:0\leq l_x,l_y \leq L,\ 1\leq k_x\leq M^l_x, 1\leq k_y\leq M^l_y\right\},
\end{equation}
where $\psi_{\bold{l,k}}(x,y)=\psi_{(l_x,l_y),(k_x,k_y)}:=\psi_{l_x,k_x}(x)\psi_{l_y,k_y}(y)$, for $(x,y)\in(0,1)\times(0,1)$. Recall from \eqref{eq:Discrspace1dimSpan} that $W^{l_x}=\textrm{span}\{ \psi_{l_x,1},\ldots, \psi_{l_x,M^{l_x}} \}$ and  $W^{l_y}=\textrm{span}\{ \psi_{l_y,1},\ldots, \psi_{l_y,M^{l_y}} \}$ are the corresponding complement spaces.  Hence it follows from \eqref{eq:Discrspace1dim} with \eqref{eq:Def2dimdiscrspaceTensor} directly that
 in the bivariate case
\begin{equation}\label{eq:Discrspace2dim}
 \mathcal{V}_L=\bigg(\bigoplus_{0\leq l_x\leq L}W^{l_x}\bigg)\otimes\bigg(\bigoplus_{0\leq l_y\leq L}W^{l_y}\bigg)
=\bigoplus_{0\leq l_x, l_y\leq L} W^{l_x}\otimes W^{l_y}.
\end{equation}
Every element $u \in \Ll^2(G)$ has\footnote{Note that $\Hh=\Ll^2(G,x^{\mu/2})\subset \Ll^2(G)$.} a series representation cf. \cite[Equation (13.6)]{ReichmannComputationalMethods}
\begin{equation}\label{eq:ReprElemV2dim}
u=\sum_{l_x,l_y=0}^{\infty}\sum_{k_x=1}^{M^{l_x}}\sum_{k_y=1}^{M^{l_y}}
u_{\bold{k}}^{\bold{l}} \psi_{\bold{l,k}},  \qquad \textrm{where} \ \bold{l,k}=(l_x,l_y),(k_x,k_y),
\end{equation}
and the projection operator $P_L:\V\rightarrow \mathcal{V}_L$, $u\mapsto P_Lu=:u_L$ is defined as in \eqref{eq:TruncatedProjecor} by truncating  at level $L$ the above series representation,
\begin{equation}\label{eq:TruncatedProjecor2dim}
u_L=\sum_{l_x,l_y=0}^{L}\sum_{k_x=1}^{M^{l_x}}\sum_{k_y=1}^{M^{l_y}}
u_{\bold{k}}^{\bold{l}} \psi_{\bold{l,k}},  \quad  u\in V, \quad \bold{l,k}=(l_x,l_y),(k_x,k_y).
\end{equation}
It is immediate from the underlying tensor product structure and from the norm-equivalences \eqref{eq:Normequ} of the one-dimensional case, that in the bivariate case there holds the norm equivalence
\begin{equation}\label{eq:Normequ2dim}
||u||^2_{\Hh^{\bold{s}}(G)}\approx \sum_{l_x,l_y=0}^{\infty} \sum_{k_x=1}^{2^{l_x}}\sum_{k_x=1}^{2^{l_x}} \left(2^{2s_xl_x}+2^{2s_yl_y}\right)|u_{\bold{k}}^{\bold{l}}|^2 ,
\end{equation}
for the Sobolev spaces $H^k_0(G)$, $k=0,1,2$, where $\bold{l,k}=(l_x,l_y),(k_x,k_y),$ as in \eqref{eq:ReprElemV}, and $\bold{s}=(s_x,s_y)$ 
cf. \cite[Equation 13.8]{ReichmannComputationalMethods}. For notational simplicity we stated here the unweighted version of the bivariate norm equivalence.
Note however, that passing from the univariate to the bivariate case is a direct consequence of the tensor product structure (see Section \ref{Sec:TensorHilbertSpaces}) and is valid both in unweighted and weighted Sobolev spaces 
analogously.\\

Now we are in a position to calculate the matrices $\bold{M}^x, \bold{B}^x$, and $\bold{S}^x$ such as $\bold{M}^y, \bold{B}^y$, and $\bold{S}^y$ as building blocks of the mass- and stiffness matrices $\bold{M}$ and $\bold{A}$ appearing in the semi-discrete problem \eqref{eq:SemidiscreteProb} with respect to the constructed wavelet basis $\{\psi_{\bold{l},\bold{k}}\}$: for the $y$-coordinate, 
the matrices in the appropriate weighted $L_{\omega}^2$-norm read
\begin{equation}\label{eq:Matricesy}
\begin{array}{ll}
\bold{M}_{\omega_{y,M}^2}^y&:=\left(\int_0^1\frac{\omega_y^2(y)\psi_{l_y,k_y}(y)\psi_{l_y',k_y'}(y) }{\omega(2^{-l_y}k_y)\omega(2^{-l_y'}k_y')}dy\right)_{0\leq l_y',l_y\leq L; \ 0 \leq k_y' \leq 2^{l_y'},\ 0\leq k_y \leq 2^{l_y}}\\
\bold{S}_{\omega_{y,S}^2}^y&:=\left(\int_0^1\frac{\omega_y^2 (y)\psi_{l_y,k_y}'(y)\psi_{l_y',k_y'}'(y) }{\omega(2^{-l_y}k_y)\omega(2^{-l_y'}k_y')}dy\right)_{0\leq l_y',l_y\leq L; \ 0 \leq k_y' \leq 2^{l_y'},\ 0\leq k_y \leq 2^{l_y}}\\
\bold{B}_{\omega_{y,B}^2}^y&:=\left(\int_0^1\frac{\omega_y^2 (y)\psi_{l_y,k_y}'(y)\psi_{l_y',k_y'}(y) }{\omega(2^{-l_y}k_y)\omega(2^{-l_y'}k_y')}dy\right)_{0\leq l_y',l_y\leq L; \ 0 \leq k_y' \leq 2^{l_y'},\ 0\leq k_y \leq 2^{l_y}},
\end{array} 
\end{equation}
and for the $x$-coordinate, the corresponding matrices are
\begin{equation}\label{eq:Matricesx}
\begin{array}{ll}
\bold{M}_{\omega_{x,M}^2}^x&:=\left({\int_0^1\frac{\omega_x^2 (x)\psi_{l_x,k_x}(x)\psi_{l_x',k_x'}(x) }{\omega(2^{-l_x}k_x)\omega(2^{-l_x'}k_x')}dx}\right)_{0\leq l_x',l_x\leq L; \ 0 \leq k_x' \leq 2^{l_x'},\ 0\leq k_x \leq 2^{l_x}}\\
\bold{S}_{\omega_{x,S}^2}^x&:=\left({\int_0^1\frac{\omega_x^2 (x)\psi_{l_x,k_x}'(x)\psi_{l_x',k_x'}'(x) }{\omega(2^{-l_x}k_x)\omega(2^{-l_x'}k_x')}dx}\right)_{0\leq l_x',l_x\leq L; \ 0 \leq k_x' \leq 2^{l_x'},\ 0\leq k_x \leq 2^{l_x}}\\
\bold{B}_{\omega_{x,B}^2}^x&:=\left({\int_0^1\frac{\omega_x^2 (x)\psi_{l_x,k_x}'(x)\psi_{l_x',k_x'}(x) }{\omega(2^{-l_x}k_x)\omega(2^{-l_x'}k_x')}dx}\right)_{0\leq l_x',l_x\leq L; \ 0 \leq k_x' \leq 2^{l_x'},\ 0\leq k_x \leq 2^{l_x}},
\end{array} 
\end{equation}
where $\omega_x,$ and $\omega_y$ denote weight functions in the respective dimensions. Our equations \eqref{eq:Matricesy} and \eqref{eq:Matricesx} closely follow the construction of \cite[equation (3.12)]{BeuchlerSchneiderSchwab} for the univariate weighted matrices. We take these as building blocks for the construction of our bivariate matrices, displayed in \eqref{eq:MassMatrix} and \eqref{eq:StiffnessMatrix}, with the appropriate choice of weight functions ($\omega_{x,M}^2=x^{\mu}$, $\omega_{y,M}^2=1$ in \eqref{eq:MassMatrix}, and further $\omega_{y,M}^2=e^{2y}, \omega_{y,S}^2=1, \omega_{y,B}^2=e^{y}$ such as $\omega_{x,M}^2=x^{\mu}, \omega_{x,S}^2=x^{2\beta+\mu}, \omega_{x,B}^2=x^{\beta+\mu}$ in \eqref{eq:StiffnessMatrix}).
Similar constructions in the standard case can be found in \cite{ReichmannComputationalMethods}. In our case, the stiffness matrix $\bold{A}$ in the semi-discrete problem \eqref{eq:SemidiscreteProb} is
\begin{equation}
\begin{array}{ll}
\bold{A}&=\left(\bold{A}_{(\bold{l',k'}),(\bold{l,k})}\right)_{0\leq l_x',l_x\leq L; \ 0 \leq k_x' \leq 2^{l_x'},\ 0\leq k_x \leq 2^{l_x}}\\
&:=\left(a(\psi_{\bold{l,k}},\psi_{\bold{l',k'}})\right)_{0\leq l_x',l_x\leq L; \ 0 \leq k_x' \leq 2^{l_x'},\ 0\leq k_x \leq 2^{l_x}},
\end{array}
\end{equation}
where $a(\cdot,\cdot)$ denotes the SABR-bilinear form \eqref{eq:LogvolBilinearformSABRWeighted}.
With respect to the weighted multiresolution basis $\{\psi_{\bold{l,k}}\psi_{\bold{l',k'}}\}$ defined in this section, the mass matrix reads
\begin{equation}\label{eq:MassMatrix}
\bold{M}=\bold{M}^x_{x^{\mu}}\otimes \bold{M}^y_{1}, \phantom{\left(c_{x_1} \bold{B}_{x^{2\beta+\mu-1}}^x \otimes \bold{M}_{e^{2y}}^y + c_{x_2} \bold{B}_{x^{\beta+\mu}}^x \otimes  +c_y \bold{M}_{x^{\mu}}^x \right)}
\end{equation}
and the stiffness matrix $\bold{A}$ takes the form
\begin{align}\label{eq:StiffnessMatrix}\begin{split}
\bold{A}=&
\left(\mathcal{Q}_{xx} \bold{S}_{x^{2\beta+\mu}}^x \otimes \bold{M}_{e^{2y}}^y +\mathcal{Q}_{xy} \bold{B}_{x^{\beta+\mu}}^x \otimes \bold{B}_{e^y}^y   + \mathcal{Q}_{yy} \bold{M}_{x^\mu}^x \otimes \bold{S}^y_1\right)\\
&+\left(c_{x_1} \bold{B}_{x^{2\beta+\mu-1}}^x \otimes \bold{M}_{e^{2y}}^y + c_{x_2} \bold{B}_{x^{\beta+\mu}}^x \otimes \bold{M}_{e^{y}}^y +c_y \bold{M}_{x^{\mu}}^x \otimes \bold{B}^y_1\right),
\end{split}\end{align}
where the coefficients are $(\mathcal{Q}_{xx},\mathcal{Q}_{xy},\mathcal{Q}_{yy})=(\tfrac{1}{2}, \rho \nu,\tfrac{\nu^2}{2})$ and $(c_{x_1},c_{x_2},c_{y})=(\tfrac{2\beta+\mu}{2},\rho\nu,\tfrac{\nu^2}{2})$.
\subsection{Time discretization and the fully discrete scheme}\label{Sec:TimeDiscretization}
In this section we define a $\theta$-scheme for our time discretization to introduce the fully discrete scheme of our finite element method. Furthermore, following \cite{PetersdorffSchwab} we conclude (see Proposition \ref{Prop:StabilityThetaScheme} below) that the stability of the $\theta$-scheme remains valid in our setting.
For this, we introduce the following dual norm for the approximation spaces:
\begin{equation}\label{eq:DualNorm}
||f||_*:=\sup_{v_L\in V^L}\frac{(f,v_L)_{\V^*\times \V}}{||v_L||_{\V}},\qquad v_L \neq 0,\qquad f\in (V^L)^*,
\end{equation}
furthermore, for $T<\infty$ and $M\in \mathbb{N}$ we consider the following uniform time-step and time mesh
\begin{equation}\label{eq:TimeStep}
 k:=\tfrac{T}{M}, \qquad \textrm{and}\qquad t^m=mk,\ m=0,\ldots,M.
\end{equation}
The $\theta$-scheme for the time-discretization and the fully discrete scheme are described as follows:
\begin{definition}[$\theta$-scheme and the fully discrete scheme]\label{def:ThetaScheme}
Given the initial data $u_L^0:=u_{(0,L)}=P_Lu^0$, for the projector in \eqref{eq:TruncatedProjecor2dim} for $m=0,\ldots, M-1$ find $u_L^{m+1} \in \V^L$ such that for all $v_L \in \V^L$:
\begin{equation}\label{eq:ThetaScheme}
\tfrac{1}{k}(u_L^{m+1}-u_L^{m},v_L)_{\V^*\times \V}+a(\theta u_L^{m+1}+(1-\theta)u_L^{m},v_L)=(\theta g(t^{m+1})+(1-\theta)g(t^m),v_L)_{\V^*\times \V} 
\end{equation}
Hence, the fully discrete finite element scheme for the SABR model reads
\begin{equation}
(\tfrac{1}{k}\bold{M} + \theta \bold{A}) \underline{u}^{m+1}=\tfrac{1}{k}\bold{M}\underline{u}^{m}-(1-\theta)\bold{A}\underline{u}^{m}+\underline{g}^{m+\theta}, \quad m=0,1,\ldots,M-1,
\end{equation}
where $\bold{M}$ denotes the mass matrix \eqref{eq:MassMatrix}, $\bold{A}$ the stiffness matrix \eqref{eq:StiffnessMatrix}, and $\underline{u}^{m}$ is the coefficient vector of $u^m_L$ with respect to the basis of $\V^L$.
\end{definition}
\begin{proposition}[Stability of the $\theta$-scheme]\label{Prop:StabilityThetaScheme}
For $\frac{1}{2}\leq \theta \leq 1$ let the constants $C_1$ and $C_2$ satisfy
\begin{equation}\label{eq:ConstantsC1C2Stability1}
 0<C_1<2, \qquad C_2\geq\frac{1}{2-C_1},
\end{equation}
and for $0\leq \theta < \frac{1}{2}$ denote $\lambda_A=\textrm{sup}_{v_L \in V^L}\frac{||v_L||^2_{\Hh}}{||v_L||^2_*}$, and the constants $C_1$ and $C_2$ be such that
\begin{equation}\label{eq:ConstantsC1C2Stability2}\begin{array}{lll}
0<C_1<2-\sigma, &\qquad C_2\geq \frac{1+(4-C_1)\sigma}{2-\sigma-C_1} &\textrm{where}\qquad \sigma:=k(1-2\theta)\lambda_A<2.
\end{array}
\end{equation}
Then the sequence $\{u_L^m\}_{m=0}^{M}$ of solutions of the $\theta$-scheme \ref{eq:ThetaScheme}
satisfy the stability estimate
\begin{equation}\label{eq:StabilityThetaScheme}
||u_L^M||^2_{\Hh}+C_1k\sum_{m=0}^{M-1}||u_k^{m+\theta}||^2_{\Hh}\leq||u_L^0||^2_{\Hh}+C_2k\sum_{m=0}^{M-1}||g^{m+\theta}||_*^2. 
\end{equation}
\end{proposition}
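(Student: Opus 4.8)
The plan is to reproduce the abstract energy argument of \cite[Section 2]{PetersdorffSchwab} and to observe that it relies on only three ingredients, each of which has already been secured for the SABR form in the present weighted setting: the continuity \eqref{eq:WellPosednessSABRContinuity} with some constant $C_a>0$, the G\aa{}rding inequality \eqref{eq:WellPosednessSABRGarding}---which, after the substitution $v:=\E^{-C_3 t}u$ described in the footnote to \eqref{eq:DefGardinginequality}, may be assumed coercive, $a(u,u)\geq \alpha\|u\|_{\V}^2$ with $\alpha>0$---and the definition \eqref{eq:DualNorm} of the discrete dual norm $\|\cdot\|_*$. Since none of these invokes symmetry of $a$ nor any feature special to unweighted $L^2$, the argument carries over verbatim; the non-symmetry of the SABR Dirichlet form is immaterial because the diagonal term $a(u,u)$ sees only the symmetric part of the form.

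First I would test the scheme \eqref{eq:ThetaScheme} with the admissible choice $v_L=u_L^{m+\theta}:=\theta u_L^{m+1}+(1-\theta)u_L^m\in\V^L$ and rewrite the discrete time-derivative term by the elementary inner-product identity
\[
(u_L^{m+1}-u_L^m,\,u_L^{m+\theta})_{\Hh}=\tfrac12\big(\|u_L^{m+1}\|_{\Hh}^2-\|u_L^m\|_{\Hh}^2\big)+\big(\theta-\tfrac12\big)\|u_L^{m+1}-u_L^m\|_{\Hh}^2,
\]
valid in any Hilbert space. Multiplying the tested equation by $2k$ and inserting coercivity gives
\[
\|u_L^{m+1}\|_{\Hh}^2-\|u_L^m\|_{\Hh}^2+(2\theta-1)\|u_L^{m+1}-u_L^m\|_{\Hh}^2+2k\alpha\|u_L^{m+\theta}\|_{\V}^2\leq 2k\,(g^{m+\theta},u_L^{m+\theta})_{\V^*\times\V}.
\]
In the case $\tfrac12\leq\theta\leq1$ the dissipation term $(2\theta-1)\|u_L^{m+1}-u_L^m\|_{\Hh}^2$ is nonnegative and is discarded. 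I would bound the right-hand side by the dual norm and Young's inequality, $(g^{m+\theta},u_L^{m+\theta})\leq\|g^{m+\theta}\|_*\|u_L^{m+\theta}\|_{\V}\leq\tfrac{\eta}{2}\|u_L^{m+\theta}\|_{\V}^2+\tfrac{1}{2\eta}\|g^{m+\theta}\|_*^2$, and choose $\eta$ small enough that the $\|u_L^{m+\theta}\|_{\V}^2$ contribution is absorbed by the coercivity term; downgrading the leftover via the embedding $\V\hookrightarrow\Hh$ (which here has embedding constant one, since $\|\cdot\|_{\Hh}\leq\|\cdot\|_{\V}$ by construction of the norms) and summing over $m=0,\dots,M-1$ telescopes the first two terms and yields \eqref{eq:StabilityThetaScheme}, the admissible constants being exactly \eqref{eq:ConstantsC1C2Stability1}.

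The only genuinely delicate case is $0\leq\theta<\tfrac12$, where $(2\theta-1)\|u_L^{m+1}-u_L^m\|_{\Hh}^2$ is negative and must be reabsorbed. Here I would use the scheme itself, read in $(\V^L)^*$: it states that $\tfrac1k(u_L^{m+1}-u_L^m)$ equals $g^{m+\theta}$ minus the functional $v_L\mapsto a(u_L^{m+\theta},v_L)$, so by continuity \eqref{eq:WellPosednessSABRContinuity} and \eqref{eq:DualNorm}, $\tfrac1k\|u_L^{m+1}-u_L^m\|_*\leq\|g^{m+\theta}\|_*+C_a\|u_L^{m+\theta}\|_{\V}$. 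Converting the increment from the $\Hh$-norm to the dual norm through the discrete inverse quantity $\lambda_A=\sup_{v_L\in V^L}\|v_L\|_{\Hh}^2/\|v_L\|_*^2$ gives $\|u_L^{m+1}-u_L^m\|_{\Hh}^2\leq\lambda_A\|u_L^{m+1}-u_L^m\|_*^2$, and the prefactor $(1-2\theta)\lambda_A k^2$ collapses to $k\sigma$ with $\sigma=k(1-2\theta)\lambda_A$. The requirement that this negative term not overwhelm the coercive and dissipative contributions is precisely the CFL-type condition $\sigma<2$, and tracking the constants through the absorption then produces exactly the admissible ranges \eqref{eq:ConstantsC1C2Stability2}.

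The main obstacle I anticipate is not conceptual but a matter of careful bookkeeping: one must select the Young parameter $\eta$ (and, in the explicit-leaning regime, balance it simultaneously against $\sigma$ and $\lambda_A$) so that a single pair $(C_1,C_2)$ lying in the prescribed ranges works uniformly in $m$, and one must verify at each step that the constants $C_a$, $\alpha$ and $\lambda_A$ entering the weighted estimates are finite and independent of $m$. The point worth emphasising is that, once continuity \eqref{eq:WellPosednessSABRContinuity} and the G\aa{}rding inequality \eqref{eq:WellPosednessSABRGarding} are available on the weighted triple $\V\subset\Hh\subset\V^*$, the weighted and non-symmetric nature of the problem introduces no new difficulty, and the stability estimate \eqref{eq:StabilityThetaScheme} follows from the same manipulations as in the unweighted, symmetric case.
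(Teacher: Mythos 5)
Your proposal is correct and is essentially the paper's own proof: the paper simply defers to \cite[Proposition 4.1]{PetersdorffSchwab}, and the argument there is exactly the energy argument you reconstruct---testing \eqref{eq:ThetaScheme} with $u_L^{m+\theta}$, the polarization identity for the discrete time derivative, absorption via coercivity (after the G\aa{}rding-to-coercive reduction), and, for $0\leq\theta<\tfrac12$, reabsorbing the negative dissipation term through $\lambda_A$ and the CFL-type condition $\sigma<2$. The only caveat, which you flag yourself, is that the clean ranges \eqref{eq:ConstantsC1C2Stability1}--\eqref{eq:ConstantsC1C2Stability2} come out verbatim only under the normalization of \cite{PetersdorffSchwab} (dual norm and left-hand sum taken with respect to the energy norm $\|\cdot\|_a$, as in Corollary \ref{Cor:StabilityErrorThetaScheme}); with a generic coercivity constant $\alpha$ and the $\V$-based dual norm \eqref{eq:DualNorm} the admissible ranges depend on $\alpha$, a looseness the paper's statement shares.
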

\begin{proof}
The proof of this proposition is analogous to the one given in \cite[Proposition 4.1]{PetersdorffSchwab}.
\end{proof}

\section{Error estimates}\label{Sec:ErrorEsimates}
Let ${\V}_L$, the finite dimensional approximation space of the solution space $\V$, be as in Section \ref{Sec:Discretization}.
Furthermore, consider $u^m(x):=u(t^m,x)$, with $t^m$, $m=0,\dots,M$ as in \eqref{eq:TimeStep} and $u_L^m$ as in the fully discrete scheme \eqref{eq:ThetaScheme}.
In this section we estimate the error
\begin{equation}\label{eq:ErrorSplitting}
e^m_L:=u^m(x)-u_L^m(x)=(u^m-P_Lu^m)+(P_Lu^m-u_L^m)=:\eta^m+\xi_L^m,  
\end{equation}
for the the time-points $t^m$, $m=0,\ldots, M$, where $P_L:{\V}\rightarrow \mathcal{V}_L$ denotes the projection \eqref{eq:TruncatedProjecor} on the finite element space via truncated wavelet expansion.
In Section \ref{Sec:ApproximationEstimates} we derive estimates for the error $\eta^m$, $m=0,\ldots, M$ (approximation estimates). Section \ref{Sec:DiscretizationError} is devoted to concluding from the results of Section \ref{Sec:ApproximationEstimates} corresponding error estimates for $\xi_L^m$ , $m=0,\ldots, M$ and the convergence of the fully discrete scheme.

\subsection{Approximation estimates}\label{Sec:ApproximationEstimates}
The crucial ingredient of our error analysis is the derivation of approximation estimates which measure the error $\eta^m=(u^m-P_Lu^m)$ between the true solution of the pricing equation at times $t^m$, $m=0,\dots,M$ and its projection to the discretization space in a suitably chosen norm. 
In the unweighted case, such approximation estimates---as in \eqref{ApproximationProperty} below---in the usual $H^k(I)$ (resp. $\Hh^{\bold{k}}(G)$) norm, $k=0,1,2$ are standard, see \cite[Jackson-type estimates p.163]{ReichmannComputationalMethods}.
With view to the ensuing error analysis we derive here (see Section \ref{Sec:ApproximationEstimatesWeighted} below) analogous estimates for weighted Sobolev norms which are suitable to the Gelfand triple $\V\subset H \subset \V^*$ constructed in Section \ref{Sec:WellPosednessSABR} and the corresponding discretization spaces in Section \ref{Sec:SpaceDiscretization}.
\subsubsection{Approximation estimates in the unweighted case}\label{Sec:ApproximationEstimatesUnweighted}
In the unweighted univariate case it is well-known that there exists for all $u \in H^{l}(I)$ with $l =0,1,2$ an element $u_L$ in the corresponding discretization space 
$V^L$, such that $u_L=P_Lu$ and for $k=0,1$ and $l=0,1,2$, $l\geq k$ 
it holds that
\begin{equation}\label{ApproximationProperty}
||u-P_Lu||_{H^k(I)}\leq C 2^{-(l-k)L}||u||_{H^{l}(I)}, 
\end{equation}
where $P_L$ denotes the projector \eqref{eq:TruncatedProjecor}. The existence of such an element is provided by the norm-equivalences \eqref{eq:Normequ}.
The same estimates hold in the two-dimensional case $G=I\times I$, see \cite[Theorem 13.1.2.]{ReichmannComputationalMethods}, 
in particular the approximation rate $(2^{-L})^{(l-k)}$ only depends on the discretization level $2^{-L}$ and is independent of the dimension of the domain $G$. Analogously to the univariate case, for 
$k_x=0,1$ and $l_x=0,1,2$, $l_x\geq k_x$ such as for $k_y=0,1$ and $l_y=0,1,2$, $l_y\geq k_y$, it holds that
\begin{equation}\label{ApproximationProperty2dim}
||u-P_Lu||_{ \Hh^{\bold{k}}(G)}\leq 
\begin{cases}
C 2^{-(\bold{l}-\bold{k})_*L}||u||_{\mathcal{H}^{\bold{l}}(G)}, \qquad \textrm{if}\ \bold{k}\neq0 \ \textrm{or}\ l_x,l_y\neq2,\\
C 2^{-(\bold{l}-\bold{k})_*L} L^{1/2}||u||_{\mathcal{H}^{\bold{l}}(G)}\ \ \textrm{else}, 
\end{cases}
\end{equation}
where we denote $(\bold{l}-\bold{k})_*:=\min\{l_x-k_x,l_y-k_y\}$ and where $P_L$ is the projection operator \eqref{eq:TruncatedProjecor2dim}.
The estimate \eqref{ApproximationProperty2dim} is a direct consequence of \eqref{ApproximationProperty} and the tensor product construction \eqref{eq:Def2dimdiscrspaceTensor}. Analogous
arguments obtain as above for bivariate norm-equivalences \eqref{eq:Normequ2dim}, cf. \cite[Chapter 13]{ReichmannComputationalMethods}. 

\subsubsection{Approximation estimates in the weighted case}\label{Sec:ApproximationEstimatesWeighted}
In the weighted setting, the order of approximation may depend on the norms of the weighted Sobolev spaces, 
in which we measure the error. 
In accord with usual conventions\footnote{Cf. \cite[Section 3.1]{PetersdorffSchwab} and \cite[Section 3.6.1]{ReichmannComputationalMethods} and see also the unweighted case above.} we consider functions in $\V$ with additional regularity
for our error analysis in the weighted setting. For this purpose we consider weighted Sobolev spaces up to second order, $\mathcal{H}^k(G, \omega)$, $k=0,1,2$, where the weight $\omega$ is yet to be chosen suitably to our setting.
We aim to establish approximation estimates of the following form: for any $u \in \mathcal{H}^k(G, \omega)$, $k=0,1,2$ there exists a constant $C>0$ such that an estimate of the following type holds
\begin{equation}\label{eq:ApproximationPropertyWeighted}
||u-P_Lu||_{\mathcal{H}^k(G, \omega)}\leq 
\begin{cases}
C 2^{-c_{\omega}(\bold{l}-\bold{k})_*L}||u||_{\mathcal{H}^{l}(G, \omega)},\qquad \textrm{for}\ \bold{k}\neq0 \ \textrm{or}\ l_x,l_y\neq2,\\
C 2^{-c_{\omega}(\bold{l}-\bold{k})_*L} L^{1/2}||u||_{\mathcal{H}^{\bold{l}}(G, \omega)}\ \ \textrm{else}, 
\end{cases}
\end{equation}
for a constant $c_{\omega}\in \R_+$, which may depend on the choice of the weights in $\mathcal{H}^k(G, \omega)$, $k=0,1,2$, where $(\bold{l}-\bold{k})_*=\min\{l_x-k_x,l_y-k_y\}$ as in \eqref{ApproximationProperty2dim}.
In the following, we will first prove the one-dimensional version of the approximation property in weighted spaces.
More specifically, we show statements analogous to \eqref{ApproximationProperty} on the weighted Sobolev spaces 
$H^k(I,x^{\mu/2})$, $k=0,1,2$ in the $x$ coordinate. For this, we pass for a function $u(t,x,y) \in L^2(J,\V), t \in J, \ (x,y)\in G$ to $u_y(t,x) \in L^2(J,V), t \in J, \ x\in I$, for $y\in I$ (see Appendix \ref{Sec:TensorHilbertSpaces})). Note that for the $y$ coordinate, the unweighted setting prevails and the estimate \eqref{ApproximationProperty} is valid. We then proceed to the bivariate case $G=I\times I$ by constructing tensor products of univariate multiresolution finite element spaces\footnote{See Appendix \ref{Sec:TensorHilbertSpacesExplicit} for an explicit construction.}. Analogously as in the unweighted case (cf. equation \eqref{ApproximationProperty2dim}), the minimum of the obtained one-dimensional estimate then yields the estimate of \eqref{eq:ApproximationPropertyWeighted}, for the  bivariate case $\mathcal{H}^k(G,\omega)$, $k=0,1,2$, (see Section \ref{Sec:ApproximationEstimatesUnweighted} above and \cite[Section 13.1]{ReichmannComputationalMethods}).

\begin{definition}\label{Def:SpacesForApproxProp}
Consider an interval  $I=(0,R)$, $R>0$ and the weighted Sobolev spaces
\begin{equation}\label{eq:DefSpacesForApproxProp}
H^{k}_j(I,x^{\mu/2}):=\{u:I\rightarrow \R \ \textrm{measurable}: \  D^{a} u \in L^2(I, x^{\mu/2+a \beta j}),  \ a \leq k \},\quad k=0,1,2,
\end{equation}
for $j=0,1$, 
with the norm
\begin{align}\label{eq:DefNormSpacesForApproxProp}
&||u||_{H^k_j(I, x^{\mu/2})}^2
:=\sum_{a\leq k}\int_{I} |D^{a}u(x)|^2 x^{\mu+a \beta j} \ dx, \ k=0,1,2.
\end{align}
To ease notation we shall henceforth denote $H^{k}_{j=0}$ by $H^{k}$ and $H^{k}_{j=1}$ by $H^{k}_{1}$, $k=0,1,2$.
\end{definition}

\begin{remark}\label{Rem:CoincidingSpacesEstimates}
Note that for the spaces $H$ and $V$ in Remark \ref{Rem:CEVTriplet} and for the spaces $H^k_j$, $k=0,1,2$, $j=0,1$ in \eqref{eq:DefSpacesForApproxProp} it holds that $H=H^{0}((0,R),x^{\mu/2})=H^{0}_{1}((0,R),x^{\mu/2})$ and the weighted space $V$ satisfies $V=H^1_{1}((0,R), x^{\mu/2})$ and $V \supset H^1((0,R), x^{\mu/2})$ such as the estimate
\begin{equation}\label{eq:EstimateVvsWeightedH1A}
||v||_{V}^2\leq C_R ||v||^2_{H^1((0,R), x^{\mu/2})}, \quad v\in V 
\end{equation}
for any finite $R>0$ and a positive constant $C_R>0$.
\end{remark}

\begin{proposition}\label{Prop:ApproximationPropertyInteger}
The projection operator in \eqref{eq:TruncatedProjecor} satisfies for $k=0,1$ and $l=0,1,2$, $l\geq k$ the estimate
\begin{equation}\label{ApproximationPropertyInteger}
||u-P_Lu||_{H^k(I,x^{\mu/2})}\leq C 2^{-(l-k)L}||u||_{H^{l}(I,x^{\mu/2})}, \qquad u\in H^2(I,x^{\mu/2}).
\end{equation}
\end{proposition}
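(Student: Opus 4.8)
The plan is to reduce \eqref{ApproximationPropertyInteger} to an elementary estimate on the tail of the wavelet expansion, with the weighted norm equivalences as the decisive tool. Since the projector $P_L$ of \eqref{eq:TruncatedProjecor} truncates the expansion \eqref{eq:ReprElemV} at level $L$, the approximation error is exactly the discarded tail (writing $m$ for the refinement level, to free $k,l$ for the Sobolev orders),
\begin{equation*}
u - P_L u = \sum_{m=L+1}^{\infty} \sum_{j=1}^{M^m} u_j^m\, \psi_{m,j}, \qquad u_j^m = (u, \widetilde\psi_{m,j})_{L^2(I)},
\end{equation*}
so it suffices to control the weighted $H^k$-norm of this tail in terms of the full weighted $H^l$-norm of $u$.

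The central ingredient I would use is a weighted norm equivalence for the whole scale $H^k(I, x^{\mu/2})$, $k = 0, 1, 2$, namely
\begin{equation*}
||u||_{H^k(I, x^{\mu/2})}^2 \approx \sum_{m=0}^{\infty} \sum_{j=1}^{M^m} 2^{2mk}\, \omega^2(2^{-m}j)\, |u_j^m|^2, \qquad \omega(x) = x^{\mu/2}.
\end{equation*}
This combines the dyadic derivative scaling $2^{2mk}$ of the unweighted equivalence \eqref{eq:Normequ} with the pointwise weight $\omega^2(2^{-m}j) = (2^{-m}j)^{\mu}$ evaluated at the wavelet centre; for $\mu = 0$ it reduces to \eqref{eq:Normequ}, and for $k = 0$ it is precisely Theorem \ref{Th:Normequivalences}. \emph{Establishing this combined equivalence is the heart of the matter and the main obstacle.} For interior wavelets the weight $x^\mu$ is essentially constant across the dyadic support of size $2^{-m}$, so the unweighted scaling transfers directly; the difficulty is concentrated in the boundary wavelets whose support touches the singularity at $x = 0$. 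It is exactly here that assumptions (w1) and (w2) of Section \ref{Sec:Wavelets}, together with the vanishing-moment conditions, enter: they furnish the sharp decay bounds on $\psi_{m,j}$ and its derivatives near the origin that make the weighted derivative norms scale like $2^{2mk}\omega^2(2^{-m}j)$, and the resulting equivalence is the content borrowed from \cite{BeuchlerSchneiderSchwab}.

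Granting this equivalence, the estimate follows by factoring out a geometric constant. Applying the equivalence to the tail and writing, for each level $m > L$, the factorisation $2^{2mk} = 2^{2ml}\, 2^{-2m(l-k)}$, I would use $m \geq L+1$ and $l \geq k$ to bound $2^{-2m(l-k)} \leq 2^{-2L(l-k)}$ uniformly. Pulling this constant out of the sum and re-extending the truncated sum over $m > L$ to the full sum over $m \geq 0$ gives
\begin{equation*}
||u - P_L u||_{H^k(I, x^{\mu/2})}^2 \leq C\, 2^{-2L(l-k)} \sum_{m=0}^{\infty} \sum_{j=1}^{M^m} 2^{2ml}\, \omega^2(2^{-m}j)\, |u_j^m|^2 \approx C\, 2^{-2L(l-k)}\, ||u||_{H^l(I, x^{\mu/2})}^2.
\end{equation*}
Taking square roots yields \eqref{ApproximationPropertyInteger}; the hypothesis $u \in H^2(I, x^{\mu/2}) \subset H^l(I, x^{\mu/2})$ guarantees the right-hand side is finite for every admissible $l \leq 2$.
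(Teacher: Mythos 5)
Your proposal is correct and follows essentially the same route as the paper: both identify $u-P_Lu$ with the tail of the wavelet expansion and control it via the weighted norm equivalences, in which the dyadic derivative scaling $2^{2mk}$ is combined with the weight factor $\omega^2(2^{-m}j)$ (the ingredient the paper treats as following from \eqref{eq:Normequ}, Theorem \ref{Th:Normequivalences} and the wavelet scaling, and which you rightly flag as the crux inherited from \cite{BeuchlerSchneiderSchwab}). The only difference is bookkeeping: you handle all pairs $(k,l)$ at once through the factorisation $2^{2mk}=2^{2ml}2^{-2m(l-k)}$, whereas the paper chains inverse-type estimates level by level ($L^2$ vs.\ $H^1$, then $H^1$ vs.\ $H^2$, then concatenation for $L^2$ vs.\ $H^2$).
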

It is immediate from Remark \ref{Rem:CoincidingSpacesEstimates}, that the approximation estimate \eqref{ApproximationPropertyInteger} readily applies to the CEV model. 
Furthermore, combining estimate for the $x$ coordinate with the unweighted estimate \eqref{ApproximationProperty} for the $y$ coordinate and taking tensor products (see Appendix \ref{Sec:TensorHilbertSpacesExplicit} for an explicit construction of the bivariate spaces)  
yields that the approximation estimate \eqref{ApproximationProperty2dim} remains valid in the (weighted) bivariate case. In contrast to this, measuring the error in the norms $||u||_{H^k_{j=1}(I, x^{\mu/2})}$, $k=0,1,2$, the approximation in the $x$ coordinate dominates the $y$ coordinate, see Remark \ref{Prop:ApproximationPropertyIntegerWeights}.
\begin{remark}\label{Prop:ApproximationPropertyIntegerWeights}
If we consider $j=1$ in Definition \ref{Def:SpacesForApproxProp}, we do not assume any additional integrability requirements on our solution up to first order derivatives. 
In this case we obtain (weaker) approximation estimates where the order of approximation depends on the parameter $\beta$: the projection operator $P_L:V\rightarrow V^L$ in \eqref{eq:TruncatedProjecor} satisfies for $k=0,1$ and $l=0,1,2$, $l\geq k$ the estimate
\begin{equation}\label{ApproximationPropertyIntegerBetter}
||u-P_Lu||_{H^k_{1}(I,x^{\mu/2})}\leq C 2^{-(1-\beta)(l-k)L}||u||_{H^{l}_{1}(I,x^{\mu/2})}, \qquad u\in H^2_1(I,x^{\mu/2}).
\end{equation}
A proof of this remark is delegated to the Appendix \ref{Sec:ProofsApproxAlternative}.
\end{remark}

\begin{proof}[Proof of Proposition \ref{Prop:ApproximationPropertyInteger}]
Let $u\in \V$ and consider $P_L u \in \V^L$.
Then it is immediate from \eqref{eq:ReprElemV} and \eqref{eq:TruncatedProjecor}
that
\begin{equation*}
 u-P_L(u)=\sum_{l=L+1}^{\infty}\sum_{j=1}^{2^l}u_j^l \psi_{l,j}.
\end{equation*}
It directly follows from the norm equivalence \eqref{eq:Normequ} and the scaling of the wavelet basis-elements to unit norm
in $L^2(I)$ that the derivatives satisfy
\begin{align}\label{eq:CalcProofApproxprop1}\begin{split}
||(u-P_Lu)'||_{L^2(I,x^{\mu/2})}^2
&=\tilde C\sum_{l=L+1}^{\infty}2^{2l}\sum_{k=0}^{2^l}\omega^2(2^{-l}k)|u_k^l|^2\\
&\geq \tilde C 2^{2L}\sum_{l=0}^{\infty}\sum_{k=0
}^{2^l}\omega^2(2^{-l}k)|u_k^l|^2
= \tilde C 2^{2L}||u-P_Lu||_{L^2(I,x^{\mu/2})}^2. 
\end{split}
\end{align}
Now with $C=\frac{1}{\tilde C}$ and recalling that we have set $h=2^{-2L}$, the following relation 
is obtained:
\begin{equation}\label{eq:CalcProofApproxprop2}\begin{array}{rl}
||u-P_Lu||_{L^2(I,x^{\mu/2})}^2&\leq C h||(u-P_Lu)'||_{L^2(I,x^{\mu/2})}^2\leq C h ||u'||_{L^2(I,x^{\mu/2})}^2\\
& \leq C h \left(||u||_{L^2(I,x^{\mu/2})}^2+||u'||_{L^2(I,x^{\mu/2})}^2\right)= C h ||u||_{H^1(I,x^{\mu/2})}^2.
\end{array}
\end{equation}
Analogously, replacing $(u-P_Lu)'$ by $(u-P_Lu)''$ and $(u-P_Lu)$ by $(u-P_Lu)'$ in equations \eqref{eq:CalcProofApproxprop1} and 
\eqref{eq:CalcProofApproxprop2}, one obtains
\begin{equation}\label{eq:CalcProofApproxprop3}\begin{array}{rll}
||(u-P_Lu)'||_{L^2(I,x^{\mu/2})}^2&\leq C h||(u-P_Lu)''||_{L^2(I,x^{\mu/2})}^2\leq C h ||u''||_{L^2(I,x^{\mu/2})}^2,
\end{array}
\end{equation}
and adding up \eqref{eq:CalcProofApproxprop2} and \eqref{eq:CalcProofApproxprop3} yields:
\begin{equation}\label{eq:CalcProofApproxprop4}\begin{array}{rll}
||u-P_Lu||_{H^1(I,x^{\mu/2})}^2&=||u-P_Lu||_{L^2(I,x^{\mu/2})}^2+ ||(u-P_Lu)'||_{L^2(I,x^{\mu/2})}^2\\
& \leq C h \sum_{k=0}^2||u^{(k)}||_{L^2(I,x^{\mu/2})}^2
=C h ||u||_{H^2(I,x^{\mu/2})}^2.
\end{array}
\end{equation}
Finally, concatenating \eqref{eq:CalcProofApproxprop1} and 
\eqref{eq:CalcProofApproxprop2} directly yields:
\begin{equation}\label{eq:CalcProofApproxprop5}\begin{array}{rl}
||u-P_Lu||_{L^2(I,x^{\mu/2})}^2&\leq C h||(u-P_Lu)'||_{L^2(I,x^{\mu/2})}^2\leq C (2^{-L})^2||(u-P_Lu)''||_{L^2(I,x^{\mu/2})}^2\\
& \leq C (2^{-L})^2 ||u''||_{L^2(I,x^{\mu/2})}^2\leq C (2^{-L})^2 ||u||_{H^1(I,x^{\mu/2})}^2.
\end{array}
\end{equation}
\end{proof}

\subsection{Discretization error and convergence of the finite element method}\label{Sec:DiscretizationError}
In this section we apply the estimates of Section \ref{Sec:ApproximationEstimates} to derive estimates on the discretization error (cf. equation \eqref{eq:ErrorSplitting}) and to conclude
the convergence of the proposed finite element approximation of the variational solution of the SABR pricing equations.
For the estimates of the discretization error we follow the (unweighted) analysis of \cite[Section 5]{PetersdorffSchwab}.
Corresponding proofs prevail with minor modifications and are provided---accommodated to our setting and notations---in the Appendix \ref{Sec:Proofs} for easy reference.
\begin{lemma}\label{Lem:ErrorThetaScheme}
For $u\in C^1(\bar{J};\mathcal{H}^2(G,\bold{a}))$, the errors $\xi_L^m$ are the solutions of the $\theta$-scheme:\\
Given $\xi_L^0:=P_Lu^0-u_L^0$, for $m=0,\ldots, M-1$ find $\xi_L^{m+1} \in V^L$ such that for all $v^L \in V^L$:
\begin{align}\label{eq:ErrorThetaScheme}\begin{split}
\tfrac{1}{k}&(\xi_L^{m+1}-\xi_L^{m},v_L)_{\V \times \V^*}+a(\theta \xi_L^{m+1}+(1-\theta)\xi_L^{m},v_L)=:(r^m,v_L)_{\V \times \V^*}
\end{split}\end{align}
\end{lemma}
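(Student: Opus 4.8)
The plan is to derive the error equation \eqref{eq:ErrorThetaScheme} by subtracting, term by term, the discrete $\theta$-scheme that $u_L^m$ solves exactly and a corresponding identity for the projected exact solution $P_L u^m$. First I would fix $m\in\{0,\dots,M-1\}$ and an arbitrary $v_L\in\V^L$, insert the definition $\xi_L^m=P_Lu^m-u_L^m$ from the splitting \eqref{eq:ErrorSplitting} into the left-hand side of \eqref{eq:ErrorThetaScheme}, and use the bilinearity of the mass pairing and of $a(\cdot,\cdot)$ to rewrite that left-hand side as the combination built from $P_Lu^m$ minus the combination built from $u_L^m$.

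Next I would invoke Definition \ref{def:ThetaScheme}: since $u_L^m$ satisfies \eqref{eq:ThetaScheme} exactly, the $u_L$-combination equals the discrete forcing $(\theta g(t^{m+1})+(1-\theta)g(t^m),v_L)$. Substituting this identity removes every occurrence of $u_L^m$ and leaves a right-hand side depending only on $P_Lu^m$ and on $g$. This expression is precisely what the lemma introduces as $(r^m,v_L)$ through the relation $=:$, so that \eqref{eq:ErrorThetaScheme} holds by construction, with $r^m$ the defect of the projected exact solution in the discrete scheme.

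To make $r^m$ usable in the ensuing convergence analysis I would then rewrite the forcing combination by means of the continuous problem. Testing the variational formulation \eqref{eq:SABRVariationalFormulation} of Definition \ref{Def:SABRVariationalFormulation} at the grid times $t^{m+1}$ and $t^m$ against $v_L$ and forming the same $\theta$-weighted average gives $(\theta g(t^{m+1})+(1-\theta)g(t^m),v_L)=(\theta\dot u(t^{m+1})+(1-\theta)\dot u(t^m),v_L)+a(\theta u^{m+1}+(1-\theta)u^m,v_L)$. Inserting this and recalling $\eta^m=u^m-P_Lu^m$, the $a$-terms collapse to $-a(\theta\eta^{m+1}+(1-\theta)\eta^m,v_L)$, while the mass terms combine into $\tfrac{1}{k}(P_Lu^{m+1}-P_Lu^m,v_L)-(\theta\dot u(t^{m+1})+(1-\theta)\dot u(t^m),v_L)$. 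Writing $P_L=I-(I-P_L)$ exhibits $r^m$ as the sum of a time-consistency term $\tfrac{1}{k}(u^{m+1}-u^m)-(\theta\dot u(t^{m+1})+(1-\theta)\dot u(t^m))$, a projection defect $-\tfrac{1}{k}(\eta^{m+1}-\eta^m)$, and the spatial defect $-a(\theta\eta^{m+1}+(1-\theta)\eta^m,\cdot)$.

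The computation is essentially bookkeeping and I do not expect a genuine obstacle; the two points requiring care are keeping the sign convention $a(\cdot,\cdot)=-(A\,\cdot,\cdot)$ and the orientation of the duality pairing consistent throughout, and verifying that each constituent of $r^m$ defines an element of $(\V^L)^*$ so that $(r^m,v_L)$ is well-defined. The latter is where the hypothesis $u\in C^1(\bar J;\mathcal{H}^2(G,\bold{a}))$ enters: it guarantees that $\dot u(t^m)$ is controlled in $\Hh$ and that $u^m$ lies in the weighted second-order space on which $Au^m$ is a bounded functional, which together with the continuity of $a(\cdot,\cdot)$ from Lemma \ref{Th1:WellPosednessSABR} bounds the spatial defect term uniformly in $v_L$.
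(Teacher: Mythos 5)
Your proposal is correct and follows essentially the same route as the paper's own proof: insert $\xi_L^m=P_Lu^m-u_L^m$ and use bilinearity, eliminate the $u_L$-terms via the discrete scheme \eqref{eq:ThetaScheme}, replace the discrete forcing via the $\theta$-averaged continuous variational identity \eqref{eq:GenVarImplies}, and regroup; your three terms (time-consistency, projection defect $-\tfrac{1}{k}(\eta^{m+1}-\eta^m)$, and spatial defect $-a(\eta^{m+\theta},\cdot)$) are exactly the paper's decomposition $r_1^m,r_2^m,r_3^m$ in \eqref{eq:ErrorDecompResiduals}, merely written in terms of $\eta$. Your closing remark on why $r^m\in(\V^L)^*$ under the stated regularity is a sensible addition the paper leaves implicit, but it does not change the argument.
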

The proof of the above Lemma relies on the observation that the errors $\xi$ satisfy the same PDE as the solutions $u$, therefore the $\theta$ schemes for $\xi$ and for $u$ are analogous\footnote{See Appendix \ref{Sec:Proofs} for details, where we also included for completeness a reminder of the proof of \cite[Lemma 5.1]{PetersdorffSchwab}---accommodated to our notation---which directly carries over to the present situation.}.
The following corollary is a direct consequence of Lemma \ref{Lem:ErrorThetaScheme} and of the the stability of the $\theta$-scheme, established in Proposition \ref{Prop:StabilityThetaScheme}.
\begin{corollary}\label{Cor:StabilityErrorThetaScheme}
There exist constants $C_1$ and $C_2$ independent from the discretization level $L$ and time mesh $k$ such that the solutions of \eqref{eq:ErrorThetaScheme} satisfy the estimate
\begin{equation}\label{eq:StabilityErrorThetaScheme}
||\xi_L^M||^2_{\Hh}+ C_1 \ k \sum_{m=0}^{M-1}||\xi_L^{m+\theta}||_a^2 
\leq  ||\xi_N^0||^2_{\Hh}+C_2 \ k \sum_{m=0}^{M-1}||r^{m}||_{*}^2,
\end{equation}
where for any $f\in (V^L)^*$,
$||f||_*:=\sup_{v_L\in \mathcal{V}_L}\frac{(f,v_L)_{\V\times\V^*}}{||v_L||_a},$ cf. \eqref{eq:DualNorm}
and where $r^m$, $m=0,\ldots,M-1$ denote the weak residuals defined in equation \eqref{eq:ErrorThetaScheme}. 
\end{corollary}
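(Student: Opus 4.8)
The plan is to read the estimate off directly from the stability analysis of the $\theta$-scheme, exploiting the structural identity between the recursion satisfied by the discretization error and the one satisfied by the discrete solution itself. By Lemma \ref{Lem:ErrorThetaScheme}, the errors $\xi_L^m$ solve \eqref{eq:ErrorThetaScheme}, which is \emph{verbatim} the $\theta$-scheme \eqref{eq:ThetaScheme} with the discrete unknown $u_L^m$ replaced by $\xi_L^m$ and the consistency forcing $\theta g(t^{m+1})+(1-\theta)g(t^m)$ replaced by the weak residual $r^m$. First I would record this correspondence precisely, noting in particular that $\xi_L^0=P_Lu^0-u_L^0$ plays the role of the initial datum $u_L^0$ appearing on the right-hand side of \eqref{eq:StabilityThetaScheme}, and that the residuals $r^m$, defined through \eqref{eq:ErrorThetaScheme}, are genuine elements of $(V^L)^*$, so that their dual norm $\|r^m\|_*$ in \eqref{eq:DualNorm} is well defined.

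Next I would simply invoke Proposition \ref{Prop:StabilityThetaScheme} under the substitutions $u_L^m\mapsto\xi_L^m$, $u_L^0\mapsto\xi_L^0\;(=\xi_N^0)$ and $g^{m+\theta}\mapsto r^m$. Since the a priori bound \eqref{eq:StabilityThetaScheme} was established for an \emph{arbitrary} right-hand side in $(V^L)^*$, measured precisely in the dual norm $\|\cdot\|_*$, the conclusion \eqref{eq:StabilityErrorThetaScheme} is immediate, the left-hand energy term $C_1k\sum_m\|\xi_L^{m+\theta}\|_a^2$ arising from the coercivity (G\aa{}rding) contribution to $a(\cdot,\cdot)$ that is tracked throughout the proof of Proposition \ref{Prop:StabilityThetaScheme}. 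No new inequality is needed at this stage; the content of the corollary is entirely the transfer of the stability bound from the solution recursion to the error recursion.

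The only genuine point to check --- and what I expect to be the main obstacle --- is the assertion that the resulting constants $C_1,C_2$ are independent of the refinement level $L$ and of the time step $k$. For the implicit regime $\tfrac12\le\theta\le1$ this is immediate, as the admissible range \eqref{eq:ConstantsC1C2Stability1} is purely numerical and involves neither $L$ nor $k$. For $0\le\theta<\tfrac12$, however, the admissible constants in \eqref{eq:ConstantsC1C2Stability2} depend on $\sigma=k(1-2\theta)\lambda_A$, and $\lambda_A=\sup_{v_L\in V^L}\|v_L\|_{\Hh}^2/\|v_L\|_*^2$ is an inverse-type quantity that grows with $L$. To keep the constants uniform I would impose the CFL-type coupling guaranteeing $\sigma\le\sigma_0<2$ for a fixed $\sigma_0$ (equivalently $k\lesssim\lambda_A^{-1}$); under this restriction the bounds in \eqref{eq:ConstantsC1C2Stability2} are controlled by $\sigma_0$ alone and are therefore independent of $L$ and $k$. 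I would thus state the corollary either under the assumption $\theta\ge\tfrac12$ or under this CFL condition, and use the scaling of $\lambda_A$ furnished by the weighted norm equivalences of Theorem \ref{Th:Normequivalences} to make the required coupling of $k$ and $L$ explicit.
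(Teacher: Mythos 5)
Your proposal coincides with the paper's own proof: the corollary is obtained there as a direct consequence of Lemma \ref{Lem:ErrorThetaScheme} (the errors $\xi_L^m$ solve the $\theta$-scheme with forcing $r^m$) together with Proposition \ref{Prop:StabilityThetaScheme} applied under exactly the substitutions $u_L^m\mapsto\xi_L^m$, $u_L^0\mapsto\xi_L^0$, $g^{m+\theta}\mapsto r^m$ that you describe. Your extra caveat for the regime $0\le\theta<\tfrac12$ --- that uniformity of $C_1,C_2$ in $L$ and $k$ genuinely requires a CFL-type coupling $\sigma\le\sigma_0<2$, since the admissible constants in \eqref{eq:ConstantsC1C2Stability2} depend on $\sigma=k(1-2\theta)\lambda_A$ and $\lambda_A$ grows with $L$ --- is a legitimate point of care that the paper passes over in silence, inheriting it from Proposition \ref{Prop:StabilityThetaScheme} without comment.
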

The weak residual $r^m$, $m=0,\ldots,M-1$ in \eqref{eq:ErrorThetaScheme} can be decomposed into the following parts
\begin{align*}
(r^m,v_L)_{\V \times \V^*}:=(r^m_1,v_L)_{\V \times \V^*}+(r^m_2,v_L)_{\V \times \V^*}+a(r^m_3,v_L),
\end{align*}
where the components $r_1^m$, $r_2^m$ and $r_3^m$, $m=0,\ldots,M$ are defined as
\begin{align}\label{eq:ErrorDecompResiduals}\begin{split}
&(r_1^m,v_L)_{\V \times \V^*}:=(\tfrac{1}{k}(u^{m+1}-u^m)-\dot{u}^{m+\theta},v_L)_{\V \times \V^*},\\
&(r_2^m,v_L)_{\V \times \V^*}:=(\tfrac{1}{k}(P_Lu^{m+1}-P_Lu^m)+\tfrac{1}{k}(u^{m+1}-u^m),v_L)_{\V \times \V^*},\\
&(r_3^m,v_L)_{\V \times \V^*}:=a(P_Lu^{m+\theta}-u^{m+\theta},v_L).
\end{split}\end{align}
Using this decomposition facilitates the following estimates for the residuals.
\begin{lemma}[Norm estimates for the residuals]\label{Lem:ErrorResidualEstimates}
Consider the weak residuals $r^m$ of the $\theta$-scheme \eqref{eq:ErrorThetaScheme} for $m=0,\ldots,M-1$.
Furthermore, assume that 
$$u\in C^1(\bar{J};\mathcal{H}^2_j(G,x^{\mu/2}))\cap C^3(J;\mathcal{H}^2_j(G,x^{\mu/2})), \qquad j=0,1,$$ 
where $\mathcal{H}^k_j(G,x^{\mu/2})$, $k=0,1,2$, $j=0,1$ are the spaces in \eqref{eq:DefSpacesForApproxProp2dim}.
Then there holds the estimate
\begin{equation}\label{eq:ErrorEstimResiduals}
\begin{array}{lll}
||r^m||_* \leq & C &  \begin{cases}
               k^{1/2}\left(\int_{t_m}^{t_{m+1}}||\ddot{u}||_{*}^2ds\right)^{1/2},\quad \theta\in[0,1]\\
               k^{3/2}\left(\int_{t_m}^{t_{m+1}}||\dddot{u}||_{*}^2ds\right)^{1/2},\quad \theta=\tfrac{1}{2}\\
              \end{cases}\\
&&+ 2^{-L}\left(\tfrac{ C  }{k^{1/2}} \left(\int_{t_m}^{t_{m+1}}||\dot{u}||_{\Hh^1_{j}(G,x^{\mu/2})}^2ds\right)^{1/2}
+  \ C  ||u^{m+\theta}||_{\mathcal{H}^2_{j}(G,x^{\mu/2})}\right),
\end{array}
\end{equation}
\end{lemma}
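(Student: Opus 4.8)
The plan is to estimate $\|r^m\|_*$ by splitting the weak residual according to the decomposition \eqref{eq:ErrorDecompResiduals}, $r^m=r_1^m+r_2^m+a(r_3^m,\cdot)$, and bounding the three pieces separately via the triangle inequality. The whole argument runs parallel to the unweighted analysis of \cite[Section 5]{PetersdorffSchwab}; the two places where our weighted setting enters are the approximation estimates (Proposition \ref{Prop:ApproximationPropertyInteger}, resp. Remark \ref{Prop:ApproximationPropertyIntegerWeights} for the weight index $j=1$) and the repeated use of the continuous embeddings $\V\hookrightarrow\Hh\hookrightarrow\V^*$ of the Gelfand triple \eqref{eq:GelfandTriplet} to pass between the $\Hh$-norm and the dual norm \eqref{eq:DualNorm}. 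The purely temporal residual $r_1^m$ will produce the first (case-split) summand of \eqref{eq:ErrorEstimResiduals}, while the two spatial residuals $r_2^m$ and $r_3^m$ will produce the two $2^{-L}$-terms.

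For $r_1^m=\tfrac1k(u^{m+1}-u^m)-\dot u^{m+\theta}$ I would represent both the difference quotient and the value $\dot u^{m+\theta}$ as integrals of higher-order time derivatives of $u$ over $[t_m,t_{m+1}]$ (fundamental theorem of calculus and Taylor expansion with integral remainder, all valued in $\V^*$). Testing against a fixed $v_L\in \V^L$ and using $(\ddot u(s),v_L)_{\V^*\times\V}\le\|\ddot u(s)\|_*\,\|v_L\|_{\V}$ before dividing by $\|v_L\|_{\V}$ and taking the supremum gives $\|r_1^m\|_*\le C\int_{t_m}^{t_{m+1}} w(s)\,\|\ddot u(s)\|_*\,ds$ for a bounded weight $w$; one Cauchy--Schwarz in time then yields the bound $Ck^{1/2}(\int_{t_m}^{t_{m+1}}\|\ddot u\|_*^2\,ds)^{1/2}$, valid for every $\theta\in[0,1]$. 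For the symmetric choice $\theta=\tfrac12$ the two $\ddot u$-contributions cancel and expanding one order further produces the sharper $Ck^{3/2}(\int_{t_m}^{t_{m+1}}\|\dddot u\|_*^2\,ds)^{1/2}$.

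For the spatial residuals I would exploit that $P_L$ is linear and bounded, hence commutes with the Bochner time integral. Writing the increment as $u^{m+1}-u^m=\int_{t_m}^{t_{m+1}}\dot u\,ds$ gives $r_2^m=\tfrac1k\int_{t_m}^{t_{m+1}}(P_L-I)\dot u\,ds$; pairing with $v_L$ in $\Hh$, using $\V\hookrightarrow\Hh$ and the approximation estimate \eqref{ApproximationPropertyInteger} with $k=0$, $l=1$ (in its bivariate tensor-product form) yields $\|r_2^m\|_*\le \tfrac{C}{k}\int_{t_m}^{t_{m+1}}2^{-L}\|\dot u\|_{\mathcal{H}^1_j(G,x^{\mu/2})}\,ds$, and a further Cauchy--Schwarz in time turns $\tfrac1k\int$ into $\tfrac1{k^{1/2}}(\int|\cdot|^2\,ds)^{1/2}$, producing exactly the first $2^{-L}$-term. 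For $r_3^m$ the continuity of the SABR bilinear form (Lemma \ref{Th1:WellPosednessSABR}) gives $\|a(r_3^m,\cdot)\|_*\le C_1\|P_Lu^{m+\theta}-u^{m+\theta}\|_{\V}$, whence \eqref{ApproximationPropertyInteger} with $k=1$, $l=2$ (together with $\V\approx\mathcal{H}^1_1$, cf. Remark \ref{Rem:CoincidingSpacesEstimates}) delivers the remaining $C2^{-L}\|u^{m+\theta}\|_{\mathcal{H}^2_j(G,x^{\mu/2})}$-term. Summing the three contributions gives \eqref{eq:ErrorEstimResiduals}.

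The main obstacle I expect is twofold. First, the Crank--Nicolson cancellation at $\theta=\tfrac12$ has to be carried out at the level of $\V^*$-valued Taylor remainders rather than pointwise, so one must keep track of the exact form of the integral remainders to see that the $\ddot u$-terms indeed drop out and leave an honest $\dddot u$-remainder. Second, one must be careful that the univariate weighted approximation estimates of Proposition \ref{Prop:ApproximationPropertyInteger} (and Remark \ref{Prop:ApproximationPropertyIntegerWeights} for $j=1$) really transfer to the bivariate weighted norms $\mathcal{H}^k_j(G,x^{\mu/2})$ through the tensor-product construction, with a rate independent of $L$ and of the second (unweighted) coordinate, the degeneracy of the $x$-weight entering only through the already-established one-dimensional bound. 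The commutation of $P_L$ with the time integral and the uniformity in $L$ of the embedding constants of \eqref{eq:GelfandTriplet} are routine once the Bochner-space framework is set up.
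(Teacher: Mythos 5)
Your proposal is correct and follows essentially the same route as the paper's own proof: the same splitting of $r^m$ into $r_1^m,r_2^m,r_3^m$, the same Taylor-expansion-with-integral-remainder treatment of $r_1^m$ (including the Crank--Nicolson cancellation at $\theta=\tfrac12$), the same commutation of $P_L$ with the Bochner time integral plus the weighted approximation estimate for $r_2^m$, and the approximation estimate at order two for $r_3^m$. The only cosmetic difference is that you invoke continuity of the bilinear form (Lemma \ref{Th1:WellPosednessSABR}) directly to bound $|a(P_Lu^{m+\theta}-u^{m+\theta},v_L)|$, whereas the paper passes through a polarisation argument and the norm $\|\cdot\|_a$; both yield the same bound.
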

A proof of this Lemma is provided in the Appendix \ref{Sec:Proofs}. With these preparations, we are in a position to prove the main result of this section:

\begin{theorem}[Convergence of the finite element approximation: SABR]\label{Th:FEMConvergence}
Assume that 
\begin{align*}
u\in C^1(\bar{J};\mathcal{H}^2_j(G,x^{\mu/2}))\cap C^3(J;\mathcal{H}^2_j(G,x^{\mu/2})), 
\qquad j=0,1, 
\end{align*}
where $\mathcal{H}^k_j(G,x^{\mu/2})$, $k=0,1,2$, $j=0,1$ are the spaces in \eqref{eq:DefSpacesForApproxProp2dim},
and assume further that the approximation $u_{(0,L)}\in V^L$ of the initial data is
quasi optimal, that is 
\begin{align}\label{eq:quasioptimal}
||\xi_L^0||^2_{\Hh}=||u_0-u_{(0,L)}||^2_{\Hh}\leq C 2^{-2L}||u_0||^2_{\Hh}.
\end{align}
Let $u^m(z)=u(t^m,z)$, $z\in G$ for $t^m$, $m=0,\ldots M$ be as in \eqref{eq:TimeStep} let $u_L^m$ denote the solution of the fully discrete scheme \eqref{eq:ThetaScheme}, and let the approximation space $\V^L$ be as in Section \ref{Sec:Discretization}.
Then, the following error bounds hold:
\begin{align*}
||u^M-u^M_L||^{2}_{\Hh} + k \sum_{m=0}^{M-1}||u^{m+\theta}-u_L^{m+\theta}||_a^2
\leq  &C 2^{-j(1-\beta)2L   } \max_{0\leq t \leq T}||u(t)||^{2}_{\mathcal{H}^2_j}
+2^{-j(1-\beta)2L  } \int_0^T||\dot{u}(s)||_{\mathcal{H}^1_j}^2ds\\
&+C \begin{cases}
     k^2 \int_0^T|| \ddot{u}(s)||_{*}^2ds,\quad 0\leq \theta \leq 1\\
     k^4 \int_0^T|| \dddot{u}(s)||_{*}^2ds,\quad \theta=\frac{1}{2}
    \end{cases}
\end{align*}
for $j=0,1$, where $u_L^{m+\theta}=\theta u_L^{m+1}+(1-\theta)u_L^m$, and $u^{m+\theta}=\theta u^{m+1}+(1-\theta)u^m$.
\end{theorem}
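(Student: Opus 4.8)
The plan is to follow the error-splitting argument of \cite[Section 5]{PetersdorffSchwab}, feeding the weighted approximation estimates of Section \ref{Sec:ApproximationEstimates} into the stability of the $\theta$-scheme. Writing the error as $e^m_L = u^m - u^m_L = \eta^m + \xi^m_L$ as in \eqref{eq:ErrorSplitting}, with $\eta^m := u^m - P_Lu^m$ (approximation part) and $\xi^m_L := P_Lu^m - u^m_L$ (consistency part), the triangle inequality splits both terms on the left-hand side:
\[
||u^M - u^M_L||_{\Hh} \leq ||\eta^M||_{\Hh} + ||\xi^M_L||_{\Hh}, \qquad ||u^{m+\theta} - u^{m+\theta}_L||_a \leq ||\eta^{m+\theta}||_a + ||\xi^{m+\theta}_L||_a.
\]
It then suffices to bound each of the four pieces and recombine.

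First I would dispose of the approximation part $\eta$. Since $\Hh$ is a weighted $L^2$-space and, by Remark \ref{Rem:CoincidingSpacesEstimates} together with its bivariate tensor analogue, the energy norm satisfies $||\cdot||_a \approx ||\cdot||_{\V}$ with $||\cdot||_{\V}$ controlled by the weighted $\Hh^1_j$-norm, both $||\eta^M||_{\Hh}$ and $||\eta^{m+\theta}||_a$ are governed by the weighted approximation estimates: Proposition \ref{Prop:ApproximationPropertyInteger} in the case $j=0$ (full rate, one factor $2^{-L}$ per order of approximation) and Remark \ref{Prop:ApproximationPropertyIntegerWeights} in the case $j=1$ (reduced rate $2^{-(1-\beta)L}$ per order), in their bivariate tensor form \eqref{ApproximationProperty2dim}. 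Summing the energy contribution over the time grid and using $k\sum_{m}(\cdot) \leq T \max_{0\le t\le T}(\cdot)$ collects these into the space-error bucket $C\,2^{-j(1-\beta)2L}\max_{0\le t\le T}||u(t)||^2_{\Hh^2_j}$.

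The core of the proof is the consistency part $\xi$. By Lemma \ref{Lem:ErrorThetaScheme} the $\xi^m_L$ themselves solve a $\theta$-scheme whose data are the weak residuals $r^m$, so Corollary \ref{Cor:StabilityErrorThetaScheme} gives at once
\[
||\xi^M_L||^2_{\Hh} + C_1 k\sum_{m=0}^{M-1} ||\xi^{m+\theta}_L||_a^2 \leq ||\xi^0_L||^2_{\Hh} + C_2\, k\sum_{m=0}^{M-1} ||r^m||_*^2.
\]
The initial term is controlled by the quasi-optimality hypothesis \eqref{eq:quasioptimal}. For the residual sum I would substitute the per-step bound of Lemma \ref{Lem:ErrorResidualEstimates}, use $||r^m||_*^2 \leq 2A_m^2 + 2B_m^2$ with $A_m$ the time part and $B_m$ the space part, and sum. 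The essential bookkeeping is that the local integrals telescope, $\sum_m \int_{t_m}^{t_{m+1}} = \int_0^T$, so the time part yields $C k^2\int_0^T ||\ddot u||_*^2\,ds$ for $\theta\in[0,1]$ (respectively $C k^4\int_0^T ||\dddot u||_*^2\,ds$ for $\theta=\tfrac12$), while the factor $k^{-1/2}$ sitting in the space part of $r^m$ is cancelled exactly by the outer $k\sum_m$, converting $k\sum_m k^{-1}\int_{t_m}^{t_{m+1}} ||\dot u||^2_{\Hh^1_j}\,ds$ into $\int_0^T ||\dot u||^2_{\Hh^1_j}\,ds$ and $k\sum_m ||u^{m+\theta}||^2_{\Hh^2_j}$ into $T\max_{0\le t\le T}||u(t)||^2_{\Hh^2_j}$. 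This produces the second space-error bucket and the time-error terms.

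Combining the $\eta$- and $\xi$-estimates through the two triangle inequalities above (squaring and using the elementary inequality $(s+t)^2\le 2s^2+2t^2$) yields the asserted bound. The main obstacle I anticipate is not any isolated estimate but the careful tracking of the powers of $k$ and $2^{-L}$ through the summation: one must check that the $k^{-1/2}$ loss in the spatial residual of Lemma \ref{Lem:ErrorResidualEstimates} is genuinely absorbed by the outer $k\sum_m$ so that no negative power of $k$ survives, and that the spatial rate entering both the residuals and the $\eta$-terms is the $j$-dependent rate (full $2^{-L}$ for $j=0$, reduced $2^{-(1-\beta)L}$ for $j=1$) coming from Proposition \ref{Prop:ApproximationPropertyInteger} and Remark \ref{Prop:ApproximationPropertyIntegerWeights}, which is precisely what ties the final exponent to $\beta$.
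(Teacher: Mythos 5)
Your proposal follows essentially the same route as the paper's own proof: the splitting $e^m_L=\eta^m+\xi^m_L$, the stability estimate of Corollary \ref{Cor:StabilityErrorThetaScheme} applied to the residual $\theta$-scheme of Lemma \ref{Lem:ErrorThetaScheme}, the residual bounds of Lemma \ref{Lem:ErrorResidualEstimates} with the telescoping sum absorbing the $k^{-1/2}$ factor, the bound $||\cdot||_a\leq||\cdot||_{\V}\leq C_G||\cdot||_{\mathcal{H}^1_j}$ for the $\eta$-terms, and quasi-optimality for $\xi^0_L$. The bookkeeping of the $j$-dependent spatial rate (via Proposition \ref{Prop:ApproximationPropertyInteger} and Remark \ref{Prop:ApproximationPropertyIntegerWeights}) is exactly the paper's $c_{\omega}$ argument, so the proof is correct as proposed.
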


\begin{remark}[Convergence of the finite element approximation: CEV]\label{Cor:FEMConvergence}
The same estimates hold for the CEV model, when we replace in Theorem \ref{Th:FEMConvergence} the spaces $\mathcal{H}^k_j(G,x^{\mu/2})$, $k=0,1,2$, $j=0,1$ and the corresponding norms by $H^{k}_j(I,x^{\mu/2})$, $k=0,1,2$, $j=0,1$ in \eqref{eq:DefSpacesForApproxProp} and the norms \eqref{eq:DefNormSpacesForApproxProp}.
\end{remark}

\begin{proof}[Proof of Theorem \ref{Th:FEMConvergence}, and Remark \ref{Cor:FEMConvergence}]For the proofs we follow \cite[Theorem 3.6.5]{ReichmannComputationalMethods}, and \cite[Theorem 5.4.]{PetersdorffSchwab} with the appropriate adjustments. For brevity we shall prove both cases $\Hh_{j=0}$ and $\Hh_{j=1}$ together, and denote the generic convergence of order by $2^{-2L c_{\omega}}:=2^{-2L - 2L j(1-\beta)}$ as in \eqref{eq:ApproximationPropertyWeighted}, where $c_{\omega}=1$ for $\Hh_{j=0}$ and  $c_{\omega}=(1-\beta)$ for $\Hh_{j=1}$.
By Corollary \ref{Cor:StabilityErrorThetaScheme},
\begin{align*}
||e_L^M||^2_{\Hh}&=||u^m(x)-u_L^m(x)||^2_{\Hh}=||\eta^m+\xi_L^m||^2_{\Hh}\\
&\leq 2 \Big(||\eta^m||^2_{\Hh}+k \sum_{m=0}^{M-1}||\eta^{m+\theta}||_a^2
+||\xi_L^m||^2_{\Hh}+k \sum_{m=0}^{M-1}||\xi_L^{m+\theta}||_a^2 \Big).
\end{align*}
This yields
\begin{align*}
||e_L^M||^2_{\Hh}+C_1  k \sum_{m=0}^{M-1}||e^{m+\theta}||_a^2
&\leq 2 \Big(||\eta^M||^2_{\Hh}+C_1 \ k \sum_{m=0}^{M-1}||\eta^{m+\theta}||_a^2
+||\xi_L^m||^2_{\Hh}+C_1 \ k \sum_{m=0}^{M-1}||\xi_L^{m+\theta}||_a^2 \Big)\\
& 
\leq C \Big(||\eta^M||^2_{\Hh}+k \sum_{m=0}^{M-1}||\eta^{m+\theta}||_a^2
+||\xi_L^0||^2_{\Hh}+C_2 \ k \sum_{m=0}^{M-1}||r^{m}||_{*}^2 \Big),
\end{align*}
and by Lemma \ref{Lem:ErrorResidualEstimates} one can further estimate the last terms to obtain
\begin{align*}
&C \Big(||\eta^M||^2_{\Hh}+k \sum_{m=0}^{M-1}||\eta^{m+\theta}||_a^2
+||\xi_L^0||^2_{\Hh}+C_2 \ k \sum_{m=0}^{M-1}||r^{m}||_{*}^2 \Big)\\
&\leq C\Bigg\{||\eta^M||^2_{\Hh}+k \sum_{m=0}^{M-1}||\eta^{m+\theta}||_a^2
+||\xi_L^0||^2_{\Hh}\\
&+C_3  \sum_{m=0}^{M-1}
 (2^{-L})^{2 c_{\omega}} \left(\int_{t_m}^{t_{m+1}}||\dot{u}||_{\mathcal{H}^1_{j}}^2ds
+  ||u^{m+\theta}||_{\mathcal{H}^2_{j}}^2
+  \begin{cases}
               k^2\int_{t_m}^{t_{m+1}}||\ddot{u}||_{*}^2ds,\quad \theta\in[0,1]\\
               k^4\int_{t_m}^{t_{m+1}}||\dddot{u}||_{*}^2ds,\quad \theta=\tfrac{1}{2}\\
   \end{cases}\right)\Bigg\}\\
&\leq C ||\xi_L^0||^2_{\Hh}+ C  
 (2^{-L})^{2 c_{\omega}}\int_{0}^{T}||\dot{u}||_{\V}^2ds
+ C \max_{0\leq t\leq T} (2^{-L})^{2 c_{\omega}}||u(t)||_{\mathcal{H}^2_{j}}^2
+ C \begin{cases}
               k^2\int_{0}^{T}||\ddot{u}||_{*}^2ds,\ \theta\in[0,1]\\
               k^4\int_{0}^{T}||\dddot{u}||_{*}^2ds,\ \ \theta=\tfrac{1}{2},
   \end{cases}
\end{align*}
where the last step follows by $||\cdot||_a\leq||\cdot||_{\V}\leq C_G||\cdot||_{\mathcal{H}^1_{j}(G,x^{\mu/2})}$ for a $C_G>0$ (cf. Remark \ref{Rem:CoincidingSpacesEstimates2dim}) by Lemma \ref{Th1:WellPosednessSABR} and the approximation estimates \eqref{ApproximationPropertyInteger} resp. \eqref{ApproximationPropertyIntegerBetter}.
Finally, quasi optimality \eqref{eq:quasioptimal} of the initial data yields the statement of the Theorem.
\end{proof}
\section{Numerical Experiments}
In this section we carry out numerical experiments for option prices to investigate the robustness of the derived finite element discretisation in a simple linear setup described in \cite[Chapter 4]{ReichmannComputationalMethods}. We find that 
the method performs robustly even for long maturities (which is precisely where the short-time asymptotic formula of Hagan et al \cite{HLW} is known to break down), for different regimes of the CEV parameter (both for $\beta<0.5$ and for $\beta \geq 0.5$) and throughout correlations. We display two examples here and remark that the convergence we find in practice (readily for this bases) outperforms the theoretically predicted ones.\\
\begin{center}\includegraphics[scale=0.21]{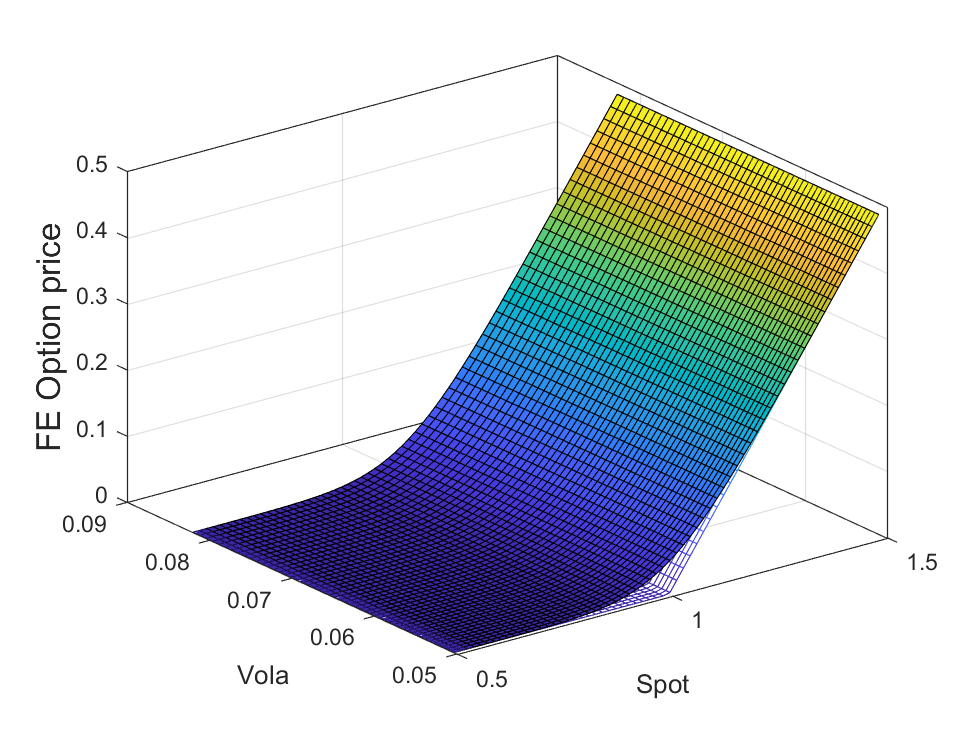}\quad
\includegraphics[scale=0.2]{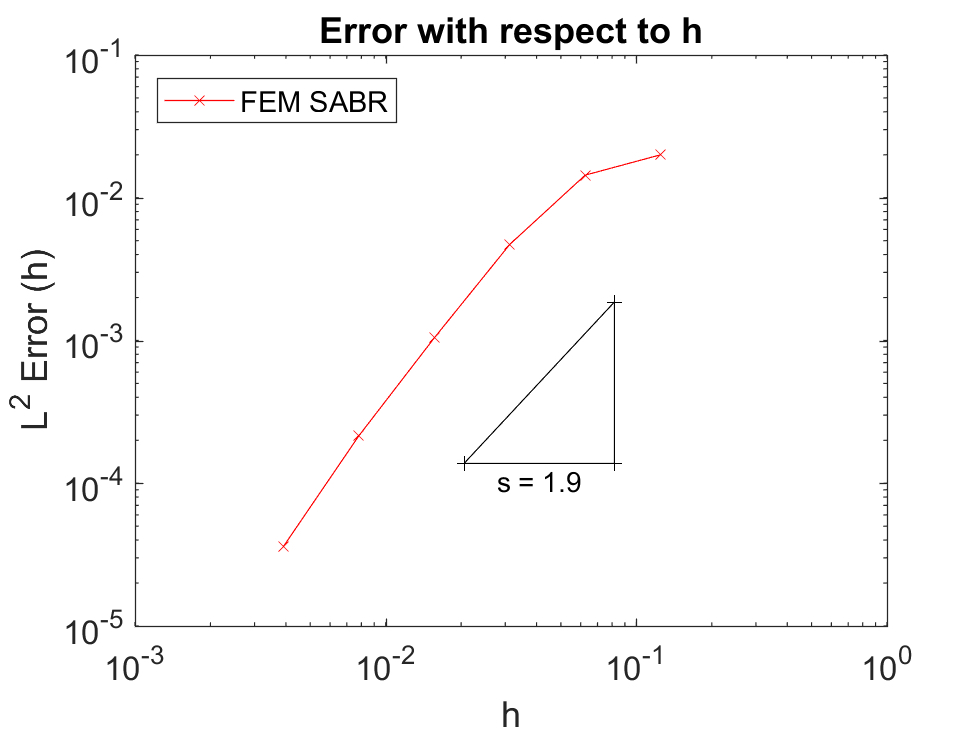}
\end{center}
The image on the left gives finite element option prices for $\beta=0.2$ and $\nu=1$ in the uncorrelated case, with maturity $T=25$ years and $K=1$, while the image on the right shows the convergence (with respect to the mesh-width $h=2^L$) of the finite element approximation in a linear basis.\\
\begin{center}
\includegraphics[scale=0.23]{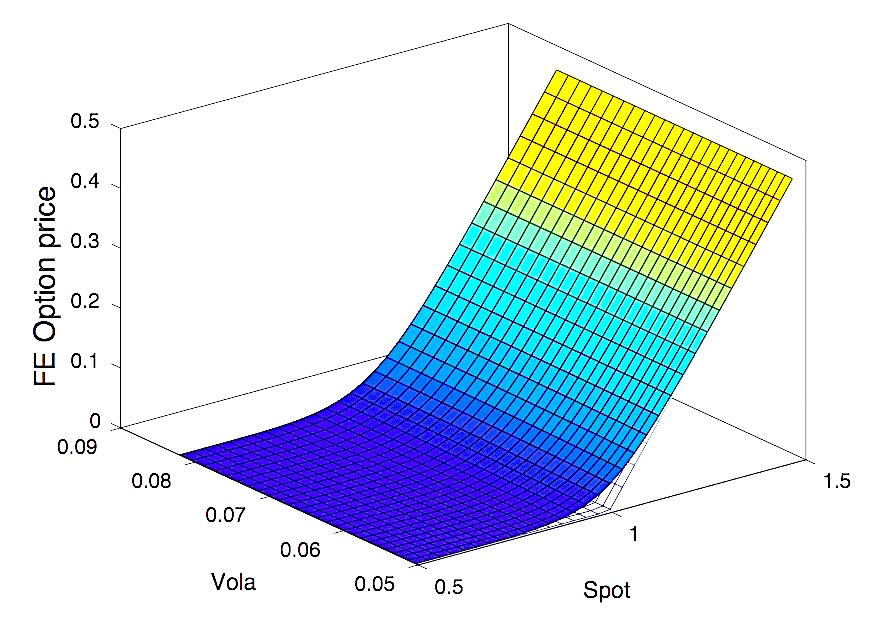} \quad 
\includegraphics[scale=0.2]{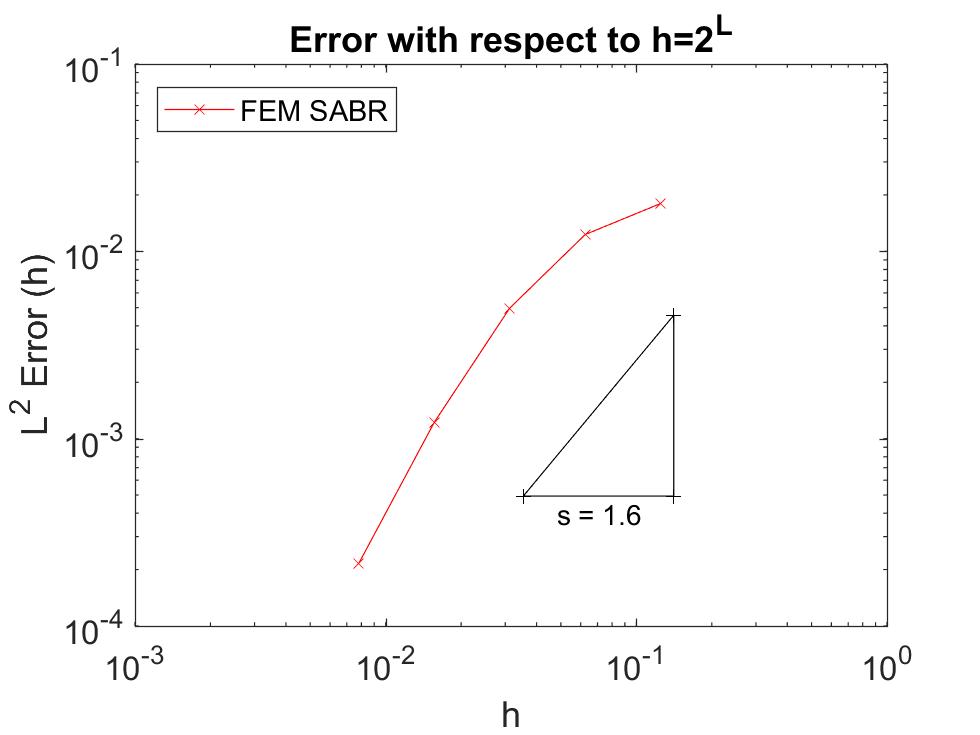}
\end{center}
Here, the image on the left gives finite element option prices for $\beta=0.5$, $\nu=1$ and correlation $\rho=-0.3$ with maturity $T=10$ years, while the image on the right shows the convergence (with respect to the mesh-width $h=2^L$) of the finite element approximation in a linear basis.\\
\begin{center}
\includegraphics[scale=0.22]{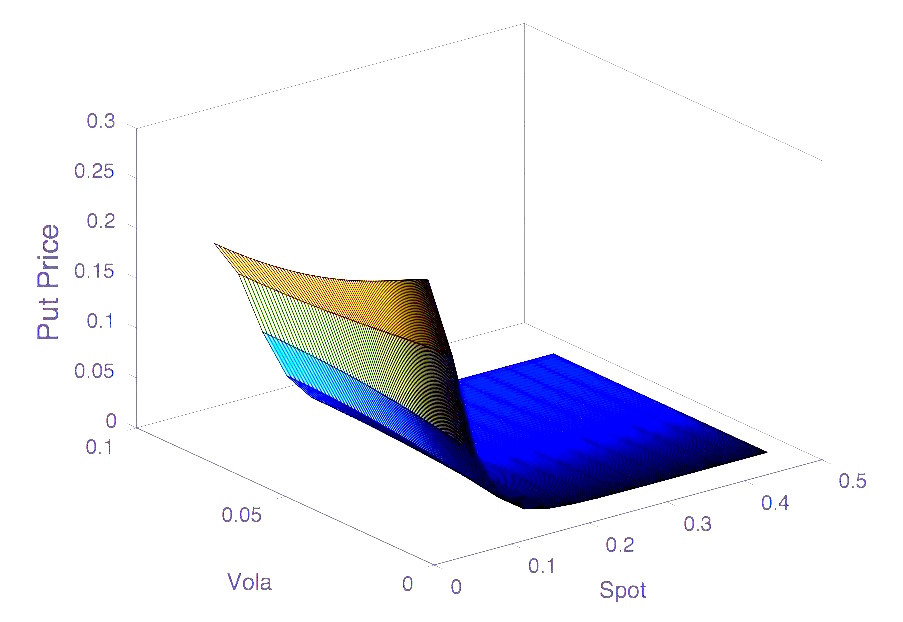}
\includegraphics[scale=0.21]{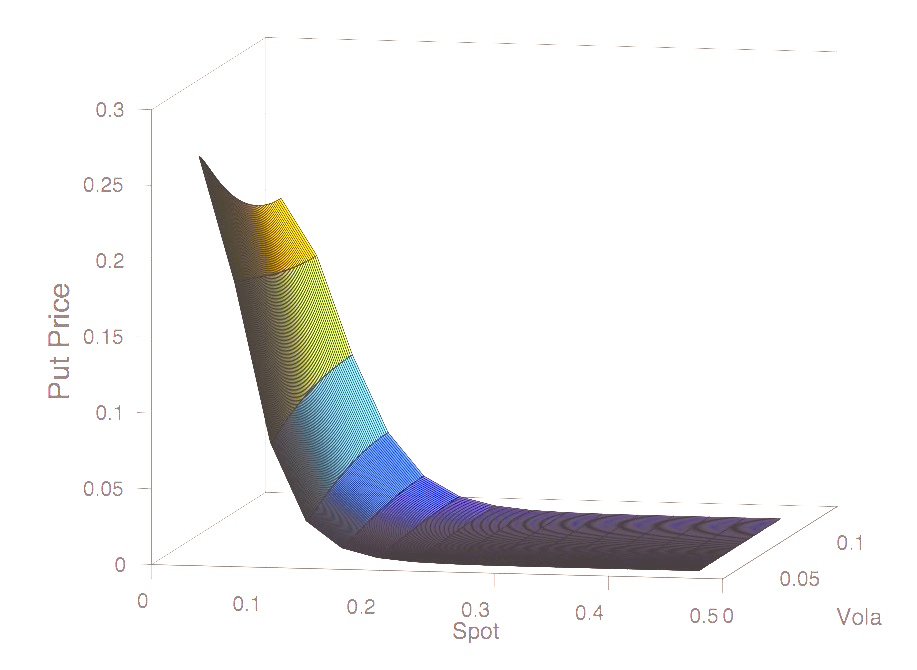}
\end{center}
The last images  display put options as an approximation of the mass at zero on the time horizon $T=10$. The put options are of the form $\max(1-\tfrac{1}{\eps}X,0)$ for a small $\epsilon>0$. The parameters in this approximation are $\beta=0.2,$ $\rho=0$, and $\nu=1$, and we chose throughout $\mu=-\beta$. 
\\

\appendix
\section{Reminder on properties of considered function spaces}
\subsection{Weighted spaces}\label{Sec:WeightedSobolevSpaces}
To make the reading self-contained, we include a reminder on some properties of weighted Sobolev spaces used in this article
and refer the reader to the monographs of \cite{Kufner} and \cite{Cavalheiro} for full details.
By a weight, we shall mean a locally integrable function $\omega$ on $\R^2$ such that
$\omega(x) > 0$ a.e. Every weight $\omega$ gives rise to a measure (via integration $\omega(E)=\int_E \omega(x)dx$, for measurable sets $E\subset R^2$). This measure is also denoted by $\omega$.
\begin{definition}[Weighted $\Ll^2$-space]\label{Def:WeightedL2Space}
Let $\omega$ be a weight on an open set $G \subset \R^2$. $\Ll^2(G, \omega)$ is the set of measurable functions $u$ on $G$ such that
\begin{equation}\label{eq:DefWeightedL2Space}
||u||_{\Ll^2(G,\omega)}^2=\int_G |u(x)|^2 \omega(x)dx<\infty
\end{equation}
\end{definition}
\begin{definition}[Weighted Sobolev space]\label{Def:WeightedSobolevSpace}
Let $k \in \mathbb{N}$ and Let $\bold{a} = \{\omega_{\bold{a}} = \omega_{\bold{a}}(x), x \in G, |\bold{a}| \leq k\}$ be a given family of weight functions on an open set $G \subset \R^2$. We denote by $W^k(G, \varphi)$
the set of all functions $u \in \Ll^2 (G, \omega)$ for which the weak derivatives $D^{(\bold{a})} u$, with
$| \bold{a}| \leq k$, belong to $\Ll^2 (G, \omega_{\bold{a}})$.
The weighted Sobolev space $W^k(G, \varphi)$ is a normed
linear space if equipped with the norm
\begin{equation}\label{eq:DefWeightedSobolevSpace1}
||u||_{W^k(G, \omega)}^2
=\sum_{|\bold{a}| \leq k}\int_G |D^{\bold{a}}u(x)|^2 \omega_{\bold{a}}(x)dx.
\end{equation}
\end{definition}
\begin{remark}\label{Rem:L1loc}
If $\omega_{\omega} \in \Ll^1_{loc}(G)$ then $\C_0^{\infty}(G)$
is a subset of $W^k(G, \omega)$, and we can introduce the space $W_0^k(G, \omega)$ as the closure
of $\C_0^{\infty} (G)$ with respect to the norm $W^k(G, \omega)$, see also \cite{Cavalheiro,KufnerOpic}.
\end{remark}
\noindent The class of $A^p$ weights was introduced by B. Muckenhoupt (cf. \cite{Muckenhoupt}). This class is relevant for the property that it ensures $\C_0^{\infty}(G) \subset W^k(G, \omega)$. This is used in Sections \ref{Sec:SettingandVariationalFormulation} and \ref{Sec:WellPosednessSABR}.\\
A weight $\omega$ is in $A^p$ if 
there exists a positive constant
$C$ such that for every ball $B \subset \R^2$
\begin{equation}
\left(\frac{1}{|B|}\int_B \omega dx\right) \left(\frac{1}{|B|} \int_B \omega^{-1/(p-1)} dx\right)^{p-1}\leq C.
\end{equation}
\begin{lemma}\label{Lem:MuckenhouptWeights}
$\omega(x) := |x|$, $x\in \R^2$  is in $A_p$ if and only if $-2 < \omega < 2(p - 1)$.
\end{lemma}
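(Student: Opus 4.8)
The plan is to verify the Muckenhoupt condition directly for the power weight $\omega(x)=|x|^\alpha$ on $\R^2$, reading the claimed range as $-2<\alpha<2(p-1)$. The whole argument hinges on one elementary polar-coordinate computation: for any exponent $\gamma$ and any $R>0$,
\begin{equation*}
\int_{B(0,R)}|x|^\gamma\,dx=2\pi\int_0^R s^{\gamma+1}\,ds,
\end{equation*}
which is finite precisely when $\gamma>-2$, in which case it equals $\tfrac{2\pi}{\gamma+2}R^{\gamma+2}$. The dual weight entering the $A_p$ product is $\omega^{-1/(p-1)}=|x|^{-\alpha/(p-1)}$, so both factors are power weights and the computation above applies to each.

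For necessity I would observe that the $A_p$ inequality forces each of the two averages in the defining product to be finite for every ball, in particular for balls centred at the origin. By the displayed integral, finiteness of $\tfrac{1}{|B|}\int_B|x|^\alpha\,dx$ near $0$ requires $\alpha>-2$, and finiteness of $\tfrac{1}{|B|}\int_B|x|^{-\alpha/(p-1)}\,dx$ requires $-\alpha/(p-1)>-2$, i.e.\ $\alpha<2(p-1)$. Thus the stated range is necessary.

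For sufficiency I would fix $B=B(x_0,r)$ and split according to the position of its centre. If $|x_0|\ge 2r$ then $|x|\approx|x_0|$ on all of $B$, so each average is comparable to $|x_0|$ raised to the corresponding power and the $A_p$ product is bounded by a constant independent of $B$. If instead $|x_0|<2r$ then $B\subset B(0,3r)$, and the integral computation gives $\tfrac{1}{|B|}\int_B|x|^\gamma\,dx\lesssim r^\gamma$ for every $\gamma>-2$; applying this with $\gamma=\alpha$ and with $\gamma=-\alpha/(p-1)$ yields the two averages $\lesssim r^\alpha$ and $\lesssim r^{-\alpha/(p-1)}$, whence the $A_p$ product is $\lesssim r^{\alpha}\bigl(r^{-\alpha/(p-1)}\bigr)^{p-1}=r^{\alpha-\alpha}=1$. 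Since both cases give a uniform bound, $\omega\in A_p$.

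The only genuinely delicate point, and the step I would treat most carefully, is the exact cancellation of the $r$-powers in the near-origin case: it works precisely because the dual exponent carries the factor $1/(p-1)$, so that raising the second average to the power $p-1$ undoes the scaling of the first. Everything else (the inclusion $B\subset B(0,3r)$, and the two-sided bound $|x|\approx|x_0|$ away from the origin) costs only harmless multiplicative constants, so no real obstacle arises once this cancellation and the local-integrability thresholds $\gamma>-2$ are in place.
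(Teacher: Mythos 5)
Your proof is correct, but it takes a genuinely different route from the paper: the paper does not prove the lemma at all, it simply cites Corollary 4.4 of \cite{Torchinsky} and Corollary 2.18 of \cite{GarciaCuervaFrancia}, where this characterization of power weights is established. Your argument is the standard self-contained verification underlying those references, and each step checks out: the polar-coordinate computation giving the local-integrability threshold $\gamma>-2$ in $\R^2$; the necessity argument (since $\omega>0$ a.e., both factors in the $A_p$ product are strictly positive, so boundedness of the product forces finiteness of each average, hence $\alpha>-2$ and $-\alpha/(p-1)>-2$); and the sufficiency dichotomy, where for $|x_0|\geq 2r$ one has $|x_0|/2\leq|x|\leq\tfrac{3}{2}|x_0|$ on $B(x_0,r)$ so the product is $\approx|x_0|^{\alpha}\bigl(|x_0|^{-\alpha/(p-1)}\bigr)^{p-1}=1$, and for $|x_0|<2r$ the inclusion $B\subset B(0,3r)$ plus the polar computation gives averages $\lesssim r^{\alpha}$ and $\lesssim r^{-\alpha/(p-1)}$, whose $A_p$ combination again cancels to a constant. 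You also correctly read through the paper's typographical slip (the statement writes $\omega(x):=|x|$ and ``$-2<\omega<2(p-1)$'' where the exponent $\alpha$ of $|x|^{\alpha}$ is meant), which is exactly the content of the cited corollaries (in $\R^n$: $-n<\alpha<n(p-1)$). What your approach buys is self-containedness and transparency about where the two endpoint restrictions come from, namely local integrability near the origin of $\omega$ and of the dual weight $\omega^{-1/(p-1)}$ respectively; what the paper's citation buys is brevity and access to the general $n$-dimensional statement without reproving a classical fact.
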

\begin{proof}
See Corollary 4.4 in \cite{Torchinsky}, and Corollary 2.18 in \cite{GarciaCuervaFrancia}.
\end{proof}

\begin{lemma}\label{Lem:ClosednessContinuousEmbedding}
If $\omega \in A^p$ then since $\omega^{-1/(p-1)}$ is locally integrable, we have $L^p(G, \omega) \subset L^1_{loc}(G)$ for
every open set $G\subset \R^n$. Therefore, weak derivatives of functions in
$L^p(G, \omega)$ are well-defined. Furthermore, if $\omega \in A^p$ then $\C^{\infty} (G)$ is dense in $W^{k,p} (G, \omega)$.
\end{lemma}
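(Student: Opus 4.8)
The plan is to treat the two assertions in turn. For the inclusion $L^p(G,\omega)\subset L^1_{loc}(G)$, first I would record the consequence of the Muckenhoupt condition quoted in the statement: rewriting the $A^p$ inequality as
\[
\left(\frac{1}{|B|}\int_B \omega^{-1/(p-1)}\,dx\right)^{p-1}\leq C\left(\frac{1}{|B|}\int_B \omega\,dx\right)^{-1}
\]
and using that $\omega$ is a weight (so $0<\int_B\omega\,dx<\infty$ on every ball $B$) shows $\omega^{-1/(p-1)}\in L^1_{loc}(G)$. Then, for $u\in L^p(G,\omega)$ and any ball $B$ with $\overline{B}\subset G$, I would split $|u|=\bigl(|u|\,\omega^{1/p}\bigr)\cdot\omega^{-1/p}$ and apply H\"older's inequality with exponents $p$ and $p'=p/(p-1)$:
\[
\int_B|u|\,dx\leq\left(\int_B|u|^p\omega\,dx\right)^{1/p}\left(\int_B\omega^{-p'/p}\,dx\right)^{1/p'}.
\]
Since $p'/p=1/(p-1)$, the second factor is $\bigl(\int_B\omega^{-1/(p-1)}\,dx\bigr)^{1/p'}<\infty$ by the previous step, while the first factor is bounded by $\|u\|_{L^p(G,\omega)}$. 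Thus $u\in L^1_{loc}(G)$, and weak derivatives of elements of $L^p(G,\omega)$ are well defined through the usual duality with $\C_0^{\infty}(G)$.

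For the density of $\C^{\infty}(G)$ in $W^{k,p}(G,\omega)$ --- the weighted analogue of the Meyers--Serrin theorem --- the plan is to approximate by mollification. The decisive tool is Muckenhoupt's theorem: $\omega\in A^p$ is equivalent to the boundedness of the Hardy--Littlewood maximal operator $\mathcal{M}$ on $L^p(\R^n,\omega)$. For a standard mollifier $(\phi_\varepsilon)_{\varepsilon>0}$ one has the pointwise domination $|\phi_\varepsilon * v(x)|\leq C\,\mathcal{M}v(x)$ uniformly in $\varepsilon$, so $\mathcal{M}v\in L^p(G,\omega)$ serves as a dominating function; combined with the a.e.\ convergence $\phi_\varepsilon*v\to v$, dominated convergence yields $\phi_\varepsilon*v\to v$ in $L^p(G,\omega)$. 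Because mollification commutes with weak differentiation, $D^{\bold{a}}(\phi_\varepsilon*v)=\phi_\varepsilon*(D^{\bold{a}}v)$ for $|\bold{a}|\leq k$, the same convergence holds for each derivative, giving convergence in the full norm $\|\cdot\|_{W^{k,p}(G,\omega)}$.

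To globalize to all of $G$ (and to obtain $\C^{\infty}(G)$, not merely $\C_0^{\infty}$), I would fix an exhaustion $G=\bigcup_m G_m$ by relatively compact open sets $G_m\Subset G_{m+1}$ together with a subordinate partition of unity $\{\chi_m\}\subset\C_0^{\infty}(G)$, write $u=\sum_m \chi_m u$, and mollify each $\chi_m u$ with a parameter $\varepsilon_m$ small enough that $\mathrm{supp}(\phi_{\varepsilon_m}*(\chi_m u))$ stays inside a fixed neighbourhood of $\mathrm{supp}\,\chi_m$ in $G$ and that $\|\phi_{\varepsilon_m}*(\chi_m u)-\chi_m u\|_{W^{k,p}(G,\omega)}<2^{-m}\delta$. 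Local finiteness of the cover makes $\sum_m \phi_{\varepsilon_m}*(\chi_m u)$ a well-defined $\C^{\infty}(G)$ function approximating $u$ within $\delta$.

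The main obstacle is the density statement, and within it the single non-elementary ingredient is the $L^p(G,\omega)$-convergence of the mollifiers: unlike the unweighted Lebesgue case it is not automatic, and it rests entirely on Muckenhoupt's maximal-function theorem for $A^p$ weights. The inclusion $L^p(G,\omega)\subset L^1_{loc}(G)$ and the partition-of-unity globalization are, by comparison, routine once local convergence in the weighted norm is in hand.
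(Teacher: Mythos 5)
Your proof is correct; note, however, that the paper itself does not prove this lemma at all---its ``proof'' consists of the citations \cite[Corollary 2.1.6]{Turesson} and \cite[Theorem 1.5]{FabesKenigSerapioni}. What you have written is essentially the standard argument underlying those references, so your proposal supplies the content the paper outsources. The embedding step (H\"older with the splitting $|u|=(|u|\,\omega^{1/p})\cdot\omega^{-1/p}$, plus local integrability of $\omega^{-1/(p-1)}$ extracted from the $A^p$ condition) is exactly right and is the classical argument. The density step is the weighted Meyers--Serrin theorem, and your route through Muckenhoupt's characterization of $A^p$ via $L^p(\omega)$-boundedness of the Hardy--Littlewood maximal operator, the uniform pointwise bound $|\phi_\varepsilon * v|\leq C\,\mathcal{M}v$, dominated convergence, and a partition-of-unity globalization is precisely how the cited sources proceed. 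Two points should be made explicit if this is written out in full: (i) the maximal-function theorem you invoke requires $1<p<\infty$, which is implicit in the lemma since the $A^p$ condition already involves $\omega^{-1/(p-1)}$; and (ii) when mollifying $\chi_m u$ you must extend it by zero to all of $\R^n$ so that the convolution, the maximal function, and the weighted norm are globally defined---harmless here because $\chi_m u$ has compact support in $G$, and this is also what makes the commutation $D^{\bold{a}}(\phi_{\varepsilon_m}*(\chi_m u))=\phi_{\varepsilon_m}*D^{\bold{a}}(\chi_m u)$ legitimate for $\varepsilon_m$ small. With these remarks your outline is a complete, self-contained replacement for the paper's external references.
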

\begin{proof}
See \cite[Corollary 2.1.6 ]{Turesson} and \cite[Theorem 1.5]{FabesKenigSerapioni}. 
\end{proof}

\subsection{Tensor Products of Hilbert Spaces}\label{Sec:TensorHilbertSpaces}
Let $I:=(0,1)$ denote the unit interval and $G:=I\times I$. See in \cite[Section 13.1]{ReichmannComputationalMethods} that the Hilbert spaces
$H^k(G)$, $k=0,1,2$ can be constructed from $H^k(I)$ via the tensor product structure:
\begin{align}
\mathcal{L}^2(G)\cong&\left(L^2(I)\otimes L^2(I)\right)\label{eq:Tensorspacek=0}
\\
\Hh^1(G)\cong&\left(H^1(I)\otimes L^2(I)\right)\bigcap\left(L^2(I)\otimes H^1(I)\right)\label{eq:Tensorspacek=1}
\\
\Hh^2(G)\cong&\left(H^2(I)\otimes L^2(I)\right)\bigcap\left(H^1(I)\otimes H^1(I)\right)\bigcap\left(L^2(I)\otimes H^2(I)\right)\label{eq:Tensorspacek=2}
\end{align}
Recall from \cite[Chapter II. 4]{ReedSimon} that the inner products on each of the tensor-Hilbert spaces are defined as
\begin{equation}
\langle u_1 \otimes u_2,  v_1 \otimes v_2  \rangle _{H_1\otimes H_2} := \langle u_1, v_1 \rangle _{H_1} \langle u_2, v_2 \rangle  _{H_2},
\end{equation} for $u_1,v_1\in H_1$ and $u_2,v_2\in H_2$, where $H_1$, $H_2$ stand for generic Hilbert spaces (say any of the tensor products involved in \eqref{eq:Tensorspacek=0},\eqref{eq:Tensorspacek=1}, or \eqref{eq:Tensorspacek=2} above).
The inner products on the intersection spaces $\Hh^k(G)$, $k=0,1,2$ in \eqref{eq:Tensorspacek=0},\eqref{eq:Tensorspacek=1} and \eqref{eq:Tensorspacek=2} induced by this construction are equivalent to the usual norms on these
spaces. \\
Furthermore, to justify $u\equiv u_x\otimes u_y $, and $v\equiv v_x\otimes v_y \in \mathcal{L}^2(G)$ for $u_x,v_x\in L^2(I)$ and $u_y,v_y\in L^2(I)$ we recall the following \cite[Theorem II. 10. c), Chapter II. 4]{ReedSimon}:
\begin{theorem}\label{Th:DecompositionL2functions}
Let $(M_1, \mu_1)$ and $(M_2, \mu_2)$ be measure spaces so that $L^2(M_1, \mu_1)$ and $L^2(M_2, \mu_2)$ are separable, then
there is a unique isomorphism such that 
\begin{equation}\begin{array}{rl}
L^2(M_1 \times M_1, d\mu_1 \otimes d\mu_2)&\longmapsto L^2(M_1, d\mu_1; L^2(M_2, d\mu_2))\\
f(x,y)&\longmapsto (x\mapsto f(x,\cdot)).
\end{array}\end{equation}
\end{theorem}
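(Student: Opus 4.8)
The plan is to show that the natural map $\Phi$ given by $(\Phi f)(x):=f(x,\cdot)$ is a unitary isomorphism from $L^2(M_1\times M_2,\mu_1\otimes\mu_2)$ onto the Bochner space $L^2(M_1,\mu_1;L^2(M_2,\mu_2))$; uniqueness of the isomorphism then follows automatically, since a bounded linear map is determined by its values on the dense subspace spanned by the product functions $g(x)h(y)$ with $g\in L^2(M_1)$, $h\in L^2(M_2)$.

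First I would verify that $\Phi$ is well defined. For $f\in L^2(M_1\times M_2)$, Tonelli's theorem gives the identity
\begin{equation}\label{eq:TonelliIdentity}
\int_{M_1}\Big(\int_{M_2}|f(x,y)|^2\,d\mu_2(y)\Big)d\mu_1(x)=\|f\|_{L^2(M_1\times M_2)}^2<\infty,
\end{equation}
so that for $\mu_1$-almost every $x$ the slice $f(x,\cdot)$ lies in $L^2(M_2)$. Strong measurability of the map $x\mapsto f(x,\cdot)$ into $L^2(M_2)$ follows from separability via the Pettis measurability theorem: it suffices to check that $x\mapsto\langle f(x,\cdot),h\rangle_{L^2(M_2)}$ is measurable for $h$ ranging over a countable dense set, which holds by applying Fubini to $f(x,y)\overline{h(y)}$. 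The identity \eqref{eq:TonelliIdentity} simultaneously shows that $\Phi$ is norm preserving, hence a linear isometry; in particular $\Phi$ is injective with closed range.

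The main work is to establish surjectivity of $\Phi$. Using separability, fix an orthonormal basis $\{e_n\}_{n\in\N}$ of $L^2(M_2)$. Given $F\in L^2(M_1;L^2(M_2))$, put $c_n(x):=\langle F(x),e_n\rangle_{L^2(M_2)}$; each $c_n$ is measurable, and Parseval's identity $\sum_n|c_n(x)|^2=\|F(x)\|_{L^2(M_2)}^2$ combined with integration in $x$ yields $\sum_n\|c_n\|_{L^2(M_1)}^2=\|F\|^2<\infty$. I would then define $f(x,y):=\sum_n c_n(x)e_n(y)$ and check, using orthonormality of $\{e_n\}$ together with monotone convergence, that $f\in L^2(M_1\times M_2)$ with $\|f\|=\|F\|$ and that the partial sums converge in $L^2(M_1;L^2(M_2))$ to $F$; consequently $\Phi f=F$.

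The hard part will be the measurability bookkeeping underpinning surjectivity: verifying that $x\mapsto f(x,\cdot)$ is genuinely strongly measurable as an $L^2(M_2)$-valued function, and that the basis expansion $f(x,y)=\sum_n c_n(x)e_n(y)$ assembles into a jointly $(\mu_1\otimes\mu_2)$-measurable square-integrable function. Both steps rely crucially on the separability hypothesis---through Pettis's theorem and the existence of a countable orthonormal basis of $L^2(M_2)$---and the identification may fail without it. The remaining estimates are the routine applications of Tonelli's and Parseval's theorems indicated above.
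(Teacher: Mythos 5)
Your proposal is correct, but note that the paper itself gives no proof of this statement: it is quoted verbatim from Reed--Simon and justified purely by the citation [Theorem II.10(c), Chapter II.4]. What you have written is essentially the standard textbook argument behind that citation --- Tonelli for the isometry, Pettis measurability (via separability of $L^2(M_2,\mu_2)$) for well-definedness of the slice map, and an orthonormal-basis expansion $F(x)=\sum_n \langle F(x),e_n\rangle e_n$ to manufacture a preimage and get surjectivity --- so you are filling in the details the paper delegates to the reference rather than taking a different route. Two small remarks. First, your uniqueness observation can be stated even more simply: the prescription $f\mapsto(x\mapsto f(x,\cdot))$ already determines the map on all of $L^2(M_1\times M_2)$, so there is nothing to extend; the density-of-product-functions argument is only needed if one instead defines the map on simple tensors $g(x)h(y)$, as in parts (a)--(b) of the Reed--Simon theorem. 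Second, your use of Tonelli and of the product measure $\mu_1\otimes\mu_2$ implicitly requires $\sigma$-finiteness (or at least a canonical choice of product measure); this hypothesis is suppressed both in the paper's statement and in Reed--Simon's, and it is harmless in the paper's application (weighted Lebesgue measure on intervals), but a fully self-contained proof should flag it. Also, in the surjectivity step the convergence of the partial sums $\sum_{n\le N}c_n(x)e_n(y)$ in $L^2(M_1\times M_2)$ is most cleanly obtained from the Cauchy property $\|f_N-f_M\|^2=\sum_{M<n\le N}\|c_n\|_{L^2(M_1)}^2$ (orthonormality plus Fubini), with dominated convergence used to identify $\Phi f=F$; ``monotone convergence'' alone does not quite organize that step.
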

\subsubsection{Explicit construction of the bivariate spaces in Section \ref{Sec:ApproximationEstimatesWeighted}}\label{Sec:TensorHilbertSpacesExplicit}
Consider our domain of interest $G=(0,R_x)\times(-R_y,R_y)$, $R_x,R_y>0$ and the weighted Sobolev spaces
\begin{equation}\label{eq:DefSpacesForApproxProp2dim}
\mathcal{H}^k_j(G,x^{\mu/2}):=\{u:G\rightarrow \R \ \textrm{measurable}: \  \partial^{|\bold{a}|}_{\bold{a}} u \in L^2(I, x^{\mu/2+a_x \beta j}),  \ \bold{a} \leq k \},\ k=0,1,2,
\end{equation}
for $j=0,1$, and a multiindex $\bold{a}$ with $|\bold{a}|=a_x+a_y$, where $a_x$ denotes the number of derivatives in direction $x$ and $a_y$ in direction $y$.
The respective norms in $\mathcal{H}^k_j(G,x^{\mu/2})$ for $j=0$ are defined by
\begin{equation}\label{eq:SpacesForApproxProp2dimj0}\begin{array}{ll}
||u||_{\mathcal{H}^0_{j=0}(G, x^{\mu/2})}^2:=&||x^{\mu/2}u ||_{\Ll^2}^2\\
||u||_{\mathcal{H}^1_{j=0}(G, x^{\mu/2})}^2:=&||x^{\mu/2}u ||_{\Ll^2}^2+||x^{\mu/2} \partial_y(u)||_{\Ll^2}^2+||x^{\mu/2} \partial_x(u) ||_{\Ll^2}^2\\
||u||_{\mathcal{H}^2_{j=0}(G, x^{\mu/2})}^2:=&||x^{\mu/2}u ||_{\Ll^2}^2+||x^{\mu/2} \partial_y(u)||_{\Ll^2}^2+||x^{\mu/2} \partial_x(u) ||_{\Ll^2}^2\\
&+ ||x^{\mu/2}\partial_{yy}u ||_{\Ll^2}^2+||x^{\mu/2}\partial_{xy}u ||_{\Ll^2}^2+||x^{\mu/2}\partial_{yy}u ||_{\Ll^2}^2,
\end{array}
\end{equation}
and for $j=1$ by
\begin{equation}\label{eq:SpacesForApproxProp2dimj1}\begin{array}{ll}
||u||_{\mathcal{H}^0_{j=1}(G, x^{\mu/2})}^2:=&||x^{\mu/2}u ||_{\Ll^2}^2\\
||u||_{\mathcal{H}^1_{j=1}(G, x^{\mu/2})}^2:=&||x^{\mu/2}u ||_{\Ll^2}^2+||x^{\mu/2} \partial_y(u)||_{\Ll^2}^2+||x^{\beta+\mu/2} \partial_x(u) ||_{\Ll^2}^2\\
||u||_{\mathcal{H}^2_{j=1}(G, x^{\mu/2})}^2:=&||x^{\mu/2}u ||_{\Ll^2}^2+||x^{\mu/2} \partial_y(u)||_{\Ll^2}^2+||x^{\beta+\mu/2} \partial_x(u) ||_{\Ll^2}^2\\
&+ ||x^{\mu/2}\partial_{yy}u ||_{\Ll^2}^2+||x^{\beta+\mu/2}\partial_{xy}u ||_{L^2}^2+||x^{2\beta+\mu/2}\partial_{yy}u ||_{L^2}^2.
\end{array}
\end{equation}

\begin{remark}\label{Rem:CoincidingSpacesEstimates2dim}
Note, that similarly as in Remark \ref{Rem:CoincidingSpacesEstimates}, the spaces $\Hh$ and $\V$ in Definitions \ref{Def:HilbertSpaceSABR} and \ref{Def:WeightedSobolevSpaceVSABR}
and the spaces $\Hh^k_j(G)$, $k=0,1,2$, $j=0,1$ with norms as in \eqref{eq:SpacesForApproxProp2dimj0} and \eqref{eq:SpacesForApproxProp2dimj1}
 coincide $\Hh=\Hh^{0}_{j=0}(G,x^{\mu/2})=\Hh^{0}_{j=1}(G,x^{\mu/2})$, and the weighted space $\V$ satisfies $\V=\Hh^1_{j=1}(G, x^{\mu/2})$ and $\V \supset \Hh^1_{j=0}(G, x^{\mu/2})$, furthermore, on any bounded domain $G$ there holds the estimate 
\begin{equation}\label{eq:EstimateVvsWeightedH1}
||v||_{\V}^2\leq C_G ||v||^2_{\mathcal{H}^1_{j=0}(G, x^{\mu/2})} , \quad v\in \V, \quad C_G>0. 
\end{equation}
\end{remark}

\begin{lemma}\label{Lem:TensorHilbertSpacesExplicit}
The spaces in \eqref{eq:DefSpacesForApproxProp2dim}, $k=0,1,2$, $j=0,1$ can be constructed as tensor products of the spaces \eqref{eq:DefSpacesForApproxProp} and the usual (unweighted) Sobolev spaces $H^k(G)$, $k=0,1,2$, $j=0,1$ via \eqref{eq:Tensorspacek=0} \eqref{eq:Tensorspacek=1} and \eqref{eq:Tensorspacek=2} as follows
\begin{align*}
\mathcal{H}^0_{j}(G, x^{\mu/2})\cong&\left(H^0_j(I, x^{\mu/2})\otimes H^0(I)\right)
\\
\mathcal{H}^1_j(G,x^{\mu/2})\cong&\left(H^1_j(I, x^{\mu/2})\otimes H^0(I)\right)\bigcap\left(H^0_j(I, x^{\mu/2})\otimes H^1(I)\right)
\\
\mathcal{H}^2_j(G,x^{\mu/2})\cong&\left(H^2_j(I, x^{\mu/2})\otimes H^0(I)\right)\bigcap\left(H^1_j(I, x^{\mu/2})\otimes H^1(I)\right)\bigcap\left(H^0_j(I, x^{\mu/2})\otimes H^2(I)\right).  
\end{align*} 
\end{lemma}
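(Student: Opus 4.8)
The plan is to reproduce, weight by weight, the unweighted tensor-product construction \eqref{eq:Tensorspacek=0}--\eqref{eq:Tensorspacek=2} of Section \ref{Sec:TensorHilbertSpaces}, the only novelty being that the first tensor factor is now the $x$-weighted interval space rather than $L^2(I)$. The structural fact that makes this possible is already visible in the defining norms \eqref{eq:SpacesForApproxProp2dimj0} and \eqref{eq:SpacesForApproxProp2dimj1}: every weight is a power of $x$ alone, and the half-weight multiplying the mixed derivative $\partial_x^{a_x}\partial_y^{a_y}u$ equals $x^{\mu/2+a_x\beta j}$, depending only on the number $a_x$ of differentiations in the (degenerate) $x$-direction and not on $a_y$. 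Consequently the entire weight can be carried by the $x$-tensor factor $H^{a_x}_j(I,x^{\mu/2})$ of \eqref{eq:DefSpacesForApproxProp}, while the $y$-factor remains the unweighted space $H^{a_y}(I)$; this is exactly the shape of the asserted decomposition.

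First I would dispose of the base case $k=0$. By Definition \ref{Def:HilbertSpaceSABR} the space $\mathcal{H}^0_j(G,x^{\mu/2})=\Ll^2(G,x^{\mu/2})$ is $L^2(G,\,x^{\mu}\,dx\,dy)$, and its weight is the product measure of $x^{\mu}\,dx$ on the first interval with $dy$ on the second. Since both $L^2(I,x^{\mu}\,dx)$ and $L^2(I,dy)$ are separable, Theorem \ref{Th:DecompositionL2functions} applies with $(M_1,\mu_1)=(I,x^{\mu}\,dx)$ and $(M_2,\mu_2)=(I,dy)$ and, together with the standard identification $L^2(M_1;H)\cong L^2(M_1)\otimes H$ of \cite[Chapter II.4]{ReedSimon}, yields the isometric isomorphism $\Ll^2(G,x^{\mu/2})\cong L^2(I,x^{\mu/2})\otimes L^2(I)=H^0_j(I,x^{\mu/2})\otimes H^0(I)$, using that $H^0_j=L^2(\cdot,x^{\mu/2})$ is independent of $j$. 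This is the first identity.

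For $k=1,2$ I would equip each one-directional product $H^m_j(I,x^{\mu/2})\otimes H^n(I)$ with the tensor inner product of Section \ref{Sec:TensorHilbertSpaces} and compute the induced norm on an elementary tensor $u=u_x\otimes u_y$. Because $\partial_x^{a_x}\partial_y^{a_y}(u_x\otimes u_y)=(D^{a_x}u_x)\otimes(D^{a_y}u_y)$ and the inner product factorises, one obtains $\|u\|^2=\|u_x\|^2_{H^m_j(I,x^{\mu/2})}\,\|u_y\|^2_{H^n(I)}$, which by Fubini equals $\sum_{a_x\le m}\sum_{a_y\le n}\|x^{\mu/2+a_x\beta j}\,\partial_x^{a_x}\partial_y^{a_y}u\|^2_{\Ll^2(G)}$. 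Finite sums of such elementary tensors are dense---$x^{\mu}$ is a Muckenhoupt weight by Lemma \ref{Lem:MuckenhouptWeights}, so smooth functions, and hence products $f(x)g(y)$, are dense in the weighted spaces by Lemma \ref{Lem:ClosednessContinuousEmbedding}---so the norm identity and the weak-derivative identity $\partial_x^{a_x}\partial_y^{a_y}=D^{a_x}\otimes D^{a_y}$ extend to the whole factor. Intersecting the factors listed on the right-hand side of the claim then collects precisely the multi-indices $(a_x,a_y)$ with $a_x+a_y\le k$, each carrying the weight $x^{\mu/2+a_x\beta j}$, which reproduces \eqref{eq:SpacesForApproxProp2dimj0} and \eqref{eq:SpacesForApproxProp2dimj1} term for term.

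I expect the only real obstacle to be the weight bookkeeping inside the intersection, and specifically the consistency check that the weight assigned to a given mixed derivative is the same in every factor that sees it. For instance, in the $k=2$ case the cross term $(a_x,a_y)=(1,1)$ occurs only in the factor $H^1_j(I,x^{\mu/2})\otimes H^1(I)$, where it inherits the half-weight $x^{\mu/2+\beta j}$ from the first factor; since the weight depends on $a_x$ alone this matches the single contribution allotted to $\partial_{xy}u$ in \eqref{eq:SpacesForApproxProp2dimj1}, and likewise $(1,0)$ carries $x^{\mu/2+\beta j}$ consistently in both $H^2_j\otimes H^0$ and $H^1_j\otimes H^1$. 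Granting this, the intersection norm equals (not merely is equivalent to) the defining norm of $\mathcal{H}^k_j(G,x^{\mu/2})$, and the isomorphisms of \eqref{eq:Tensorspacek=0}--\eqref{eq:Tensorspacek=2} carry over verbatim. The remaining ingredients---separability of the weighted interval spaces, density of elementary tensors, and compatibility of weak differentiation with the tensor structure---are supplied by Theorem \ref{Th:DecompositionL2functions} and the facts recalled in Appendix \ref{Sec:WeightedSobolevSpaces}.
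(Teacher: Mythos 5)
Your proposal is correct and follows exactly the route the paper intends: the paper states this lemma without a written proof, presenting it as a direct consequence of the material recalled in the appendix---the unweighted tensor constructions \eqref{eq:Tensorspacek=0}--\eqref{eq:Tensorspacek=2}, the Reed--Simon identification of Theorem \ref{Th:DecompositionL2functions}, and the weighted-space facts of Lemmas \ref{Lem:MuckenhouptWeights} and \ref{Lem:ClosednessContinuousEmbedding}---which are precisely the ingredients you assemble. Your key structural observation, that the weight $x^{\mu/2+a_x\beta j}$ depends only on the number of $x$-derivatives and can therefore be carried entirely by the first tensor factor, is exactly why the unweighted argument carries over, and your intersection bookkeeping reproduces the defining norms \eqref{eq:SpacesForApproxProp2dimj0}--\eqref{eq:SpacesForApproxProp2dimj1} up to the harmless multiple counting of repeated terms, which is all the asserted $\cong$ (norm equivalence, not equality) requires.
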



\section{Non-symmetric Dirichlet forms}\label{Sec:NonSymmDirichlet} 
We include here a condensed reminder of some of the basic concepts on non-symmetric Dirichlet forms, which are used in previous sections, in particular in Theorem \ref{Th:SABRDirichletForm}. For full details see \cite{RoecknerMa}.
\begin{definition}[Symmetric closed form]\label{Def:SymmetricClosedForm}
A pair $(\cE,D(\cE))$ is called a \emph{symmetric closed form} on the Hilbert space $(\Hh,(\cdot,\cdot)_{\Hh})$, if $D(\cE)$ is a dense linear subspace  of$\Hh$,
and $\cE:D(\cE)\times D(\cE)\rightarrow \R$ is a non-negative definite symmetric bilinear form, which is closed on $\Hh$. That is, $D(\cE)$ is a complete metric space
with respect to the norm $\cE_1(\cdot,\cdot)^{1/2}:=(\cE(\cdot,\cdot)+(\cdot,\cdot)_{\Hh})^{1/2}$.
\end{definition}
\begin{definition}[Coercive closed form]\label{Def:CoerciveClosedForm}
A pair $(\cE,D(\cE))$ is called a \emph{coercive closed form} on the Hilbert space $\Hh$ if $D(\cE)$ is a dense linear subspace of $\Hh$, and
$\cE:D(\cE)\times D(\cE)\rightarrow \R$ is a bilinear form such that the following two conditions hold:
\begin{itemize}
 \item Its \emph{symmetric part $\widetilde{\cE}(u,v):=\frac{1}{2}\left(\cE(u,v)+\cE(v,u)\right)$} is a symmetric closed form on $\Hh$.
 \item The pair $(\cE,D(\cE))$ satisfies the so-called \emph{weak sector condition}: there exists a \emph{continuity constant} $K>0$  such that 
\begin{equation}\label{eq:weaksector}
 |\cE_1(u,v)|\leq K \ \cE_1(u,u)^{1/2}\cE_1(v,v)^{1/2}\quad \textrm{for all} \ u,v \in D(\cE).
\end{equation}
\end{itemize}
\end{definition}
\begin{remark}\label{Rem:StrongSectorWeakSector}
Recall the continuity property \eqref{eq:Defcontinuity} (also considered in Section \ref{Sec:WellPosednessSABR}) 
\begin{equation*}
 |\cE(u,v)|\leq K \ \cE(u,u)^{1/2}\cE(v,v)^{1/2}\quad \textrm{for all} \ u,v \in D(\cE)
\end{equation*}
implies the weak sector condition \eqref{eq:weaksector} above.
\end{remark}

\begin{definition}[Dirichlet form]\label{Def:NonSymmDirichlet}
Consider a Hilbert space $(\Hh,(\cdot,\cdot)_{\Hh})$ of the form $\Hh~=~\Ll^2(E,m)$, where 
$(E,m)$ is a measure space.
A coercive closed form $(\cE,D(\cE))$ on $\Ll(E,m)$ is called a \emph{(non-symmetric) Dirichlet form}, if for all $u \in \DcE\subset E$, one has the \emph{contraction properties}
\begin{equation}\label{eq:SubMContrBilinearformNonSymm}
\begin{array}{lll}
u^+\wedge 1 \in \DcE &\textrm{and}&\cE(u+u^+\wedge 1,u-u^+\wedge 1)\geq 0\\
&\textrm{and}& \cE(u-u^+\wedge 1,u+u^+\wedge 1)\geq 0,
\end{array}
\end{equation}
where for any $u,v:E\rightarrow \R$, we have set
\begin{equation}\label{eq:WedgePositivePart}
\begin{array}{llll}
u\wedge v:=\inf(u,v),& u \vee v:=\sup(u,v),&u^+:=u \vee 0, &u^-:=-(u\wedge 0).
\end{array}
\end{equation}
A coercive closed form satisfying one of the two inequalities in \eqref{eq:SubMContrBilinearformNonSymm} is called \emph{$\frac{1}{2}$-~Dirichlet~form}. \\
If $(\cE, \DcE)$ is in addition symmetric, that is $\cE=\widetilde{\cE}$, where $\widetilde{\cE}$ denotes the symmetric part of $\cE$ (recall $\widetilde{\cE}(u,v):=\frac{1}{2}\left(\cE(u,v)+\cE(v,u)\right)$), then 
$(\cE,\DcE)$ is called a \emph{symmetric} Dirichlet form. \\In the latter case, the contraction property in condition \eqref{eq:SubMContrBilinearformNonSymm} reduces to
\begin{equation}\label{eq:SubMContrBilinearformSymm}
\cE(u^+\wedge 1,u^+\wedge 1)\leq \cE(u,u).
\end{equation}
\end{definition}
See \cite[Section 4, Def. 4.5]{RoecknerMa}
\begin{theorem}\label{Th:AssociatedSemigroup}
Let $(\cE,D(\cE))$ be a coercive closed form on a Hilbert space $(\Hh,(\cdot,\cdot)_{\Hh})$ with continuity constant $K>0$. Define the domain
\begin{equation}\label{eq:GeneratorofDirichletform}
D(A):=\{u \in \DcE \ | \ v\mapsto \cE(u,v)\ \textrm{is continuous w.r.t. } (\cdot,\cdot)_{\Hh}^{1/2} \textrm{ on } \DcE \}. 
\end{equation}
For any $u\in D(A)$, let $Au$ denote the unique element in $\Hh$ such that 
\begin{equation}\label{eq:BilinearGenerator}
(-Au,v)=\cE(u,v) \ \textrm{for all} \ v \in \DcE. 
\end{equation}
Then $A$ is the generator of the unique strongly continuous contraction resolvent\footnote{See \cite[Section 4]{RoecknerMa}.} $(G_{\alpha})_{\alpha>0}$ on $\Hh$ which satisfies 
\begin{equation}
\begin{array}{rcl}
G_{\alpha}(\Hh), \subset \DcE&\textrm{and}&\cE(G_{\alpha}f,u)+\alpha(G_{\alpha}f,u)_{\Hh}=(f,u)_{\Hh}\\
\textrm{for all } &f\in \Hh, & u\in \DcE, \quad \alpha>0.
\end{array}
\end{equation}
Furthermore, since $(\cE,D(\cE))$ is a coercive closed form on $(\Hh,(\cdot,\cdot)_{\Hh})$, there exists a further unique strongly continuous contraction resolvent $(\hat G_{\alpha})_{\alpha>0}$ on $\Hh$, which satisfies
\begin{equation}
\begin{array}{rcl}
\hat G_{\alpha}(\Hh)\subset \DcE&\textrm{and}&(f,u)_{\Hh}= \cE(u,\hat G_{\alpha}f)+\alpha(u,\hat G_{\alpha}f)_{\Hh}\\
\textrm{for all } &f\in \Hh, & u\in \DcE, \quad \alpha>0.
\end{array}
\end{equation}
In particular, $\hat G_{\alpha}$ is the adjoint of $G_{\alpha}$ for all $\alpha>0$. That is
\begin{equation}
\begin{array}{rll}
(G_{\alpha}f,g)_{\Hh}=(f,\hat G_{\alpha}g)_{\Hh} &\textrm{for all}& f,g \in \Hh,
\end{array}
\end{equation}
and similarly, for the (unique) strongly continuous contraction semigroups $(P_t)_{t\geq0}, (\hat P_t)_{t\geq0}$ corresponding to
$(G_{\alpha})_{\alpha>0}$ and $(\hat G_{\alpha})_{\alpha>0}$ respectively it holds that
\begin{equation}
\begin{array}{rll}
(P_{t}f,g)_{\Hh}=(f,\hat P_{t}g)_{\Hh} &\textrm{for all}& f,g \in \Hh,\ t\geq 0.
\end{array}
\end{equation}
\end{theorem}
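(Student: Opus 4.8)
The plan is to follow the standard route in the theory of non-symmetric Dirichlet forms of Ma and R\"ockner \cite{RoecknerMa}: construct the resolvent family by the Lax--Milgram theorem applied to the \emph{shifted} form, verify that it is a strongly continuous contraction resolvent, and then invoke the Hille--Yosida generation theorem to obtain the semigroup. Throughout, the working Hilbert space is $\DcE$ equipped with the inner product $\widetilde{\cE}_1(\cdot,\cdot):=\widetilde{\cE}(\cdot,\cdot)+(\cdot,\cdot)_{\Hh}$, under which it is complete because the symmetric part $\widetilde{\cE}$ of the coercive closed form $(\cE,\DcE)$ is a symmetric closed form.

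First I would fix $\alpha>0$ and consider on $(\DcE,\widetilde{\cE}_1)$ the bilinear form $\cE_\alpha(u,v):=\cE(u,v)+\alpha(u,v)_{\Hh}$. I would check that $\cE_\alpha$ is coercive and continuous there: coercivity follows from $\cE_\alpha(u,u)=\widetilde{\cE}(u,u)+\alpha\|u\|_{\Hh}^2\geq \min(1,\alpha)\,\widetilde{\cE}_1(u,u)$ (the antisymmetric part vanishes on the diagonal and $\widetilde{\cE}\geq 0$), and continuity from the weak sector condition \eqref{eq:weaksector} for $\cE_1$ together with a Cauchy--Schwarz bound on the zeroth-order perturbation, giving $|\cE_\alpha(u,v)|\leq(K+|\alpha-1|)\,\widetilde{\cE}_1(u,u)^{1/2}\widetilde{\cE}_1(v,v)^{1/2}$. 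Since $u\mapsto(f,u)_{\Hh}$ is a bounded functional on $(\DcE,\widetilde{\cE}_1)$ for each $f\in\Hh$, Lax--Milgram produces a unique $G_\alpha f\in\DcE$ with $\cE_\alpha(G_\alpha f,u)=(f,u)_{\Hh}$ for all $u\in\DcE$, which is exactly the defining relation in the statement. Testing with $u=G_\alpha f$ gives $\alpha\|G_\alpha f\|_{\Hh}^2\leq(f,G_\alpha f)_{\Hh}$, whence the contraction bound $\|\alpha G_\alpha\|_{\mathcal{L}(\Hh)}\leq 1$.

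Next I would show that $(G_\alpha)_{\alpha>0}$ is a strongly continuous contraction resolvent. The resolvent identity $G_\alpha-G_\beta=(\beta-\alpha)G_\alpha G_\beta$ follows by inserting the definitions of $G_\alpha$ and $G_\beta$ and comparing the two equations; injectivity of $G_\alpha$ follows from density of $\DcE$ in $\Hh$. The remaining point is strong continuity $\lim_{\alpha\to\infty}\alpha G_\alpha f=f$ for all $f\in\Hh$, which I would prove first for $f\in\DcE$ (there the coercivity estimate controls $\widetilde{\cE}_1(\alpha G_\alpha f-f,\cdot)$ and forces convergence) and then extend to all of $\Hh$ via the uniform bound $\|\alpha G_\alpha\|\leq 1$ and density. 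With these three properties the Hille--Yosida theorem for resolvents yields a unique strongly continuous contraction semigroup $(P_t)_{t\geq0}$ whose generator $A$ satisfies $G_\alpha=(\alpha-A)^{-1}$; one then checks that this $A$ coincides with the operator defined in \eqref{eq:GeneratorofDirichletform}--\eqref{eq:BilinearGenerator}, since $\mathrm{Ran}(G_\alpha)=D(A)$ and $(-Au,v)_{\Hh}=\cE(u,v)$ is precisely the defining property of $G_\alpha$ rewritten. For the co-resolvent I would observe that the dual form $\hat\cE(u,v):=\cE(v,u)$ is again a coercive closed form with the same symmetric part and sector constant $K$; running the identical construction gives $(\hat G_\alpha)_{\alpha>0}$ and its associated semigroup $(\hat P_t)_{t\geq0}$. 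Testing the defining relation of $G_\alpha$ at $u=\hat G_\alpha g$ and that of $\hat G_\alpha$ at $u=G_\alpha f$ shows both equal $\cE_\alpha(G_\alpha f,\hat G_\alpha g)$, yielding $(G_\alpha f,g)_{\Hh}=(f,\hat G_\alpha g)_{\Hh}$, i.e. $\hat G_\alpha=G_\alpha^\ast$; since taking adjoints commutes with the resolvent and hence, via Hille--Yosida, with the semigroup, we get $\hat P_t=P_t^\ast$ and the stated identity $(P_tf,g)_{\Hh}=(f,\hat P_tg)_{\Hh}$.

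The main obstacle will be the strong-continuity step $\alpha G_\alpha f\to f$ together with the clean identification of the abstractly defined generator $A$ with $(\alpha-A)^{-1}$; this is where the non-symmetry genuinely matters, as one cannot appeal to the spectral theorem (as in the symmetric case) and must instead lean on the sector condition to keep $\cE_\alpha$ simultaneously coercive and continuous on $(\DcE,\widetilde{\cE}_1)$ for every $\alpha>0$. Everything else is a routine transcription of the Lax--Milgram and Hille--Yosida machinery; indeed the full argument is carried out in \cite[Chapter I]{RoecknerMa}.
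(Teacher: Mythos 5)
Your proposal is correct and is essentially the same argument the paper relies on: the paper's proof of Theorem \ref{Th:AssociatedSemigroup} consists entirely of a citation to \cite[Theorem 2.8, Corollary 2.10 and Proposition 2.16]{RoecknerMa}, and your Lax--Milgram construction of $G_\alpha$ on $(\DcE,\widetilde{\cE}_1)$, the contraction and resolvent identities, the strong-continuity/density step, the Hille--Yosida passage to $(P_t)_{t\geq 0}$, and the dual-form construction of $\hat G_\alpha=G_\alpha^\ast$ reproduce exactly the proof given in that reference. No gaps.
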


\begin{proof}
See: \cite[Theorem 2.8, Corollary 2.10 and Proposition 2.16]{RoecknerMa}. 
\end{proof}
\begin{definition}[Contraction Properties]\label{Def:SubMarkovianResSemigr}
Let $(\Hh,(\cdot,\cdot)_{\Hh})$ be a Hilbert space where $\Hh=\Ll^2(E,m)$, and where 
$(E,m)$ is a measure space.
For any $f,g \in \Ll^2(E,m)$ we write $f\leq g$ or $f<g$ for any $m$-classes $f,g$ of functions on $E$, if the respective inequality holds $m$-a.e. for corresponding representatives.
\begin{itemize}
 \item[$(i)$] Let $G$ be a bounded linear operator on $\Ll^2(E,m)$ with domain $D(G)=\Ll^2(E,m)$. Then $G$ is called 
\emph{sub-Markovian}, if for all $f \in \Ll^2(E,m)$ the condition $0\leq f\leq 1$ implies 
$0\leq Gf \leq 1$.
 \item[$(ii)$] A strongly continuous contraction semigroup $(P_t)_{t\geq 0}$ resp. resolvent $(G_{\alpha})_{\alpha>0}$
is called \emph{sub-Markovian} if all $P_t$, $t\geq 0$ resp. $\alpha G_{\alpha}$, $\alpha>0$ are sub-Markovian.
 \item[$(iii)$] A closed, densely defined operator $A$ on $(\Ll^2(E,m),(\cdot,\cdot)_{\Hh})$ is called \emph{Dirichlet operator} if
$(Au,(u-1)^+)_{\Hh}\leq 0$ for all $u\in D(A)\subset E$.
\end{itemize}
See \cite[Section 4, Def. 4.1]{RoecknerMa}. 
\end{definition}

\begin{theorem}\label{Th:EquivalenceContractionProperties}
Consider a Hilbert space $(\Hh,(\cdot,\cdot)_{\Hh})$ of the form $\Hh~=~\Ll^2(E,m)$, where $(E,m)$ is a measure space.
Let $(G_{\alpha})_{\alpha>0}$ be a strongly continuous contraction resolvent on $(\Ll^2(E,m),(\cdot,\cdot)_{\Hh})$ with corresponding generator $A$ and semigroup $(P_t)_{t\geq 0}$. 
Furthermore, let $(\cE,D(\cE))$ be a coercive closed form on $\Ll^2(E,m)$ with continuity constant $K>0$ and corresponding
resolvent $(G_{\alpha})_{\alpha>0}$.
Then the following are equivalent:
\begin{itemize}
 \item[$(i)$] $(P_t)_{t\geq 0}$ is sub-Markovian.
 \item[$(ii)$] $A$ is a Dirichlet operator.
 \item[$(iii)$] $(G_{\alpha})_{\alpha>0}$ is sub-Markovian.
 \item[$(iv)$] for all $u\in \DcE$, $u^+\wedge 1 \in \DcE$ and $\cE(u+u^+\wedge 1,u-u^+\wedge 1)\geq 0,$ that is
 $(\cE,D(\cE))$ is a $\frac{1}{2}$-Dirichlet form.
\end{itemize}
If in the above statements the operators $(G_{\alpha})_{\alpha>0}$ (resp. $(P_t)_{t\geq 0}$ and $A$) are replaced by their adjoints $(\hat G_{\alpha})_{\alpha>0}$ (resp. $(\hat P_t)_{t\geq 0}$ and $\hat A$),
then the analogous equivalences hold, where in $(iv)$ the entries of $\cE$ are interchanged. Hence, if $(iii)$ (resp. $(ii)$ or $(i)$) holds both for $(G_{\alpha})_{\alpha>0}$ (resp. $A$ or $(P_t)_{t\geq 0}$) and its adjoint $(\hat G_{\alpha})_{\alpha>0}$ (resp. $\hat A$ or $(\hat P_t)_{t\geq 0}$), then the coercive closed form
$(\cE,D(\cE))$ is a (non-symmetric) Dirichlet form.
\end{theorem}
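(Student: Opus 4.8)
The plan is to prove the cyclic chain of equivalences (i)$\iff$(iii), (ii)$\iff$(iii) and (iii)$\iff$(iv), and then to deduce the adjoint statements and the concluding non-symmetric Dirichlet form claim by applying the same equivalences to the co-form. All four items refer to the \emph{same} resolvent $(G_\alpha)_{\alpha>0}$, generator $A$ and semigroup $(P_t)_{t\ge0}$, which are linked both to each other (by Hille--Yosida theory) and to the form $\cE$ (by the resolvent identity of Theorem \ref{Th:AssociatedSemigroup}); the proof is essentially a dictionary translating the order-preservation property of Definition \ref{Def:SubMarkovianResSemigr} back and forth between these descriptions.

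First I would establish (i)$\iff$(iii) purely at the level of the one-parameter families. The Laplace representation $G_\alpha f=\int_0^\infty \E^{-\alpha t}P_t f\,\D t$ together with $\alpha\int_0^\infty\E^{-\alpha t}\,\D t=1$ exhibits $\alpha G_\alpha$ as an average of the operators $P_t$; since the order interval $\{0\le h\le1\}$ is convex and closed in $\Hh=\Ll^2(E,m)$, sub-Markovianity of $(P_t)_{t\ge0}$ transfers to $(\alpha G_\alpha)_{\alpha>0}$. For the converse I would invoke the exponential (Yosida) formula $P_t f=\lim_{n\to\infty}(\tfrac{n}{t}G_{n/t})^n f$ and the stability of the sub-Markovian class under composition and strong limits. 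The equivalence (ii)$\iff$(iii) I would read off the identity $\alpha u-Au=f$ for $u=G_\alpha f$, translating the resolvent condition $0\le f\le1\Rightarrow 0\le\alpha G_\alpha f\le1$ into the generator inequality $(Au,(u-1)^+)_{\Hh}\le0$ defining a Dirichlet operator in Definition \ref{Def:SubMarkovianResSemigr}.

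The heart of the argument is (iii)$\iff$(iv), and here I would work directly from the resolvent identity of Theorem \ref{Th:AssociatedSemigroup}, $\cE(G_\alpha f,v)+\alpha(G_\alpha f,v)_{\Hh}=(f,v)_{\Hh}$ for all $v\in\DcE$. For (iv)$\Rightarrow$(iii) I fix $f$ with $0\le f\le1$, set $g:=\alpha G_\alpha f$, and test against $v:=g-g^+\wedge1$, which decomposes pointwise into the disjoint-support parts $(g-1)^+$ and $-g^-$. The resolvent identity gives $\cE(g,v)=\alpha(f-g,v)_{\Hh}$, and a pointwise sign analysis on $\{g>1\}$ and $\{g<0\}$ (using $0\le f\le1$) shows $(f-g,v)_{\Hh}\le0$, hence $\cE(g,v)\le0$. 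On the other hand, the contraction hypothesis (iv) applied to $u=g$ yields $\cE(g+g^+\wedge1,v)\ge0$, that is $\cE(g^+\wedge1,v)\ge-\cE(g,v)$. Squeezing these two bounds together with the non-negativity of the symmetric part $\widetilde{\cE}$ and the weak sector control of the antisymmetric cross terms forces $v=0$, i.e. $0\le g\le1$, which is precisely sub-Markovianity of $\alpha G_\alpha$. The converse (iii)$\Rightarrow$(iv) runs the same computation in reverse, reading the sign of $\cE(g+g^+\wedge1,v)$ off the already-known order bounds $0\le g\le1$.

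Finally, for the adjoint statements I would invoke the last part of Theorem \ref{Th:AssociatedSemigroup}: the adjoint resolvent $(\hat G_\alpha)_{\alpha>0}$ is exactly the resolvent associated with the co-form $\cE'(u,v):=\cE(v,u)$, which is again a coercive closed form with the \emph{same} symmetric part $\widetilde{\cE}$ and the same weak sector condition. Applying the equivalences (i)--(iv) to $\cE'$ yields the analogous equivalences for $(\hat P_t)$, $\hat A$ and $(\hat G_\alpha)$, with the arguments of $\cE$ interchanged in the form condition (iv). The concluding claim is then immediate: requiring sub-Markovianity of \emph{both} $(P_t)$ and $(\hat P_t)$ is, via (i)$\iff$(iv) applied to $\cE$ and to $\cE'$, equivalent to \emph{both} contraction inequalities in \eqref{eq:SubMContrBilinearformNonSymm}, which is exactly Definition \ref{Def:NonSymmDirichlet} of a non-symmetric Dirichlet form. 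The main obstacle throughout is the squeezing step in (iv)$\Rightarrow$(iii): the disjoint-support decomposition and the two opposing sign bounds must be combined delicately, because the antisymmetric part of $\cE$ contributes cross terms absent from the classical symmetric theory (where one argues directly from $\widetilde{\cE}(u^+\wedge1,u^+\wedge1)\le\widetilde{\cE}(u,u)$), and these must be absorbed using the weak sector condition of the coercive closed form.
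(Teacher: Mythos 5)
Your sketch is measured against the paper's actual proof, which is a citation to \cite[Proposition 4.3 and Theorem 4.4]{RoecknerMa}; most of what you write faithfully reconstructs that source. The reduction (i)$\iff$(iii) via the Laplace representation and the Yosida exponential formula is correct, as is the reduction of the adjoint statements to the co-form $\cE'(u,v):=\cE(v,u)$ and the final assembly of both $\frac12$-Dirichlet properties into Definition \ref{Def:NonSymmDirichlet} via \eqref{eq:SubMContrBilinearformNonSymm}. Your squeeze for (iv)$\Rightarrow$(iii) is essentially the Ma--R\"ockner argument and does close, with two corrections. First, the pointwise analysis gives more than the stated $(f-g,v)_{\Hh}\le 0$: on $\{g>1\}$ and $\{g<0\}$ one has $(f-g)v\le -v^2$ pointwise, so the resolvent identity of Theorem \ref{Th:AssociatedSemigroup} yields $\cE(g,v)=\alpha(f-g,v)_{\Hh}\le-\alpha\|v\|_{\Hh}^2$; this quantitative strictness is what actually forces $v=0$, since then $\widetilde{\cE}(v,v)=\cE(v,v)\ge 0$ gives $\cE(w,v)\le\cE(g,v)\le-\alpha\|v\|^2_{\Hh}$ while hypothesis (iv) gives $\cE(w,v)\ge-\cE(g,v)\ge\alpha\|v\|^2_{\Hh}$, where $w:=g^+\wedge 1$. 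Second, the weak sector condition \eqref{eq:weaksector} plays no role in this squeeze, contrary to your closing remark; non-negativity of the symmetric part together with the $\alpha\|v\|^2_{\Hh}$ term suffices.

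The genuine gap is the converse (iii)$\Rightarrow$(iv), which you dismiss as ``the same computation in reverse.'' It cannot be run in reverse: statement (iv) quantifies over \emph{all} $u\in\DcE$ and includes the membership claim $u^+\wedge 1\in\DcE$, which your forward computation took as a hypothesis (you needed $w\in\DcE$ to use it as an entry of the form) and which no reversal can produce; moreover, for general $u\in\DcE$ there are no ``already-known order bounds $0\le g\le 1$'' --- $u$ need not lie in the range of the resolvent applied to the order interval, and need not be bounded at all. The necessity half of \cite[Theorem 4.4]{RoecknerMa} instead passes through the approximating forms $\cE^{(\beta)}(u,v):=\beta(u-\beta G_{\beta}u,v)_{\Hh}$: sub-Markovianity of $\beta G_{\beta}$ yields $\cE^{(\beta)}(u+w,u-w)\ge 0$ for each $\beta$ by a pointwise inequality, this gives $\sup_{\beta}\widetilde{\cE}^{(\beta)}(w,w)<\infty$, whence $w\in\DcE$ by a weak-compactness (Banach--Saks) argument in the Hilbert space $(\DcE,\widetilde{\cE}_1^{1/2})$ of Definition \ref{Def:SymmetricClosedForm}, and only then can one pass to the limit $\cE^{(\beta)}\to\cE$ --- and it is precisely in controlling this limit for the non-symmetric form that the weak sector condition genuinely enters (the one place you invoked it is the one place it is not needed). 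Without this approximation step your chain of equivalences is broken at exactly the implication the paper relies on in the proof of Theorem \ref{Th:SABRDirichletForm}, where sub-Markovianity of the associated semigroups is used to deduce the contraction properties of the SABR form. The remaining item (ii)$\iff$(iii) is only glossed, but the standard resolvent/generator translation you indicate does work.
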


\begin{proof}
See \cite[Section 4 Proposition 4.3 and Theorem 4.4]{RoecknerMa}. 
\end{proof}

\section{Further Proofs}\label{Sec:Proofs}

\begin{proof}[Proof of Lemma \ref{Lem:ErrorThetaScheme}]
The statement of the Lemma is by the decomposition \eqref{eq:ErrorDecompResiduals} essentially a consequence of the fact that the errors $\xi$ satisfy the same PDE as the functions $u$: The variational formulation
implies
\begin{equation}\label{eq:GenVarImplies}
\begin{array}{ll} 
(\dot{u}^{m+\theta},v)_{\V \times \V^*}+a(u^{m+\theta},v)=(g^{m+\theta},v)_{\V \times \V^*} \quad \forall v\in {\V}.
\end{array}
\end{equation}
Using the definition \eqref{eq:ErrorSplitting} of $\xi$, we rewrite \eqref{eq:ErrorThetaScheme} in the form:
\begin{equation*}\label{eq:ErrorThetaScheme2}
\begin{array}{lr} 
\tfrac{1}{k}(\xi_L^{m+1}-\xi_L^{m},v_L)_{\V \times \V^*}+a(\theta \xi_L^{m+1}+(1-\theta)\xi_L^{m},v_L)=&\\
\tfrac{1}{k}((P_Lu^{m+1}-u_L^{m+1})-(P_Lu^{m}-u_L^{m}),v_L)_{\V \times \V^*}+a(P_Lu^{m+\theta}+u_L^{m+\theta},v_L)=&\\
\left(\frac{P_Lu^{m+1}-P_Lu^{m}}{k},v_L\right)_{\V \times \V^*}+a(P_Lu^{m+\theta},v_L)-\left(\frac{1}{k}(u_L^{m+1}-u_L^{m},v_L)_{\V \times \V^*}-a(u_L^{m+\theta},v_L)\right)&
\end{array}
\end{equation*}
where we used $u^{m+\theta}:= \theta u^{m+1}+(1-\theta)u^{m}$ and the linearity of the projector: $$\theta P_Lu^{m+1}-(1-\theta)P_Lu_L^{m+1}=P_Lu^{m+\theta}.$$
Furthermore, by the $\theta$-scheme \eqref{eq:ThetaScheme} for $u$, and by \eqref{eq:GenVarImplies}
\begin{equation*}\label{eq:ErrorThetaScheme3}
\begin{array}{ll} 
\left(\frac{P_Lu^{m+1}-P_Lu^{m}}{k},v_L\right)_{\V \times \V^*}+a(P_Lu^{m+\theta},v_L)-\left(\left(\frac{u_L^{m+1}-u_L^{m}}{k},v_L\right)_{\V \times \V^*}-a(u_L^{m+\theta},v_L)\right)&\\
=\left(\frac{P_Lu^{m+1}-P_Lu^{m}}{k},v_L\right)_{\V \times \V^*}+a(P_Lu^{m+\theta},v_L)-(g^{m+\theta},v_L)_{\V \times \V^*}&\\
=\left(\frac{P_Lu^{m+1}-P_Lu^{m}}{k},v_L\right)_{\V \times \V^*}+a(P_Lu^{m+\theta},v_L)-a(u^{m+\theta},v_L)-(\dot{u}^{m+\theta},v_L)_{\V \times \V^*}&\\
=\left(\tfrac{P_Lu^{m+1}-P_Lu^{m}}{k}-\tfrac{u^{m+1}-u^{m}}{k},v_L\right)_{\V \times \V^*}+a\left(P_Lu^{m+\theta}-u^{m+\theta},v_L\right)+\left(\tfrac{u^{m+1}-u^{m}}{k}-\dot{u}^{m+\theta},v_L\right)_{\V \times \V^*}&\\
=(r_2,v_L)_{\V \times \V^*}+(r_3,v_L)_{\V \times \V^*}+(r_1,v_L)_{\V \times \V^*}.&
\end{array}
\end{equation*}
\end{proof}
\begin{proof}[Proof of Lemma \ref{Lem:ErrorResidualEstimates}]
We adapt \cite[Section 3.6.2]{ReichmannComputationalMethods} and \cite[Lemma 5.3]{PetersdorffSchwab} to the situation at hand and confirm that the estimates of \cite[Lemma 5.3]{PetersdorffSchwab} carry over to the weighted case.
Analogously to the classical (unweighted) case, the statement of the Lemma follows from $||r^{m}||_{*}^2\leq||r_1^{m}||_{*}^2+||r_2^{m}||_{*}^2+||r_3^{m}||_{*}^2$ and the corresponding norm estimates for the decomposition \eqref{eq:ErrorDecompResiduals}.
The estimate of the residual $r_1$ is
\begin{equation}\label{eq:ErrorResidualr1}
\begin{array}{lrl}
||r_1||_*&=||\tfrac{1}{k}(u^{m+1}-u^m)-\dot{u}^{m+\theta}||_*&\leq\tfrac{1}{k}\left(\int_{t_m}^{t_{m+1}}|s-(1-\theta)t_{m+1}-\theta t_m|\ ||\ddot{u}||_{*}ds\right)\\
&&\leq \tfrac{C_{\theta}}{k^{1/2}}\left(\int_{t_m}^{t_{m+1}}||\ddot{u}||_{*}^2ds\right)^{1/2}. 
\end{array}
\end{equation}
In case $\theta=\frac{1}{2}$, partial integration yields the refined estimate
\begin{equation}\label{eq:ErrorResidualr1theta05}
\begin{array}{lrl}
||r_1||_*&=||\tfrac{1}{k}(u^{m+1}-u^m)-\dot{u}^{m+\theta}||_*&\leq\tfrac{1}{2k}\left(\int_{t_m}^{t_{m+1}}|(t_{m+1}-s)(t_m-s)|\ ||\dddot{u}||_{*}ds\right)\\
&&\leq \tfrac{C}{k^{3/2}}\left(\int_{t_m}^{t_{m+1}}||\dddot{u}||_{*}^2ds\right)^{1/2}. 
\end{array}
\end{equation}
The norm of the residual $r_2$ is bounded by 
\begin{equation}\label{eq:ErrorResidualr2Estim}
\begin{array}{ll}
||r_2||_*&\leq 2^{-L}\tfrac{C}{k^{1/2}}\left(\int_{t_m}^{t_{m+1}}||\dot{u}||_{\V}^2ds\right)^{1/2}.
\end{array}
\end{equation}
The bound \eqref{eq:ErrorResidualr2Estim} follows from \eqref{eq:DualNorm} and from the estimate
\begin{equation}\label{eq:ErrorResidualr2}
\begin{array}{ll}
|(r_2,v_L)_{\V\times\V^*}|&=|(\tfrac{1}{k}(P_Lu^{m+1}-P_Lu^m)+\tfrac{1}{k}(u^{m+1}-u^m),v_L)_{\V\times\V^*}|\\
&\leq C ||\tfrac{1}{k}(P_Lu^{m+1}-P_Lu^m)+\tfrac{1}{k}(u^{m+1}-u^m)||_*||v_L||_a\\
&\leq \tfrac{C}{k} ||(I-P_L)\int_{t_m}^{t_{m+1}}\dot{u}(s)\ ds||_*||v_L||_a\\
&\leq \tfrac{C}{k^{1/2}}\left(\int_{t_m}^{t_{m+1}}||\dot{u}-P_L\dot{u}||_{\Hh}^2ds\right)^{1/2}\ ||v_L||_a\\
&\stackrel{\eqref{eq:ApproximationPropertyWeighted}}{\leq} \tfrac{C}{k^{1/2}}(2^{-L})^{c_{\bold{a}}}\left(\int_{t_m}^{t_{m+1}}||\dot{u}||_{\Hh^1_j}^2ds\right)^{1/2}\ ||v_L||_a, 
\end{array}
\end{equation}
where the last step follows from the approximation property \eqref{ApproximationPropertyInteger} resp. \eqref{ApproximationPropertyIntegerBetter} and the estimate \eqref{eq:EstimateVvsWeightedH1}.
The norm of the residual $r_3$ allows for the upper bound 
\begin{equation}\label{eq:ErrorResidualr3Estim}
\begin{array}{ll}
||r_3||_*&\leq C (2^{-L})^{c_{\bold{a}}}\ ||u^{m+\theta}||_{\mathcal{H}^{2}_{j}}.
\end{array}
\end{equation}
The estimate \eqref{eq:ErrorResidualr3Estim} follows from \eqref{eq:DualNorm} and from 
\begin{equation}\label{eq:ErrorResidualr3}
\begin{array}{ll}
|(r_3,v_L)_{\V\times\V^*}|&=|a(P_Lu^{m+\theta}-u^{m+\theta},v_L)|\\
&\leq||P_Lu^{m+\theta}-u^{m+\theta}||^2_a||v_L||^2_a\stackrel{\eqref{eq:ApproximationPropertyWeighted}}{\leq} C (2^{-L})^{c_{\bold{a}}}\ ||u^{m+\theta}||^2_{\mathcal{H}^{2}_{j}}||v_L||^2_a.
\end{array}
\end{equation}
The second step follows from a simple polarisation argument
\begin{equation*}\begin{array}{ll}
\small |(u,v)_{\V\times\V^*}|^2
\leq\left|\tfrac{1}{4}\left(a(u+v,u+v)-a(u-v,u-v)\right)\right|^2
\leq\left|\tfrac{1}{2}\left(a(u,u)+a(v,v)\right)\right|^2
\leq\left|\left(a(u,u)a(v,v)\right)\right|
\end{array}\end{equation*}
for $u=P_Lu^{m+\theta}-u^{m+\theta}$ and $v=v_L$, and in the last step we 
used $||\cdot||^2_a\leq||\cdot||^2_{\V}$ (which holds by continuity of the bilinear form cf. Lemma \ref{Th1:WellPosednessSABR}) and the approximation property \eqref{ApproximationPropertyInteger} resp. \eqref{ApproximationPropertyIntegerBetter}.\\
\end{proof}
\subsection{Approximation estimates for CEV: Alternative}\label{Sec:ProofsApproxAlternative}
If we do not assume any additional integrability requirements on our solution up to first order derivatives, we stated in Remark \ref{Prop:ApproximationPropertyIntegerWeights} that we obtain such approximation estimates where the order of approximation depends on the parameter $\beta$. The proof of the estimates
\eqref{ApproximationPropertyIntegerBetter} in Remark \ref{Prop:ApproximationPropertyIntegerWeights} is similar to that of Proposition \ref{Prop:ApproximationPropertyInteger} and is included here:
\begin{proof}[Proof of Remark \ref{Prop:ApproximationPropertyIntegerWeights}]
For $j=1$ and $\mu\in [\max\{-1,-2\beta\},1-2\beta]$ the inequalities in \eqref{eq:CalcProofApproxprop1} become
\begin{align}\label{eq:CalcProofApproxpropWeights1}\begin{split}
&||(u-P_Lu)'||_{L^2(I,x^{\mu/2+\beta})}^2
=\tilde C\sum_{l=L+1}^{\infty}2^{2l}\sum_{k=0}^{2^l}(2^{-l}k)^{\mu+2\beta}|u_k^l|^2\\
&\geq \tilde C 2^{2L(1-\beta)}\sum_{l=0}^{\infty}\sum_{k=0
}^{2^l}2^{2l(\beta-(\mu/2+\beta))}(k)^{\mu+2\beta}|u_k^l|^2\\
&\geq \tilde C 2^{2L(1-\beta)}\sum_{l=0}^{\infty}\sum_{k=0
}^{2^l}(2^{-l}k)^{\mu}|u_k^l|^2
= \tilde C 2^{2L(1-\beta)}||u-P_Lu||_{L^2(I,x^{\mu/2})}^2. 
\end{split}
\end{align}
Similarly, \eqref{eq:CalcProofApproxpropWeights1} after replacing $(u-P_Lu)'$ by $(u-P_Lu)''$ and $(u-P_Lu)$ by $(u-P_Lu)'$ reads
\begin{align}\label{eq:CalcProofApproxpropWeights2}\begin{split}
&||(u-P_Lu)''||_{L^2(I,x^{\mu/2+2\beta})}^2
=\tilde C\sum_{l=L+1}^{\infty}2^{4l}\sum_{k=0}^{2^l}(2^{-l}k)^{\mu+4\beta}|u_k^l|^2\\
&\geq \tilde C 2^{2L(1-\beta)}\sum_{l=0}^{\infty}2^{2l}\sum_{k=0
}^{2^l}2^{2l(\beta-(\mu/2+2\beta))}(k)^{\mu+2\beta}|u_k^l|^2\\
&\geq \tilde C 2^{2L(1-\beta)} \sum_{l=L+1}^{\infty}2^{2l}\sum_{k=0}^{2^l}(2^{-l}k)^{\mu+2\beta}|u_k^l|^2
= \tilde C 2^{2L(1-\beta)}||(u-P_Lu)'||_{L^2(I,x^{\mu/2+\beta})}^2. 
\end{split}
\end{align}
Finally, combining \eqref{eq:CalcProofApproxpropWeights1} and \eqref{eq:CalcProofApproxpropWeights2} yields the last estimate in \eqref{ApproximationPropertyIntegerBetter}.
\end{proof}

\end{document}